\pgfmathsetmacro{\radius}{0.75}
\pgfmathsetmacro{\thetavec}{0}
\pgfmathsetmacro{\phivec}{0}
\newlist{properties}{enumerate}{10}
\crefname{propertiesi}{Property}{Properties}
\Crefname{propertiesi}{Property}{Properties}
\newlist{tests}{enumerate}{10}
\crefname{testsi}{}{}
\Crefname{testsi}{}{}
\crefname{example}{Example}{Examples}
\Crefname{example}{Example}{Examples}
\crefname{assumption}{Assumption}{Assumptions}
\Crefname{assumption}{Assumption}{Assumptions}
\theoremstyle{plain}
\newtheorem{theorem}{Theorem}[section]
\newtheorem{lemma}[theorem]{Lemma}
\newtheorem{proposition}[theorem]{Proposition}
\newtheorem{corollary}[theorem]{Corollary}
\theoremstyle{definition}
\newtheorem{definition}[theorem]{Definition}
\newtheorem{example}{Example}
\newtheorem{assumption}{Assumption}
\theoremstyle{remark}
\def\bfH{\mathbf{H}}
\def\bfM{\mathbf{M}}
\def\bfR{\mathbf{R}}
\def\bfS{\mathbf{S}}
\def\bfW{\mathbf{W}}
\def\bfX{\mathbf{X}}
\def\bfY{\mathbf{Y}}
\def\bfZ{\mathbf{Z}}
\def\bfe{\mathbf{e}}
\def\calA{\mathcal{A}}
\def\calF{\mathcal{F}}
\def\bbP{\mathbb{P}}
\def\tR{\tilde{R}}
\def\tx{\tilde{x}}
\def\tY{\tilde{Y}}
\def\ty{\tilde{y}}
\def\I{\mathbf{I}}
\def\R{\mathbb{R}}
\def\E{\mathbb{E}}
\def\e{\mathbf{e}}
\def\lip{\langle}
\def\rip{\rangle}
\def\gaussian{\mathsf{N}}
\newcommand{\argdot}{{\,\vcenter{\hbox{\tiny$\bullet$}}\,}}
\def\onevec{1}
\def\indi{\mathbbm{1}} %
\def\condind{{\;\perp\!\!\!\perp\;}} %
\def\equdist{\stackrel{\text{\rm\tiny d}}{=}} 
\def\nequdist{\stackrel{\text{\rm\tiny d}}{\neq}} %
\def\todist{\stackrel{\text{\rm\tiny d}}{\to}} %
\def\equas{\stackrel{\text{\rm\tiny a.s.}}{=}} 
\def\toas{\stackrel{\text{\rm\tiny a.s.}}{\to}} 
\def\simiid{\stackrel{\text{\rm\tiny iid}}{\sim}}
\def\define{\coloneqq}
\def\eps{\varepsilon}
\def\grp{\mathbf{G}}
\def\grpAct{\Phi}
\def\orbit{O}
\def\maxInv{M}
\def\tM{\tilde{\maxInv}}
\def\haar{\lambda}
\def\rhaar{\tilde{\haar}}
\newcommand{\orbSel}{\rho}
\newcommand{\orbRep}{r}
\newcommand{\orbRepRand}{R}
\def\repInv{{\psi^{\circ}}}
\def\repInvRand{\Psi}
\def\repInvRandCond{\psi}
\def\repInvKern{P_{\repInvRand|X}}
\def\id{\textrm{id}}
\newcommand{\invProbs}[1][\bfX]{\mathcal{P}^{\circ}(#1)}
\newcommand{\probs}[1][\bfX]{\mathcal{P}(#1)}
\def\G{\Gamma}
\def\g{\gamma}
\def\GRand{\mathsf{\Gamma}}
\def\GKern{P_{\G|R}}
\def\tYRand{\mathsf{\tY}}
\def\tYKern{P_{\tY|R}}
\def\orbRepKern{P_{R}}
\def\XorbRepKern{P_{X|R}}
\newcommand{\SO}[1]{\mathrm{SO}(#1)}
\newcommand{\Sym}[1]{\mathbb{S}_{#1}}
\newcommand{\Trans}[1]{\mathbb{T}_{#1}}
\def\lorentz{\mathrm{SO}^+(1,3)}
\def\test{\phi}
\def\testStat{T}
\def\hatTestStat{\widehat{\testStat}}
\def\nullHyp{H_0}
\def\altHyp{H_1}
\def\sig{\alpha}
\def\dirac{\delta}
\def\bandwidth{h_m}
\def\kde{k_{\bandwidth}}
\def\kparam{\sigma}
\def\k{k_{\kparam}}
\def\hilbert{\mathcal{H}}
\newcommand{\kme}[1]{\mu_{#1}}
\def\MMD{\mathrm{MMD}_{\kparam}}
\def\supk{\kappa}
\def\nusup{\supk_\lor}
\def\EMMD{\Delta_n}
\def\supk{\kappa}
\def\EH{\xi_n}
\def\EHell{\xi_{n,\ell}}
\def\hatDelta{\hat{\Delta}_n}
\def\barDelta{\bar{\Delta}_n}
\def\supDelta{\Delta^\lor_n}
\def\ZRand{Z}
\def\Pnull{P^0}
\def\nBS{\textsc{BS}\xspace}
\def\nRN{\textsc{RN}\xspace}
\def\nFS{\textsc{FS}\xspace}
\def\nSK{\textsc{SK}\xspace}
\def\nBSFS{\nBS-\nFS}
\def\nBSSK{\nBS-\nSK}
\def\nRNFS{\nRN-\nFS}
\def\nRNSK{\nRN-\nSK}
\begin{document}
\begin{frontmatter}
\title{Randomization Tests for \\ Conditional Group Symmetry}
\runtitle{Randomization Tests for Conditional Group Symmetry}

\begin{aug}
\author[A]{\fnms{Kenny}~\snm{Chiu}\ead[label=e1]{kenny.chiu@stat.ubc.ca}},
\author[A]{\fnms{Alex}~\snm{Sharp}\ead[label=e2]{alex.sharp@stat.ubc.ca}}
\and
\author[A]{\fnms{Benjamin}~\snm{Bloem-Reddy}\ead[label=e3]{benbr@stat.ubc.ca}}
\address[A]{Department of Statistics,
The University of British Columbia\printead[presep={,\ }]{e1,e2,e3}}
\runauthor{K.\ Chiu, A.\ Sharp, and B.\ Bloem-Reddy}
\end{aug}

\begin{abstract}
Symmetry plays a central role in the sciences, machine learning, and statistics. While statistical tests for the presence of distributional invariance with respect to groups have a long history, tests for conditional symmetry in the form of equivariance or conditional invariance are absent from the literature. This work initiates the study of nonparametric randomization tests for symmetry (invariance or equivariance) of a conditional distribution under the action of a specified locally compact group. We develop a general framework for randomization tests with finite-sample Type I error control and, using kernel methods, implement tests with finite-sample power lower bounds. We also describe and implement approximate versions of the tests, which are asymptotically consistent. We study their properties empirically using synthetic examples and applications to testing for symmetry in two problems from high-energy particle physics.
\end{abstract}

\begin{keyword}[class=MSC]
\kwd[Primary ]{62G10}
\kwd[; secondary ]{62H15}
\kwd{62H05}
\kwd{62P35}
\end{keyword}

\begin{keyword}
\kwd{equivariance}
\kwd{invariance}
\kwd{symmetry}
\kwd{hypothesis test}
\kwd{randomization test}
\kwd{conditional independence}
\end{keyword}

\end{frontmatter}
\tableofcontents

\section{\label{sec:intro}Introduction}

Symmetry plays a central role in science and engineering, and appears in numerous statistical problems, both classical \citep{lehmann_theory_1998,Lehmann:Romano:2005} and modern \citep[e.g.,][]{Cohen:2019,Chen:2020,Huang:2022aa,Bronstein:2021aa}. There is an extensive literature on hypothesis tests for symmetry in the form of distributional invariance under a group of transformations, dating back at least to the work of \citet{Hoeffding_1952}, which generalized older ideas based on testing permutation-invariance \citep{Fisher:1935,Pitman:1937}. More recent developments in the area pertain to randomized versions of tests for invariance \citep{Romano:1988,Romano:1989,hemerik:goeman:2017:fdr,Hemerik2018,Dobriban_2022,Ramdas:2023,Koning:2022} or approximate invariance \citep{canay2017}; see \cite{Ritzwoller:2024} for a recent review. One takeaway from that line of work is that group-based randomization tests have finite-sample Type I error control \citep{Hoeffding_1952}, power against alternatives \citep{Romano:1989,Dobriban_2022}, and achieve various notions of optimality and robustness under appropriate conditions \citep{Romano:1990,Kashlak:2022,Kim:2022:b,Koning:2022}. 

The literature on group-based randomization tests deals with the invariance of an unknown probability distribution $P_X$ from which samples $X_1,\dotsc,X_n$ are obtained. 
In this paper, we initiate the study of hypothesis tests for the symmetry of a \emph{conditional} distribution, $P_{Y|X}$, to be tested from samples $(X_1,Y_1),\dotsc,$ $(X_n,Y_n)$.  
The primary motivation for conditional tests is the central role played by symmetry in modern science \citep{Rosen_1995,Gross_1996} and the increasingly data-driven nature of scientific discovery, particularly in settings in which the relationships between multiple objects are a key feature \citep{Kofler:2012,atlas:2017:search,Collins2018anomaly,Cristadoro:2018,Chmiela:2018,Karagiorgi_2022,Birman2022,yim2023SE,watson2023novo}.
A secondary motivation is the prominence of symmetry in certain prediction problems from machine learning;  \citet{Bronstein:2021aa} provide a recent review. 

\begin{figure}[t]
    \centering
    \begin{tikzpicture}
        \node (fig) at (0,0) {\includegraphics[trim={27 26 0 0},clip,width=\textwidth]{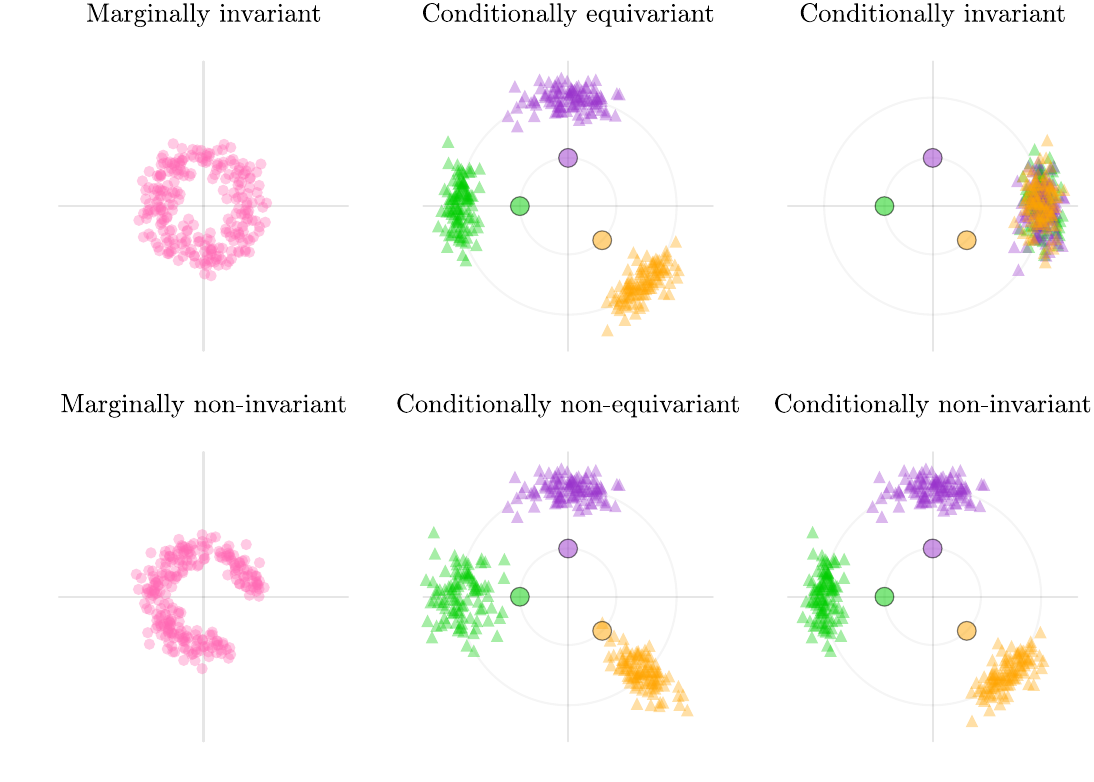}};
        \draw[opacity=0.4] (-2.8,5) -- (-2.8,-5);
        \draw[opacity=0.4] (2.55,5) -- (2.55,-5);
    \end{tikzpicture}
    \caption{Column~1: a random sample from a distribution on $\R^2$. A distribution is invariant with respect to rotations about the origin if rotating the distribution does not change it. Columns~2 and 3: conditional distributions $P_{Y|X}(\argdot|x)$ for $Y\in\R^2$ (triangle) given three different values of $x\in\R^2$ (circle), matched by color. A conditional distribution is rotationally-equivariant if rotations of $x$ transform the distribution of $Y| X=x$ with a corresponding rotation, and conditionally invariant if rotations of $x$ do not change the distribution of $Y| X=x$.}
    \label{fig:symmetry}
\end{figure}

\cref{fig:symmetry} illustrates the differences between marginal and conditional symmetries with a simple example.
Whereas symmetry of a probability distribution manifests as invariance under a group of transformations, a conditional distribution may exhibit a form of symmetry that only appears in relationships between variables, known as \emph{equivariance}: a transformation of $X$ induces a corresponding transformation of $Y$.
Under certain conditions, detecting equivariance can be reduced to a problem for which existing group-invariance tests are applicable, but those conditions are not satisfied in many settings.
Tests for (marginal) distributional group symmetry are based on the fact that an invariant distribution factors into the independent product of the group's ``uniform distribution'' and a distribution over group-induced equivalence classes known as orbits---a characterization that follows from the fact that marginal invariance can only hold with respect to compact groups. The independence between group transformation and orbit is key to the computational and statistical simplicity of tests for marginal invariance. 
In contrast, conditional symmetries do not have such a convenient characterization, in part because they may hold with respect to a non-compact group, for which a uniform distribution does not exist.
As a consequence, designing tests for conditional symmetry requires overcoming new technical and statistical challenges.

\paragraph{Contributions} 
The present work initiates the study of hypothesis testing for conditional group symmetry and formulates a general framework for such tests based on a single independent and identically distributed (i.i.d.) sample. 
In order to do so, we 
develop a number of novel characterizations of the invariance and equivariance of a conditional distribution under the action of a locally compact group.
The tests we develop can be used to detect the presence or absence of a particular conditional symmetry in data, with that symmetry specified by hypothesis.
The tests also may be used as model-checking criteria for models meant to exhibit a symmetry. 

In addition to a general formulation of symmetry-based conditional randomization tests for which we prove Type I error control properties, we provide specific kernel-based instantiations of our tests, which allow the tests to be used on any data type for which a kernel can be defined. The testing procedure approximates the conditional distribution of the estimated maximum mean discrepancy between two samples from a suitable null hypothesis distribution, conditioned on certain symmetry-induced statistics of the observations.  Using this test, we derive a finite-sample lower bound on the test's power against non-symmetric alternatives, which also implies that the test is asymptotically consistent as the number of observations $n\to\infty$. 
We explore the properties of our tests empirically on synthetic data and demonstrate their application to two problems from high-energy particle physics.

\paragraph{Related work} To our knowledge, general-purpose tests for conditional symmetry have not appeared in the existing literature. 
The focus of this paper is on tests for conditional {group} symmetry, which should not be confused with the literature that tests conditional distributions for symmetry about zero (or some other value) for every value of the conditioning variable \citep[e.g.,][]{Zheng:1998,Bai:2001,Su:2006}. 
In contrast to the present work, those tests do not consider symmetry in the relationship between the variables and are applicable to a disjoint set of problems from those considered here. 
All mentions of conditional symmetry in this work will refer to the group invariance or equivariance of a conditional distribution, as we define in \cref{sec:background:symmetry}.

Some recent methods in machine learning aim to estimate symmetry groups from data \citep{Krippendorf_2020,zhou2021metalearning,Desai:2021,dehmamy2021automatic,Yang:2023aa}.
They do not address the corresponding inferential questions, and the present work can be viewed as a complement to those estimation methods. 

The only work that we are aware of that addresses any notion of testing for conditional symmetry is that of \citet{Christie:2023}, who proposed tests for the invariance of the conditional expectation $f(x) = \E[Y|X=x]$, $f \colon \bfX \to \R$, under the action of a specified group.\footnote{We note that the primary aim of \citet{Christie:2023} is to estimate the \emph{largest} group under which $f$ is invariant, which amounts to performing a collection of tests over a subgroup lattice of some candidate largest group.
In principle, our tests could be substituted into their procedure, though we do not address that problem here.}
Testing for the invariance of the conditional expectation is not the same as testing the invariance of the conditional distribution, and those authors' focus on the former objective leads them to formulate tests that rely on certain regularity properties of $f$ (e.g., smoothness), for which there are no distributional analogues. 
In contrast, the tests that we develop here are applicable to a different set of problems: testing for distributional symmetry, and with theoretical properties that do not rely on regularity of the conditional moments. 
Moreover, the statistical properties of the tests in \citep{Christie:2023} are sensitive to the choice of a number of user-specified parameters. In particular, a sampling distribution on the group is chosen by heuristic. In the tests we develop, the required sampling distribution over the group and other test parameters are estimated from data in such a way that preserves the validity of the test while improving power against alternatives. 

The methods we develop rely on group-theoretic inversion probability kernels, which have not been used widely in statistics or machine learning, likely owing to their relatively recent appearance in probability \citep{Kallenberg2011:skew}. 
Since their appearance in a preliminary version of this work \citep{chiu2023hypothesis}, inversion kernels have been used to solve problems involving symmetry in machine learning \citep{cornish2024stochastic,lawrence2025improving,zhang2025symdiff}.
Using inversion kernels, we characterize conditional symmetry through a particular conditional independence relation, generalizing a result of \citet{BloemReddy:2020} that holds in a much more limited setting. Those authors used the result to study equivariant neural networks and did not consider testing for conditional symmetry.
Deterministic inversions have been used more widely in the machine learning literature \citep[e.g.,][]{BloemReddy:2020,winter2022unsupervised,Kaba:2023}, and have counterparts in classical statistical problems involving groups \citep{Fraser:1961:fiducial:invariance,Eaton:1989,eaton:sudderth:1999:consitency}.

\paragraph{Outline} In \cref{sec:background}, we review the necessary basics of groups (additional technical aspects are contained in \cref{apx:proper:action}), give an overview of randomization tests for group invariance,
and introduce the framework for conditional symmetry along with some examples. \cref{sec:condind} is the main technical section, in which conditional symmetry is characterized via conditional independence and certain distributional identities. We propose a general framework for conditional randomization tests of symmetry based on those results in \cref{sec:rand}. In \cref{sec:kernel}, we describe specific nonparametric tests based on kernel maximum mean discrepancy, along with finite-sample lower bounds on power against non-symmetric alternatives. \cref{sec:experiments} contains experiments that apply our tests to synthetic data and to particle physics data.

\section{\label{sec:background}Background}

We first introduce some technical necessities.
Throughout, we use $\bfX$ to denote a topological space and $\bfS_{\bfX}$ its Borel $\sigma$-algebra, so that $(\bfX, \bfS_{\bfX})$ is a standard Borel measurable space.
Unless stated otherwise, we refer to $\bfX$ as a measurable space, and likewise for $\bfY$, $\bfM$, and $\bfR$ (distinct from $\R$, the space of real numbers).
Let $\probs$ denote the set of all probability measures on $\bfX$.
For $P\in\probs$, we write $X\sim P$ for a random variable $X$ sampled from $P$.  A vector or tuple of random variables is denoted by $X_{1:n} \coloneq (X_1,\dotsc,X_n)$. 
For any measurable function $f$ on $\bfX$, let $f_*P$ denote the pushforward measure with $f_{*}P(A) = P(f^{-1}(A))$ for all measurable sets $A \in \bfS_{\bfX}$.
We use $\dirac_x$ to denote the Dirac measure at a point $x$, and $\equdist$ to denote equality in distribution. Let $[n]$ denote the set $\{1,2,\dotsc,n\}$.

\subsection{Groups and group actions} \label{sec:background:groups}

This work focuses on symmetries that can be encoded in the mathematics of group theory, which comprise many of the symmetries of interest in modern statistical, computational, and scientific problems. We assume a basic familiarity with groups and group actions; see \cref{apx:group:background} for background. 
Thorough treatments of classical group theoretic applications in statistics, along with relevant mathematical details, can be found in \citep{Eaton:1989,WIjsman:1990}.

We denote a group by $\grp$ and its elements by $g$. For convenience, group composition is written multiplicatively, i.e, $g_1g_2=g_1\cdot g_2$. 
We assume that $\grp$ has a topology that is locally compact, second countable, and Hausdorff (lcscH), and which makes the group operations continuous.
We may then take $\bfS_{\grp}$ as the Borel $\sigma$-algebra, making $\grp$ a standard Borel space. 
When $\grp$ is lcscH, there exist left- and right-invariant $\sigma$-finite measures $\haar_{\grp}$ and $\rhaar_{\grp}$, known as \textit{left-} and \textit{right-Haar measures}, respectively, that are unique up to scaling~\citep[Ch.~2.2]{Folland:2016}.
For convenience, we omit the subscript and use $\haar$ to denote left-Haar measure.
If $\grp$ is compact, then $\haar = \rhaar$, and the unique normalized Haar measure acts as the uniform probability measure over the group.

A group $\grp$ acts measurably on a set $\bfX$ if the group action $(g,x) \mapsto gx$ is a measurable function. 
Throughout, all actions are assumed to be continuous (and therefore measurable), which we imply by saying that $\grp$ acts on $\bfX$. 
For a set $A\subseteq\bfX$, $gA=\{gx : x\in A\}$.
The \textit{$\grp$-orbit} of $x \in \bfX$ is the subset of all points reachable through an action in $\grp$, denoted $\orbit(x)=\{gx : g\in\grp\}$.
If $\bfX$ has only one orbit, then the action is \textit{transitive}. 
On each orbit $\orbit(x)$, we can choose an element $\orbRep_x\in\orbit(x)$ as the \textit{orbit representative}, so that $\orbRep_x=gx$ for some $g\in\grp$.
The element $\orbRep_x$ can be thought of as an ``origin'' on the orbit. 
For a particular choice of representatives, the subset of $\bfX$ consisting of each orbit's representative is denoted by $\bfR$. A generic element of $\bfR$ is denoted by $r$, whereas $r_x$ indicates the element of $\bfR$ on the orbit of $x$, i.e., $\bfR \cap \orbit(x) = \{r_x\}$.
A function $\orbSel \colon \bfX \to \bfR$ that maps elements of $\bfX$ onto their corresponding orbit representatives is called an \textit{orbit selector}; that is, $\orbSel(x)=\orbRep_x$.
Any orbit selector makes for a useful choice of a \textit{maximal invariant}, which more generally is a $\grp$-invariant function $\maxInv \colon \bfX \to \bfM$ that takes a different value on each orbit, so that $\maxInv(x) = \maxInv(x')$ if and only if $gx = x'$ for some $g \in \grp$. 
Maximal invariants are not unique; we use $\maxInv$ to denote a generic maximal invariant.
\Cref{fig:group} depicts these quantities for $\SO{d}$---the compact group of $d$-dimensional rotations about the origin---acting on $\R^d$, for $d=2$ and $d=3$. (\Cref{fig:group} also depicts \textit{representative inversions} $\repInv$, which are defined in \cref{sec:condind}.) 
The \textit{stabilizer subgroup} of $x\in\bfX$ is the subset of group actions that leave $x$ invariant, denoted $\grp_x = \{ g \in \grp : gx = x \}$.
The group action is \textit{free} if $\grp_x = \{\id\}$ for all $x \in \bfX$.

In the analysis of probabilistic aspects of group actions, measurability issues can arise without regularity conditions. The key regularity condition that we assume in this work is that the group action is \textit{proper}. Properness is a common assumption in statistical applications of group theory \citep[e.g.,][]{Eaton:1989,WIjsman:1990,McCormack2023} and is satisfied in many settings of interest. We discuss the implications of this assumption in \cref{apx:proper:action}.

\begin{figure}[t]
\centering
{
\hfill
\begin{tikzpicture}
\draw[gray] (-1,0) -- (0,0);
\draw[gray] (0,-1) -- (0,3);
\draw[ForestGreen,line width=1.5] (0,0) -- (4,0);
\node[label={[font=\large,text=ForestGreen]above:{$\bfR$}}] (orbreps) at (4,0) {};
\filldraw[gray] (0,0) circle (3pt);
\node[label={[font=\footnotesize,text=gray]below left:{$(0,0)$}}] (origin) at (0.05,0) {};
\draw[Periwinkle,line width=1.5] (2.291,-1) arc(-23.578:113.578:2.5);
\node[label={[font=\large,text=Periwinkle]below:{$\orbit(x)$}}] (orbit) at (-1,2.291) {};
\node[circle,draw=black,fill=Thistle,minimum size=10pt,label={[font=\large,text=Thistle]above right:{$x=\repInv(x)\orbRep_x$}}] (x) at (1.25,2.165) {};
\node[circle,draw=black,fill=BurntOrange,minimum size=10pt,label={[font=\large,text=BurntOrange]below right:{$\orbRep_x$}}] (rho) at (2.5,0) {};
\draw[-{Stealth[length=3mm]},dotted,OrangeRed,line width=1.5] (2.730,0.335) arc(7:53:2.75);
\node[label={[font=\large,text=OrangeRed]below right:{$\repInv(x)$}}] (orbsel) at (2.45,1.85) {};
\draw[VioletRed,{Bar[width=3mm]}-{Bar[width=3mm]},dashed] (0.2,-0.3) -- (2.3,-0.3);
\node[label={[font=\large,text=VioletRed]below:{$\maxInv(x)$}}] (maxinv) at (1.25,-0.3) {};
\end{tikzpicture}
\hfill
\begin{tikzpicture}[scale=2.5,tdplot_main_coords]
\draw[gray,dotted,opacity=0.75] (-0.5,0,0) -- (0,0,0); %
\draw[gray,dotted,opacity=0.75] (0,-0.5,0) -- (0,0,0); %
\draw[gray,dotted,opacity=0.75] (0,0,-0.5) -- (0,0,0); %
\draw[gray,opacity=0.25] (0,0,0) -- (0.75,0,0); %
\draw[gray,opacity=0.25] (0,0,0) -- (0,0.75,0); %
\draw[gray,opacity=0.25] (0,0,0) -- (0,0,0.75); %
\draw[ForestGreen,line width=1.5,opacity=0.25] (0,0,0) -- (0,0.75,0);;

\tdplotsetthetaplanecoords{\phivec}
\draw[dashed,Periwinkle] (0.130,-0.739,0) arc (-80:100:\radius);
\draw[dashed,Periwinkle,opacity=0.25] (-0.130,0.739,0) arc (100:280:\radius);
\shade[ball color=Periwinkle!5!white,opacity=0.1,draw=Periwinkle,line width=1.5,draw opacity=1] (0.75cm,0) arc (360:0:0.75cm and 0.75cm);
\node[label={[font=\large,text=Periwinkle]above left:{$\orbit(x)$}}] (orbit) at (0,-0.5,0.5) {};

\draw[gray] (0,0.75,0) -- (0,1,0); %
\draw[gray] (0.75,0,0) -- (1,0,0); %
\draw[gray] (0,0,0.75) -- (0,0,1); %

\node[label={[font=\large,text=ForestGreen]below:{$\bfR$}}] (orbreps) at (0,1,0) {};
\draw[ForestGreen,line width=1.5] (0,0.75,0) -- (0,1,0);
\filldraw[gray] (0,0,0) circle (1pt);
\node[label={[font=\footnotesize,text=gray]below left:{$(0,0,0)$}}] (origin) at (-0.125,-0.075,0) {};

\node[circle,draw=black,fill=BurntOrange,minimum size=4pt,label={[font=\large,text=BurntOrange]above right:{$\orbRep_x$}}] (rho) at (0,0.78,0) {};
\node[circle,draw=black,fill=Thistle,minimum size=4pt,label={[font=\large,text=Thistle]above:{$x$}}] (x) at (0.257,0.257,0.705) {};

\draw[-{Stealth[length=3mm]},dotted,OrangeRed,line width=1.5] (0,\radius,0.14) arc (110:160:1.15);
\node[label={[font=\large,text=OrangeRed]below:{$g_1$}}] (orbsel) at (0,0.55,0.45) {};
\draw[-{Stealth[length=3mm]},dotted,OrangeRed,line width=1.5] (0,0.6,-0.05) arc (-20:-70:1.15);
\node[label={[font=\large,text=OrangeRed]below:{$g_2$}}] (orbsel) at (0,0.3,0.05) {};
\end{tikzpicture}
\hfill
}
\caption{Left: a graphical depiction of a point $x\in\R^2$, its orbit $\orbit(x)=\{x'\in\R^2:\|x'\|_2=\maxInv(x)\}$, its orbit representative $\orbRep_x\in\bfR=\{(z,0): z\geq0\}$, and the role of the orbit selector $\orbSel$, maximal invariant $\maxInv$, and representative inversion $\repInv$ under the action of $\SO{2}$.
Right: a graphical depiction of a point $x\in\R^3$, its orbit $\orbit(x)$ and orbit representative $\orbRep_x\in\bfR$ under the action of $\SO{3}$. Depicted are two rotations $g_1$ and $g_2$ that take $\orbRep_x$ to $x$. The inversion kernel $\repInvKern(\argdot|x)$ is the uniform distribution over $\{g\in\SO{3}:x=g\orbRep_x\}$.
}
\label{fig:group}
\end{figure}

\subsection{Testing for distributional invariance}
\label{sec:bg:testing:invariance}

The literature on hypothesis tests for group invariance of a probability measure is extensive, going back at least to \cite{Hoeffding_1952}, and even earlier for special cases. A textbook treatment of such tests when $\grp$ is finite can be found in \cite[][Ch.~15.2]{Lehmann:Romano:2005}; although the technical details for infinite compact groups are more involved, the basic structure is similar. 
As a reference point for the methods we will develop for testing conditional symmetry, we briefly review the basics of testing for distributional invariance, highlighting particular aspects that are important in the sequel. 

A function $f$ with domain $\bfX$ is $\grp$-invariant if $f(gx) = f(x)$, all $x \in \bfX, g \in \grp$. Observe that a $\grp$-invariant function is constant on each orbit. 
This specializes to probability measures.

\begin{definition} \label{def:invariance}
    A probability measure $P$ on $\bfX$ is \emph{$\grp$-invariant} if $g_*P(A)  = P(A)$ for all $g\in\grp$, $A \in \bfS_{\bfX}$.
\end{definition}

When the group action is continuous, $\grp$-invariance of a probability measure is possible only when $\grp$ is compact (a non-compact group acting continuously would induce non-compact orbits, which would not admit invariant probability measures). 
For a specified compact group $\grp$, consider the hypotheses
\begin{align} \label{hyp:invariance}
    \nullHyp\colon P \text{ is $\grp$-invariant} \quad \text{versus} \quad \altHyp\colon P \text{ is not $\grp$-invariant.}
\end{align}
The standard level-$\sig$ test based on statistic $\testStat$ is
\begin{align*}
    \test_{\sig}(X) = \indi\{ \testStat(X) > q_{\testStat(GX)}(1-\alpha) \} \;,
\end{align*}
where $q_{\testStat(GX)}(1-\sig)$ is the quantile function of $\testStat(GX)$ at $1-\sig$, with $X \sim P$ and $G\sim \haar$, a random element of $\grp$ sampled from Haar measure (i.e., the uniform distribution on $\grp$). The usual way to show that this test has level $\sig$ uses the fact that $P$ is $\grp$-invariant if and only if $X \equdist GX$. 
Hence $\E[\test_{\sig}(X)] = \E[\test_{\sig}(GX)] \leq \sig$. 
For small and finite $\grp$, the quantile $q_{T(GX)}(1-\sig)$ can be computed exactly by applying each element of $\grp$ to $X$. However, in the case that $\grp$ is large or uncountable, randomization tests (also called ``stochastic approximation tests,'' \citealp{Lehmann:Romano:2005}) may be used to obtain a Monte Carlo estimate of the quantile. \cref{alg:group:test:new} shows the basic recipe when the elements of $X_{1:n}$ are  i.i.d.\footnote{If, instead, $X_{1:n}$ is treated as a single observation of a multivariate random variable \citep[e.g.,][]{Hemerik2018}, then randomization is performed by applying a single random transformation $G^{(b)}\sim\haar$ to the entire sample $X_{1:n}$.} 
In the basic scheme, a fixed number $B$ of random group elements are sampled independently from Haar measure. 
As with all randomization tests, in order to avoid potential issues related to increasing the Type I error rate due to the randomness of the test, $B$ should be set sufficiently large. 
When sampling is computationally expensive, there are methods for selecting $B$ adaptively in order to reduce computation while maintaining statistical validity \citep[see][for a recent review]{stoepker2024inference}, and recent work has investigated more statistically efficient sampling schemes for permutation tests \citep{Ramdas:2023,Koning:2022}.

\begin{algorithm}[bt]
\caption{Randomization test for $\grp$-invariance}\label{alg:group:test:new}
\begin{algorithmic}[1]
\State \textbf{inputs}: $X_{1:n}$, $B$, $\testStat$, $\sig$
\State compute $\testStat(X_{1:n})$
\For{$b$ in $1,\dotsc,B$}
\State sample $G_i^{(b)} \simiid \haar$
\State set $X_{1:n}^{(b)} \define (G_1^{(b)} X_1,\dotsc, G_n^{(b)} X_n)$
\State compute $\testStat(X_{1:n}^{(b)})$
\EndFor
\State compute $p$-value $p_B$ as
\begin{align*}
    p_B \define \frac{1 + \sum_{b=1}^B \indi\{\testStat(X_{1:n}^{(b)}) \geq \testStat(X_{1:n})\} }{1 + B}
\end{align*}
\State reject $\nullHyp$ if $p_B \leq \sig$
\end{algorithmic}
\end{algorithm}

Somewhat less conventionally, it can be shown that \cref{alg:group:test:new} instantiates a test that is level-$\sig$ {conditionally on  $\maxInv(X)_{1:n}$,  the element-wise maximal invariants of $X_{1:n}$, 
which is a sufficient statistic for the set of $\grp$-invariant probability measures \citep{farrell:1962,dawid:1985,BloemReddy:2020}.} 
Hence, if $X_{1:n}\simiid P$ and $P$ is $\grp$-invariant, then for any $\sig \in [0,1]$, 
\begin{align} \label{eq:mc:test:p:valid}
    \bbP\left(p_B \leq \sig \mid \maxInv(X)_{1:n}\right)  = \frac{\lfloor \sig (B+1) \rfloor}{B+1} \leq \sig \;, \quad P\text{-a.s.}
\end{align}
An orbit selector is especially useful in this setting. 
With an orbit selector $\orbSel$ fixed, it is known that every $\grp$-invariant probability measure factorizes via the distributional identity $X \equdist G\orbSel(X)$, with $G \condind X$ and $G\sim\haar$ (see \cref{apx:bg:testing}). Once the orbit is known, all remaining variation in $X$ is produced by $G\sim\haar$.  Hence, the sufficiency of $\orbSel(X)$ and the  simplicity of \cref{alg:group:test:new}. 

Conditioning the test on $\orbSel(X)_{1:n}$ has important practical implications for increasing test power. 
Because \eqref{eq:mc:test:p:valid} holds conditioned on $\orbSel(X)_{1:n}$, the statistic $\testStat$ may depend on the observed data through $\orbSel(X)_{1:n}$, meaning that functional parameters of $\testStat$ may be selected in a data-adaptive way, as long as the selection only uses $\orbSel(X)_{1:n}$ (or any other $\grp$-invariant function of $X_{1:n}$). 
For example, the power of nonparametric tests based on kernel maximum mean discrepancy is known to depend greatly on the kernel function parameters such as bandwidth \citep{Ramdas:2015}, so power may improve when parameters are selected using the observed maximal invariants. This provides a complementary justification for an idea introduced by \citet{Biggs:2024} in the context of nonparametric permutation tests, which they allowed to depend on permutation-invariant functions of the data. 
We will build on these ideas when constructing tests for conditional symmetry. 

Although \eqref{eq:mc:test:p:valid} has been proven for special cases 
\citep{Romano:1989,Hemerik2018}, the literature seems to be missing the sufficiency perspective, as well as a full proof of the level of the conditional test for a compact group that may not act freely (e.g., the group of continuous rotations for $\mathbb{R}^d$, $d \geq 3$), though \citet[][Example 2]{Romano:1989} outlines such an argument for rotations. For completeness, we provide a formal statement and proof in \cref{apx:bg:testing}.

\subsection{Symmetry of conditional distributions} \label{sec:background:symmetry}

The primary focus of this paper is on a different notion of probabilistic symmetry, pertaining to conditional distributions. Here, we review the basic concepts needed to describe the problem. 
Suppose $\grp$ acts on $\bfX$ and also on another set $\bfY$; the group action may be different on each. A function $f\colon \bfX \to \bfY$ is \emph{$\grp$-equivariant} if $f(gx) = gf(x)$, for all $x \in \bfX,\ g \in \grp$. In the case that $\grp$ acts trivially on $\bfY$, equivariance specializes to invariance.  

We say that $P_{X,Y}$ is jointly $\grp$-invariant if it is invariant in the sense of \cref{def:invariance} extended to $\grp$ acting on $\bfX \times \bfY$.
In addition to joint invariance, we may define symmetry for a conditional distribution. We write the disintegration of $P_{X,Y}$ as $P_{X,Y}=P_X \otimes P_{Y|X}$, where $P_{Y|X}$ denotes a regular conditional probability of $Y$ given $X$ (i.e., a Markov probability kernel from $\bfX$ to $\bfY$).

\begin{definition} \label{def:cond:equiv}
    The conditional distribution $P_{Y|X}$ is \emph{$\grp$-equivariant} if
    \begin{align*}
        P_{Y|X}(B\mid x) = P_{Y|X}(gB\mid gx) \;, \quad x \in \bfX, \ B \in \bfS_{\bfY}, \ g \in \grp \;.
    \end{align*}
    $P_{Y|X}$ is \emph{$\grp$-invariant} if
    \begin{align*}
    P_{Y|X}(B\mid gx) = P_{Y|X}(B\mid x) \;, \quad x \in \bfX, \ B \in \bfS_{\bfY}, \ g \in \grp \;.
    \end{align*}
\end{definition} 

We use the term ``$\grp$-symmetric'' as a catch-all for either of the conditional symmetries in \cref{def:cond:equiv}. 
As a simple example of equivariance, consider a case in which $Y \in \R^d$ is conditionally Gaussian-distributed and is acted on by $\Trans{d}$, the locally compact group of $d$-dimensional translations.

\begin{example}
\label{expl:shift:equivariance:intro}
    For random vectors $X$, $Y$ in $\R^d$, suppose that $Y | X \sim \gaussian(X, \Sigma)$, where $\Sigma$ is some fixed covariance matrix.
    It is straightforward to check that $P_{Y|X}$ is $\Trans{d}$-equivariant.
    If $\eps \sim \gaussian(0, \Sigma)$ with $\eps\condind X$ so that $Y \equdist X + \eps$, then for any $g_v \in \Trans{d}$ where $g_vx \define v+x$ for $v\in\R^d$, $A\subseteq \R^d$,
    \begin{align*}
    P_{Y|X}(g_vY \in A \mid X) = P_{Y|X}(v + X + \eps \in A \mid X) = P_{Y|X}(Y \in A \mid g_vX) \;.
    \end{align*}
    If, instead, $Y| X \sim \gaussian(\onevec_d, \Sigma)$, where $\onevec_d$ is the $d$-dimensional vector of ones, then $P_{Y|X}$ is $\Trans{d}$-invariant. In this example, conditional invariance is equivalent to independence. That is not a coincidence, but as we establish in \cref{thm:equivariance:cond:ind}, is a consequence of the transitivity of $\Trans{d}$ acting on $\bfX = \R^d$.
\end{example}

Similar to the hypotheses \eqref{hyp:invariance} in testing distributional invariance, the primary aim of this paper is to test the hypotheses
\begin{align} \label{eqn:hypotheses}
    \begin{split}
        \nullHyp \colon P_{Y|X} \text{ is $\grp$-symmetric} \qquad \text{versus}  \qquad
        \altHyp \colon P_{Y|X} \text{ is not $\grp$-symmetric.}
    \end{split}
\end{align}
In the setting of distributional invariance from \cref{sec:bg:testing:invariance}, the test properties stem from the decomposition $X \equdist G\orbSel(X)$ with $G\condind X$ and $G\sim\haar$. To test for conditional symmetry in a similar fashion, we require an analogous decomposition of the joint distribution of $(X,Y)$ in terms of a random element $\G \in \grp$, to be shared between $X$ and $Y$. 
The conditional setting presents new challenges, however.
Conditional symmetries can hold with respect to non-compact groups, as in \cref{expl:shift:equivariance:intro}. Hence, in contrast to tests for distributional invariance, tests for conditional symmetry cannot simply rely on the sampling from a normalized Haar measure, which does not exist for non-compact groups. 
Moreover, because equivariance encodes systematic regularities in the relationship between variables, the required random group element $\G$ cannot be independent from $X$ and $Y$.
In short, whereas the test for distributional invariance required only a maximal invariant and a uniformly random group element, tests for conditional symmetry must make use of additional structure arising from the group actions.

\section{Characterizations of conditional symmetry}
\label{sec:condind}

In addition to orbit selectors, the tests we will develop rely on a technique known as \emph{group inversion}.\footnote{For technical details on measurability issues that can arise, the interested reader may refer to \cref{apx:proper:action}. In particular, measurable versions of the objects defined below may not exist. Throughout, we assume that the group action is proper so that conditions needed for measurability are satisfied.} 
Recall from \cref{sec:bg:testing:invariance} that an orbit selector $\orbSel$ maps an element $x$ to its orbit representative, $\orbSel(x) = r_x$. This is well-defined even for non-compact groups, subject to the regularity condition of proper action. 
The properties that we describe below do not depend on which elements are chosen as the representatives, but the specific subsequent functions and kernels will be relative to that choice.

Recall that the {stabilizer subgroup} of $x$ is $\grp_x = \{ g \in \grp : gx = x \}$, and that the action is {free} if $\grp_x = \{\id\}$ for all $x \in \bfX$. 
A $\grp$-equivariant function $\repInv\colon\bfX\to\grp$ is called a \textit{representative inversion} if $\repInv(x)\orbSel(x) = x$ for all $x \in \bfX$. 
As the name suggests, $\repInv$ returns the element of $\grp$ that moves, or ``inverts'', the representative $\orbSel(x)$ to $x$; the inverse group element $\repInv(x)^{-1}$ moves $x$ to $\orbSel(x)$.
In order for $\repInv$ to be uniquely defined, the group action must be free.
If it is not, an equivariant \textit{inversion probability kernel} (inversion kernel for short) $\repInvKern \colon \bfX \times \bfS_{\grp} \to [0,1]$ can be used in place of $\repInv$.
If $\repInvRand|X \sim \repInvKern$, then $X \equas \repInvRand \orbSel(X)$.
At a high level, one may think of the inversion kernel $\repInvKern$ as the uniform distribution on the left coset $g \grp_{\orbSel(X)}$, where $g\orbSel(X) = X$. Assuming that the group action is proper (see \cref{apx:proper:action}) ensures that the left coset is compact and a uniform distribution is well-defined. 
When the action is free, the inversion kernel simplifies to $\repInvKern(\argdot|x) = \dirac_{\repInv(x)}(\argdot)$.
\Cref{fig:group} visualizes these structures for $\SO{2}$ and $\SO{3}$. The main difference is that the stabilizer subgroup of  $x \in \R^2$ is trivial while that of $x \in \R^3$ is not. In some cases, a representative inversion can still be (non-uniquely) defined when the action is not free (see \cref{expl:SOd:equivariance}), in which case an equivalent inversion kernel can be defined as $\repInvKern'(\argdot|x) \define \repInvKern(\repInv(x)^{-1}\argdot\mid\orbSel(x))$.

The following examples illustrate the main ideas and provide background for our experiments in \cref{sec:experiments}.

\begin{example} \label{expl:SOd:equivariance}
Let $X$ be a random vector in $\R^d$, acted on by $\text{SO}(d)$.  
A convenient maximal invariant is $\maxInv(X)=\|X\|$.
The set of orbit representatives can be chosen to be the points on the axis with unit basis vector $\e_1 = [1, 0, \dotsc, 0]^{\top}$, i.e., $\orbSel(x)=\|x\|\e_1$ for $x\in\R^d$.
For $d = 2$, the action is free; for $d > 2$, the stabilizer subgroup $\grp_{\orbSel(x)}$ is the set of $d$-dimensional rotations around the axis corresponding to $\e_1$.
One may construct a representative inversion function corresponding to $\orbSel(x) = \|x\|\e_1$ by, for example, rotating $\|x\|\e_1$ to $x$ in the 2D subspace spanned by $\nicefrac{x}{\|x\|}$ and $\e_1$.
That is, let $\tx \define \nicefrac{(x - \lip \e_1, x\rip \e_1)}{\| x - \lip \e_1, x\rip \e_1\|}$, so that $[\e_1,\ \tx]$ is a matrix in $\R^{d\times 2}$ whose columns form an orthonormal basis for the 2D subspace spanned by $\nicefrac{x}{\|x\|}$ and $\e_1$.
Let $\Theta_x$ be the 2D rotation matrix of angle $\cos^{-1}(\lip \e_1, \nicefrac{x}{\|x\|}\rip)$.
Then the $d$-dimensional rotation defined by
\begin{align} \label{eq:SOd:rep:inv}
    \repInv(x) = \I_d - \e_1 \e_1^{\top} - \tx \tx^{\top} + [\e_1,\ \tx] \Theta_x [\e_1,\ \tx]^{\top}
\end{align}
satisfies $\repInv(x) \orbSel(x) = x$.
A sample from the corresponding inversion kernel is generated by taking a uniform random $(d-1)$-dimensional rotation $H$ and extending it to a $d$-dimensional rotation $H'$ that fixes $\e_1$, so that $\repInvRand\equdist\repInv(x)H'$ has distribution $\repInvKern$.

\end{example}

\begin{example} \label{expl:lorentz}
    The Lorentz group, or indefinite orthogonal group $\text{O}(1,3)$, is a fundamental symmetry group in high-energy physics, and is central to the physical description of particles moving in laboratories at velocities approaching the speed of light \citep{bogatskiy20a}. We give only the relevant details here, largely following \citep{bogatskiy20a}; more details can be found in standard textbook references \citep[e.g.,][]{Sexl_2001,Cottingham_Greenwood_2007}. 
    $\text{O}(1,3)$ is the group of linear isometries of Minkowski spacetime, on which the theory of special relativity is constructed. 
    For a particle's \emph{four-momentum}, $x = (E, p_1, p_2, p_3)\in\R^4$, the group preserves the quadratic form
    \begin{align*}
        Q(x) = E^2 - p_1^2 - p_2^2 - p_3^2 \;,
    \end{align*}
    and so by definition, $Q$ is a maximal invariant. The subgroup that preserves the orientation of time and space (hence viewed as corresponding to physically realistic transformations) is known as the restricted Lorentz group, denoted $\lorentz$. 
    The restricted Lorentz group is a non-compact, non-commutative, connected real Lie group (a differentiable manifold) that consists of transformations $x\mapsto g_O g_\beta x$ composed of \emph{boosts} $g_\beta$ and three-dimensional rotations $g_O$ that act on the directional entries $(p_1,p_2,p_3)$.
    For $\beta\in(-1,1)$ and $O\in\textrm{SO}(3)$, $\lorentz$ transformations of a particle's four-momentum $x$ are encoded by
    \begin{align*}
        g_\beta
        \begin{bmatrix}
        E \\ p_1 \\ p_2 \\ p_3
        \end{bmatrix}
        &=
        \begin{bmatrix}
        (1-\beta^2)^{-\frac{1}{2}}(E-\beta p_1) \\ (1-\beta^2)^{-\frac{1}{2}}(-\beta E+p_1) \\ p_2 \\ p_3
        \end{bmatrix} \;, &
        g_O x &=
        \begin{bmatrix}
            1 & 0 & 0 & 0\\
            0 & O_{11} & O_{12} & O_{13} \\
            0 & O_{21} & O_{22} & O_{23} \\
            0 & O_{31} & O_{32} & O_{33}
        \end{bmatrix}x \;.
    \end{align*}
    These transformations are sometimes referred to as ``hyperbolic rotations'' that give rise to orbits whose one-dimensional projections are hyperbolas.  
    The set of orbit representatives can be chosen as $\orbSel(x) = (E_{\orbSel(x)},1,0,0)$, with $E_{\orbSel(x)}^2 - 1 = Q(x)$, so that $\bfR=\{(E,1,0,0): E\in\R_+\}$. For any $x = (E,p_1,p_2,p_3)$, a sample from the inversion kernel $P_{\repInvRand|X}(\argdot|x)$ can be obtained in two steps: first, calculate $\beta = \frac{E_{\orbSel(x)} - E \sqrt{p_1^2 + p_2^2 + p_3^2}}{1 + E^2}$; second, sample a 3D rotation $O$ uniformly at random from $\grp_{(1,0,0)}$ and compute $\repInv((p_1,p_2,p_3))$ as in \eqref{eq:SOd:rep:inv}. Then $x = g_{\repInv((p_1,p_2,p_3))} G_O g_{\beta} \orbSel(x)$.
\end{example}

\subsection{Equivariance as conditional independence}

Equivariant Markov kernels arise naturally in the disintegration of jointly invariant measures. 
Under general conditions, a measure is jointly invariant if and only if it disintegrates into an invariant marginal measure and an equivariant Markov kernel. 
A measure-theoretic proof of this fact is provided by \citet[][Theorem 7.6]{Kallenberg:2017}. 
\citet{BloemReddy:2020} connected this to conditional independence, showing that for compact $\grp$ acting on $\bfX$ such that a measurable representative inversion $\repInv$ exists, and $\grp$-invariant $P_X$, $P_{Y|X}$ is $\grp$-equivariant if and only if
\begin{align} \label{eqn:cond:ind}
    X \condind \repInv(X)^{-1} Y \mid \maxInv(X) \;.
\end{align}
Informally, the relation in \eqref{eqn:cond:ind} says that given knowledge of the orbit through a maximal invariant $\maxInv(X)$, the distribution of the transformed variable $\repInv(X)^{-1} Y$ does not depend on the particular value $X$ takes on its orbit.
\cref{fig:equivariance} illustrates this intuition for the simple example of 2D rotations: the aligned conditional distributions of $Y|X=x$ for different values of $x$ on the same orbit should be the same. If they are not, additional dependence between $Y$ and $X$ is indicated.

\begin{figure}[t]
    \centering
    \includegraphics[trim={0 0 0 0},clip,width=0.75\textwidth]{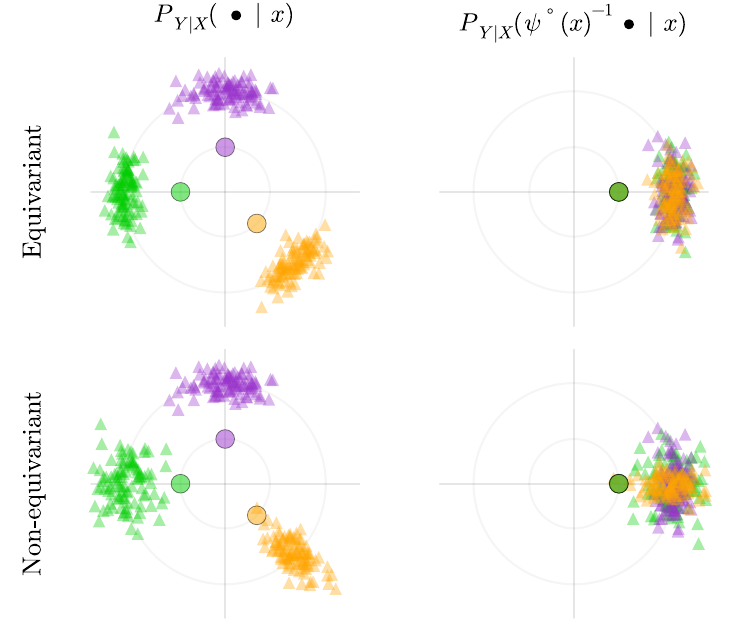}
    \caption{Three random samples of $Y\in\R^2$ (triangle) drawn conditionally on $X\in\R^2$ (circle). If the conditional distribution $P_{Y|X}$ is $\SO{2}$-equivariant, then $P_{Y|X}(\repInv(x)^{-1}\argdot|x)$ is the same for all samples.}
    \label{fig:equivariance}
\end{figure}

The assumptions for \eqref{eqn:cond:ind} in \citep{BloemReddy:2020} 
are somewhat restrictive, as they require $\grp$ to be compact {and} $P_X$ to be invariant. In many situations, one or both of those assumptions will not hold. 
The following theorem substantially generalizes \eqref{eqn:cond:ind}, making it applicable to non-compact $\grp$ and non-invariant $P_X$. Moreover, it does not require the existence of a representative inversion $\repInv(x)$, so that the action of $\grp$ on $\bfX$ may not be free.

\begin{theorem} \label{thm:equivariance:cond:ind}
    Let $\grp$ be a lcscH group acting on each of $\bfX$ and $\bfY$, the action on $\bfX$ proper, with measurable inversion kernel $\repInvKern$.
    The conditional distribution $P_{Y|X}$ is $\grp$-equivariant if and only if,
    \begin{align} \label{eqn:cond:ind:rand}
        (\repInvRand, X) \condind \repInvRand^{-1} Y \mid \maxInv(X) \;, \quad \text{with} \quad \repInvRand\mid X \sim \repInvKern \;.
    \end{align}
\end{theorem}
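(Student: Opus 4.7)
The plan is to exploit two structural facts: (i) the defining property $X \equas \repInvRand \orbSel(X)$, so that $\repInvRand^{-1} X = \orbSel(X)$ on the support of $\repInvKern(\argdot|X)$, and (ii) \emph{equivariance of the inversion kernel itself}: $\repInvKern(\argdot|gx)$ is the pushforward of $\repInvKern(\argdot|x)$ under left multiplication by $g$. Fact (ii) follows because the support of $\repInvKern(\argdot|gx)$ consists of those $h'\in\grp$ with $h'\orbSel(gx) = gx$; since orbit selectors are constant on orbits, $\orbSel(gx) = \orbSel(x)$, so any such $h'$ must be of the form $gh$ with $h$ in the support of $\repInvKern(\argdot|x)$.

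For the forward implication, assume $P_{Y|X}$ is $\grp$-equivariant. Because $Y \sim P_{Y|X}(\argdot|X)$ independently of $\repInvRand$ given $X$, conditioning on $X=x$ and $\repInvRand=h$ gives
\[
\bbP(\repInvRand^{-1}Y \in B \mid \repInvRand = h, X = x) = P_{Y|X}(hB \mid x) = P_{Y|X}(B \mid h^{-1}x) = P_{Y|X}(B \mid \orbSel(x)),
\]
where the middle equality is equivariance and the last uses fact (i). Since $\orbSel(x)$ is a measurable function of $\maxInv(x)$, this conditional distribution depends on $(\repInvRand,X)$ only through $\maxInv(X)$, which is precisely the conditional independence asserted.

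For the reverse implication, the conditional independence provides a kernel $Q(\argdot|\maxInv(X))$ such that, for $P_X\otimes\repInvKern$-a.e.\ $(x,h)$, $P_{Y|X}(hA|x) = Q(A|\maxInv(x))$. Substituting $A = h^{-1}B$ and integrating against $\repInvKern(\argdot|x)$ (the integrand is constant in $h$ off a null set, so the integral is tautological) yields
\[
P_{Y|X}(B \mid x) = \int Q(h^{-1}B \mid \maxInv(x))\, \repInvKern(dh \mid x).
\]
Writing the analogous formula at $gx$ and then changing variables $h' = gh$, justified by fact (ii) together with $\maxInv(gx) = \maxInv(x)$, transforms it into
\[
P_{Y|X}(gB \mid gx) = \int Q(h^{-1}g^{-1}gB \mid \maxInv(x))\, \repInvKern(dh \mid x) = P_{Y|X}(B \mid x),
\]
which is exactly $\grp$-equivariance of $P_{Y|X}$.

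The main obstacle I anticipate is not algebraic but measure-theoretic: when the action is not free, $\repInvKern(\argdot|x)$ is supported on an entire coset $h\grp_{\orbSel(x)}$ rather than a point, so the identity $P_{Y|X}(hA|x) = Q(A|\maxInv(x))$ is asserted only $\repInvKern$-almost everywhere in $h$. Passing from this a.e.\ statement to the pointwise-in-$g$ equivariance required by \cref{def:cond:equiv} is what forces the integrate-and-change-of-variables argument above, rather than a simple pointwise substitution. Underlying everything is the proper-action hypothesis (see \cref{apx:proper:action}), which guarantees that $\maxInv$, $\orbSel$, and $\repInvKern$ admit jointly measurable versions and that the cosets carrying the uniform distributions defining $\repInvKern$ are compact, so that all of the expressions above are well defined.
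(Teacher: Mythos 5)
Your forward direction is correct and is essentially the paper's argument in streamlined form: you compute the conditional law of $\repInvRand^{-1}Y$ given $(\repInvRand,X)=(h,x)$ directly as $P_{Y|X}(\argdot\mid\orbSel(x))$, whereas the paper first shows that the map $(\repInvRand,x)\mapsto P_{\tY|\repInvRand,X}(\argdot\mid\repInvRand,x)$ is $\grp$-invariant and then factors it through the maximal invariant via \cref{prop:orbsel:inv}. Both routes rest on $\repInvRand\,\orbSel(X)=X$ a.s.\ and on the equivariance of $\repInvKern$; note, though, that your support-based justification of that equivariance only identifies the supports, not the measures, so this property should be cited from \cref{prop:orbsel:inv} rather than rederived.

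The reverse direction as written has a genuine gap. The identity $P_{Y|X}(B\mid x)=\int Q(h^{-1}B\mid\maxInv(x))\,\repInvKern(dh\mid x)$ holds only for $P_X$-almost every $x$, yet you then ``write the analogous formula at $gx$''. Since the null hypothesis does not require $P_X$ to be invariant or even quasi-invariant, $g_*P_X$ need not be absolutely continuous with respect to $P_X$, so $gx$ can lie in the exceptional null set for $P_X$-almost every $x$, and the formula at $gx$ is simply not available. The repair is close to what you wrote: take $\tilde P(B\mid x)\coloneq\int Q(h^{-1}B\mid\maxInv(x))\,\repInvKern(dh\mid x)$ as the \emph{definition} of a kernel for all $x$; it agrees with $P_{Y|X}$ for $P_X$-a.e.\ $x$, hence is a version of the conditional distribution, and your change of variables $h'=gh$ (using the exact equivariance of $\repInvKern$ and $\maxInv(gx)=\maxInv(x)$) then shows $\tilde P(gB\mid gx)=\tilde P(B\mid x)$ for \emph{all} $x$ and $g$, which is the required everywhere-equivariant version. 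The paper handles the same issue by observing that the set on which equivariance fails is $\grp$-invariant and by explicitly constructing a modified kernel $Q'$ that averages $P_{Y|X}(\repInvRand^{-1}\argdot\mid\repInvRand^{-1}x)$ over $\repInvKern$ on that exceptional set; your integral formula is essentially that construction applied globally, but the step must be phrased as defining a new version rather than evaluating the old one at $gx$.
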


The proof can be found in \cref{apx:proofs:thm:4:1}. We note the following special cases. If there exists a representative inversion function $\repInv \colon \bfX \to \grp$, then condition \eqref{eqn:cond:ind:rand} reduces to \eqref{eqn:cond:ind}.
If the action of $\grp$ on $\bfY$ is trivial, then \eqref{eqn:cond:ind:rand} reduces to $X \condind Y \mid \maxInv(X)$. 
Finally, if the action of $\grp$ on $\bfX$ is transitive, as in \cref{expl:shift:equivariance:intro}, then \eqref{eqn:cond:ind:rand} specializes to the (unconditional) independence.

\Cref{thm:equivariance:cond:ind} implies that a test for the hypotheses \eqref{eqn:hypotheses} 
can be performed using a test for the conditional independence in \eqref{eqn:cond:ind:rand}. 
If the action of $\grp$ on $\bfX$ is transitive, then $\bfX$ has a single orbit and the test simplifies to a test for (unconditional) independence. 
In principle, any general-purpose test for conditional independence may be used. 
However, it is known that any test of level $\sig$ over the entire null hypothesis class has no power against any alternative \citep{Shah_2020}. 
The high-level solution is to restrict the null hypothesis class, which is commonly accomplished through problem-specific assumptions \citep{Shah_2020, Kim:2022} or modeling some parts of the relevant conditional distributions \citep{Zhang:2011, Pogodin:2024}.

\subsection{Distributional identities for conditional randomization}
\label{sec:condind:distid}

One broadly useful set of tools for testing conditional independence are conditional randomization tests \citep{Candes:2018,Berrett:2019,Liu:2022}, in which a suitable randomization of the data is used to generate a test that is similar in spirit to \cref{alg:group:test:new}. 
This motivates a refinement of the characterization of conditional symmetry in \cref{thm:equivariance:cond:ind}, on which we will build a conditional randomization test in the next section. The basic idea is that for a pair $(X,Y)$, the ``centered'' variables $\orbSel(X)$ and $\tY\define \repInvRand^{-1}Y$, 
with $\repInvRand|X\sim\repInvKern$,
can be used to obtain a randomized pair $(X',Y') \equdist (X,Y)$. The randomization involves sampling one or both of
\begin{align} \label{eq:equiv:randomization}
    \tY\mid \orbSel(X) \sim \tYKern\;, \quad \G\mid\orbSel(X) \sim \GKern \;, 
\end{align}
with $\tY \condind \G \mid \orbSel(X)$. 
Here, $\G$ denotes the random variable with conditional distribution
\begin{align} \label{eq:GKern:def}
    \GKern(\argdot \mid \orbRep) = \int_\bfX \repInvKern(\argdot\mid x')\XorbRepKern(dx'\mid \orbRep) \;,
\end{align}
where $\XorbRepKern$ is the conditional distribution of $X$ given an orbit, represented by the random representative $R$ (or more generally, any maximal invariant $M$). 
We refer to $\GKern$ as the \emph{group-orbit kernel} as it is the distribution on $\grp$ induced by restricting $P_X$ to the orbit of $R$. 
The group-orbit kernel has the property that if $\G|\orbSel(X) \sim\GKern$, then $(\orbSel(X),X) \equdist (\orbSel(X), \G\orbSel(X))$. (See \cref{lem:GKern} for a proof.)

In \cref{apx:proofs:equivariance}, we prove a number of distributional identities involving these randomizations and that underpin the tests we construct in \cref{sec:rand}. 
In particular, with $\G$ sampled as in \eqref{eq:equiv:randomization}, $P_{Y|X}$ is $\grp$-equivariant if and only if
\begin{align} \label{eq:equiv:iden:main}
    (\orbSel(X),\tY,X,Y) \equdist (\orbSel(X),\tY, \G\orbSel(X),\G\tY) \;.
\end{align}
This identity says that if we are able to sample $\G$ from the correct distribution conditional on $\orbSel(X)$, then we can test for $\grp$-equivariance by testing for equality in distribution between the data and the randomization $(X', Y') \coloneq (\G \orbSel(X), \G \tY)$. 
Crucially, the randomization (and therefore the test) can be conditional on part of the observed data: $\orbSel(X)$ and $\tY$. 
Conditioning restricts the null hypothesis set to agree with that part of the data, and therefore sidesteps the result of \cite{Shah_2020}. 
Moreover, the identity \eqref{eq:equiv:iden:main} specializes to 
\begin{align} \label{eq:equiv:iden:marginal}
    (\orbSel(X),\tY,Y) \equdist (\orbSel(X),\tY, \G\tY) \;,
\end{align}
which implies that as long as $\grp$ acts non-trivially on $\bfY$, the distributional comparison can be carried out using only observed $Y$ and randomized $Y' = \G \tY$. 

In \cref{sec:rand}, we describe how valid tests for conditional symmetry can be implemented for each of these cases.

\begin{example} \label{expl:SOd:equivariance:cont}
    Consider random vectors $X$ and $Y$ in $\R^d$, acted on by $\SO{d}$. As described in \cref{expl:SOd:equivariance}, a convenient maximal invariant is $\maxInv(X)=\|X\|$, with $\orbSel(X)=\|X\|\e_1$. Then \eqref{eq:equiv:iden:marginal} says that $P_{Y|X}$ is $\SO{d}$-equivariant if and only if $Y \equdist \G\tY$, 
    where $\G|\orbSel(X)\sim \GKern$ is a random $d$-dimensional rotation matrix that satisfies $\G \orbSel(X) \equdist X$. 
\end{example}

\begin{example}[Continuation of \cref{expl:lorentz}] \label{expl:lorentz:cont}
    The Standard Model of physics is invariant under the Lorentz group, and violations of $\lorentz$-invariance could indicate phenomena beyond the Standard Model. (More modestly, violations of $\lorentz$-invariance could indicate issues with high-energy physics simulators.) As an example, consider the four-momenta of particles produced by high-energy colliders such as the Large Hadron Collider (LHC). 
    High-energy quarks produced in collisions such as those at the LHC quickly decay through a cascading process of gluon emission into collimated sprays of stable hadrons, which are subatomic particles that can be detected and measured \citep{Salam2010,bogatskiy20a}. This spray is known as a \textit{jet}. According to the Standard Model, the four-momenta of the constituent particles should respect the symmetries of the Lorentz group in the following sense. Let $X$ and $Y$ be the four-momenta relative to the lab frame of any two particles produced by the same collision; hence, they are stochastically dependent. The conditional distribution $P_{Y|X}$ should be $\lorentz$-equivariant, as a Lorentz transformation of $X$ is equivalent to a change of the lab reference frame, which would then also apply to $Y$. Observe that this is not distributional invariance, because $gX \nequdist X$.
    \cref{expl:lorentz} describes how to sample from the inversion kernel, so if samples can be generated from $X'|\orbSel(X)\sim P_{X|R}$ (with $\orbSel$ as described in \cref{expl:lorentz}), then $\G = \repInvRand'$ with $\repInvRand'|X'\sim \repInvKern$ and $\tY = \repInvRand^{-1} Y$ with $\repInvRand|X \sim \repInvKern$ satisfy the distributional identities above. 
\end{example}

\section{\label{sec:rand}Randomization tests for conditional group symmetry}

The distributional identities in \cref{sec:condind:distid} provide the foundation for tests of conditional symmetry, in the form of  conditional randomization tests (CRTs). 
The basic idea is as follows. With a set of observations $(X,Y)_{1:n}$, generate $B$ randomization sets $(X,Y)_{1:n}^{(b)}$ as
\begin{align*}
    (X,Y)_i^{(b)} \coloneq (\GRand_i^{(b)}\orbSel(X_i),\GRand_i^{(b)}\tYRand_i^{(b)}) \;, \quad b = 1,\dotsc,B,\ \  i = 1,\dotsc, n .
\end{align*}
The distributional identities in 
\cref{thm:equiv:dist:id} indicate that the randomization can be implemented in three distinct ways by conditioning on observed data differently:
\begin{tests}[label=R\arabic*.,ref=R\arabic*]
    \item \label{test:g}
    Condition on $\orbSel(X)_{1:n}$ and $\tY_{1:n}$, with $\tY_i = \repInvRand^{-1}_i Y_i$, and where $\repInvRand_i$ is a representative inversion if one exists or is otherwise sampled from the inversion kernel and fixed. Set $\tYRand_i^{(b)} = \tY_i$ and sample $\GRand_i^{(b)}|\orbSel(X_i) \sim \GKern$. 
    
    \item \label{test:ty}
    Condition on $\orbSel(X)_{1:n}$ and $\repInvRand_{1:n}$, where the latter is a representative inversion if one exists or is otherwise sampled from the inversion kernel and fixed. Set $\GRand_i^{(b)} = \repInvRand_i$ and sample $\tYRand_i^{(b)}|\orbSel(X_i) \sim \tYKern$.

    \item \label{test:g:ty}
    Condition on $\orbSel(X)_{1:n}$. Sample both $\GRand_i^{(b)}|\orbSel(X_i) \sim \GKern$ and $\tYRand_i^{(b)}|\orbSel(X_i) \sim \tYKern$.
\end{tests}
In \cref{apx:rand:inv:kern}, we show that in settings where a representative inversion $\repInv$ can be defined despite the group action not being free, randomizing to $\repInvRand$ is not necessary in order for the tests to have the appropriate level. 

Each of the randomization procedures above yields a different test; we differentiate between them by indicating what statistic of $(X,Y)_{1:n}$ is conditioned on, denoted $S_{1:n} \coloneq S(X,Y)_{1:n}$. For example, in \cref{test:g}, $S_i = (\orbSel(X_i), \tY_i)$. Carrying out the appropriate randomization for each $i = 1,\dotsc,n$ and repeating $B$ times, the CRT with test statistic $T$ estimates a conditional $p$-value as
\begin{align} \label{eqn:crt:pval}
    p_B \define \frac{1+\sum_{b=1}^B\indi\left\{\testStat\left((X,Y)_{1:n}^{(b)}\right)\geq \testStat\left((X,Y)_{1:n}\right)\right\}}{B+1} \;.
\end{align}
The decision to reject the null hypothesis is taken if $p_B$ is less than or equal to a specified significance $\sig\in [0,1]$. The test statistic $T$ should be some signal of distributional equality, with small values constituting evidence in favor of the null hypothesis.
For example, we implement our tests in the subsequent sections using kernel maximum mean discrepancy (MMD). 

It is straightforward to show via \cref{thm:equiv:dist:id} and its corollaries that the resulting CRT based on $B$ randomization sets is a  level-$\sig$ test. 
Moreover, as $B\to\infty$, the CRT produces a consistent conditional $p$-value, in the following sense. Under the null hypothesis that $P_{Y|X}$ is $\grp$-equivariant, if $(X',Y') \equdist (X,Y)$, then for $P$-almost every sequence $s_{1:n}$, as $B\to\infty$,
\begin{align} \label{eqn:crt:consistency}
    p_B\; \toas\; \bbP\left\{ \testStat\left((X',Y')_{1:n}\right)) \geq \testStat\left((X,Y)_{1:n}\right) \mid S'_{1:n} = S_{1:n} = s_{1:n} \right\} \;.
\end{align}
\cref{prp:crt:validity} formalizes the result generally in terms of a test based on the identity \eqref{eq:equiv:iden:main} (from \cref{thm:equiv:dist:id}); the same holds for a test based on the identity \eqref{eq:equiv:iden:marginal} (from \cref{cor:eq:xy:cor}). 
The proof is straightforward and provided in \cref{apx:proofs:crt:validity} for completeness. As with the test of distributional invariance in \cref{sec:bg:testing:invariance}, conditioning on $S_{1:n}$ means that $S_{1:n}$ can be used as input to the test statistic. 

\begin{proposition} \label{prp:crt:validity}
    Assume that $P_{Y|X}$ is $\grp$-equivariant. For data $(X,Y)_{1:n}$, generate $B$ randomized versions $(X,Y)^{(b)}_{1:n}$ using any of the randomization procedures \cref{test:g}--\cref{test:g:ty}. Assume that $\bbP\{\testStat((X,Y)_{1:n}^{(b)}) = \testStat((X,Y)_{1:n}^{(b')})\} = 0$ for $b \neq b'$.   
    For arbitrary fixed $B\in\mathbb{N}$ and $\sig\in[0,1]$, the statistic $p_B$ defined in \eqref{eqn:crt:pval} satisfies
    \begin{align} \label{eq:crt:pval:level}
        \bbP\left(p_B\leq\sig \mid S_{1:n} \right) = \frac{\lfloor \sig (B+1)\rfloor}{B+1} \leq \sig \;, \quad  P\text{-a.s.}
    \end{align}
    Moreover, the conditional strong consistency of $p_B$, as in  \eqref{eqn:crt:consistency}, holds. 
\end{proposition}

As mentioned in \cref{sec:bg:testing:invariance}, one sets $B$ before conducting the test.
However, there are approaches that allow $B$ to be chosen adaptively, which may have benefits in terms of improving computational efficiency while retaining test validity; see \citep{stoepker2024inference} for a recent survey.
Our tests may also benefit from these approaches, though because simulating samples from the relevant conditional distributions is not overly expensive, we do not pursue that here.

\subsection{\label{sec:rand:approx}Approximate conditional randomization}

In \cref{apx:rand:cond:exact}, we describe certain settings that allow for exact sampling from $\GKern$. One particular case of interest described there is when $\grp$ acts transitively on $\bfX$, in which case sampling from $\GKern$ amounts to permuting the $\repInvRand_i$'s. If $\grp$ induces a finite number of orbits in $\bfX$, then an orbit-stratified permutation can be used. 
In general, we expect that sampling from approximations of $\GKern$ and/or $\tYKern$ will be the more common approach in practice.

To that end, let $(X,Y)_{i,m}^{(b)}$ be a randomization of $(X,Y)_i$ generated as in \cref{prp:crt:validity}, except that $\GRand_i^{(b)}$ and/or $\tYRand_i^{(b)}$ is sampled from an approximation to the appropriate conditional distribution. For example, this might be a sample from a kernel density estimate constructed from an independent dataset of $m$ observations. Under the null hypothesis, if $T$ is a continuous function and if as $m\to\infty$, $(X,Y)_{i,m}^{(b)} \todist (X,Y)$ conditioned on $\orbSel(X_i)$, then the approximate CRT will be asymptotically level-$\sig$. 
We state and prove this formally in \cref{apx:proofs:crt:validity}.
We note that the sequence of approximations does not necessarily require an independent dataset to estimate the conditional distributions, but can depend on $S_{1:n}$. 
Hence, any asymptotically correct method for approximating the required conditional distributions can be used for the conditional randomization procedure to yield an asymptotically level-$\sig$ test.

Here, we describe a kernel estimation approach for sampling from approximations of the conditional distributions required for \cref{test:g,test:ty,test:g:ty}.
Let $\maxInv(X)$ be a continuous maximal invariant, such as $\orbSel(X)$ chosen so that $\orbSel$ is a continuous function, and let $\kde\colon\bfM\times\bfM\to\R_+$ be a kernel function defined on $\bfM$. Assume that $\kde$ is bounded above by $\nu<\infty$, with $\kde(z,z)=\nu$ for all $z\in\bfM$. Using a (possibly independent) dataset of size $m$ for estimation, assume that the bandwidth parameter $\bandwidth$ decreases to $0$ as $m\to\infty$, slowly enough to ensure asymptotically consistent density estimation. To generate a randomization conditioned on $M(X)$, we first assign to each index $i\in\{1,\dotsc,m\}$ a probability proportional to $\kde(\maxInv(X),\maxInv(X_i))$.
To sample from an approximation of $\tYKern$ for \cref{test:ty,test:g:ty}, we then sample an index $\ell$ from $\{1,\dotsc,m\}$ with those probabilities, and then take $\tYRand_i=\tY_\ell$.
Similarly, to sample from an approximation of $\GKern$ for \cref{test:g,test:g:ty}, we sample an index $j$ from $\{1,\dotsc,m\}$ (independent of $\ell$) with those probabilities, and then take $\GRand_i=\repInvRand$, where $\repInvRand|X_j\sim\repInvKern$.
It can be seen that as $m\to\infty$, the distributions from which this procedure samples approach the target conditional distributions. 
The resulting conditional randomization procedure (without an independent dataset) used in a test based on \cref{thm:equiv:dist:id} is summarized in \cref{alg:cond:rand}. For a test based on \cref{cor:eq:xy:cor}, only the second component of the output pair is retained.

\begin{algorithm}[!tb]
\caption{Approximate conditional randomization procedure}\label{alg:cond:rand}
\begin{algorithmic}[1]
\State \textbf{inputs} $(X,Y)$, $(X,Y)_{1:n}$
\State compute $p_i \coloneq \kde(\maxInv(X),\maxInv(X_i))$ for $i\in\{1,\dotsc,n\}$
\State independently sample $j,\ell\in\{1,\dotsc,n\}$ with probability proportional to $p_1,\dotsc,p_n$
\If{randomize $\G$}
\State sample $\repInvRand|X_j\sim\repInvKern$
\Else
\State sample $\repInvRand|X\sim\repInvKern$
\EndIf
\State set $\GRand \coloneq \repInvRand$
\If{randomize $\tY$}
\State set $\tYRand \coloneq \tY_\ell$
\Else
\State set $\tYRand \coloneq \tY$
\EndIf
\State return $(\GRand\orbSel(X),\GRand\tYRand)$
\end{algorithmic}
\end{algorithm}

In our experiments, we estimate the conditional distributions using the same data used for testing. This procedure remains valid because we base the tests on \cref{cor:eq:xy:cor}, using only $\tY_{1:n}$; the part of the observed data used for estimating conditional distributions, $(\repInvRand_{1:n},M(X)_{1:n})$, is conditionally independent of those variables under the null hypothesis.  With a continuous maximal invariant, we find that it is sufficient to use a Gaussian kernel with a diagonal covariance matrix given by Silverman's rule of thumb \citep[][Ch.~4]{Silverman:1986}.
Our experiments show that with finite samples, using Silverman's rule for approximate randomization tends to yield conservative tests with sizes less than $\sig$.
Such tests are likely to only achieve exact level-$\sig$ as $m\to\infty$ or if exact randomization is possible in settings like those described in \cref{apx:rand:cond:exact}. Using larger bandwidths can increase power, although it can also lead to inflated Type I error rates (see \cref{apx:exp:bw}).

\subsection{\label{sec:rand:perm}Baseline: two-sample permutation tests}

As a simple alternative to the conditional randomization tests described in the previous sections, consider generating $B=1$ randomized sample, then comparing it to the observed data using a two-sample test for equality in distribution. For example, a permutation test for equality in distribution provides a nonparametric baseline. Although these are valid level-$\sig$ tests, they do not take full advantage of the structure induced by the group symmetry and we do not expect them to be as powerful against alternatives as the CRT. We implement such tests for baseline comparison in our experiments.

\section{\label{sec:kernel}Kernel-based hypothesis tests for conditional group symmetry}

In the experiments reported in \cref{sec:experiments}, we test for conditional symmetry using a kernel-based test statistic of observations denoted for convenience by $Z_{1:n}$. For tests based on identity \eqref{eq:equiv:iden:main} (\cref{thm:equiv:dist:id}), $Z_i = (X,Y)_i$, with a kernel defined on the product space $\bfZ = \bfX \times \bfY$. For tests based on \eqref{eq:equiv:iden:marginal} (\cref{cor:eq:xy:cor}), $Z_i = Y_i$, with a kernel defined on $\bfZ = \bfY$.  We provide a brief overview of kernel-based hypothesis testing and other related ideas in this section.
See \cite{steinwart:2008:svm} for a thorough treatment of the theory of kernels, and \cite{Muandet:2017} for a review of their uses in statistics and machine learning. 

Let $\k\colon\bfZ\times\bfZ\rightarrow\R$ be a positive semidefinite kernel with parameter $\kparam$, corresponding to the Reproducing Kernel Hilbert Space (RKHS) $\hilbert$. If the kernel is measurable and bounded, then the \textit{kernel mean embedding} (KME) of a probability measure $P$ on $\bfZ$ is the unique element $\kme{P}(\argdot)\define\int_\bfZ \k(z,\argdot)P(dz)\in\hilbert$ such that $\E_P[f(Z)]=\lip f,\kme{P}\rip_\hilbert$ for all $f\in\hilbert$ \citep{Gretton:2012}.
A kernel is said to be \textit{characteristic} if the map $P\mapsto\kme{P}$ from $\mathcal{P}(\bfZ)$ into $\hilbert$ is injective, meaning that the embedding for each probability measure is unique \citep{Sriperumbudur:2010}. 
KME-based hypothesis tests compare distributions through their KMEs \citep{Gretton:2007,Gretton:2012}. 
The test statistic in KME-based tests is an estimator of the squared \textit{maximum mean discrepancy} (MMD),
$\MMD^2(P_1,P_2) \define \left\|\kme{P_1} - \kme{P_2}\right\|_\hilbert^2$. 
If the kernel $\k$ is characteristic, $\MMD^2$ is a metric on $\mathcal{P}(\bfZ)$; hence $\MMD(P,Q) = 0$ if and only if $P=Q$. 
If a non-characteristic kernel is used instead, $\MMD^2$ may be unable to separate some distinct elements of $\mathcal{P}(\bfZ)$, which would not affect the level of the test, but may reduce the power under some alternatives. 

\begin{algorithm}[bt]
\caption{Test for conditional symmetry using MMD-based statistics}\label{alg:fuse}
\begin{algorithmic}[1]
\State \textbf{inputs}: $(X,Y)_{1:n}$, $B$, $\sig$, \texttt{reuse} option, statistic $\testStat \in \{ \widehat{\mathrm{MMD}}^2_{\kparam}, \widehat{\mathrm{FUSE}}_\omega, \widehat{\textrm{SK}}\}$, $\{\sigma_1,\dotsc,\sigma_L\}$
\State resample $Z^{(0)}_{1:n}$ using \cref{alg:cond:rand}
\State compute $\testStat^{(0)} \coloneq\testStat(Z_{1:n},Z_{1:n}^{(0)})$
\If{\texttt{reuse}}
\State resample $Z'_{1:n}$ using \cref{alg:cond:rand}
\EndIf
\For{$b$ in $1\dotsc,B$}
\State resample $Z^{(b)}_{1:n}$ using \cref{alg:cond:rand}
\If{\texttt{!reuse}}
\State resample $Z'_{1:n}$ using \cref{alg:cond:rand}
\EndIf
\State compute $\testStat^{(b)}\coloneq\testStat(Z^{(b)}_{1:n},Z'_{1:n})$
\EndFor
\State compute $p_B$ as 
\begin{equation*} %
    p_B \define \frac{1 + \sum_{b=1}^B \indi\{\testStat^{(b)} \geq \testStat^{(0)}\} }{1 + B}
\end{equation*}
\State reject $\nullHyp$ if $p_B \leq \sig$
\end{algorithmic}
\end{algorithm}

The MMD can be estimated from independent samples $Z_{1:n}\simiid P_1$ and $Z_{1:n}'\simiid P_2$ via the unbiased U-statistic
\begin{align*}
\widehat{\mathrm{MMD}}_{\kparam}^2(Z_{1:{n}},Z_{1:{n}}') = \frac{1}{n(n-1)}\sum_{i\neq j}\k(Z_i,Z_j) + \k(Z_i',Z_j') - \frac{2}{n^2}\sum_{i,j}\k(Z_i,Z_j') \;,
\end{align*}
or the biased V-statistic
\begin{align*}
    \widehat{\mathrm{MMD}}_{v,\kparam}^2(Z_{1:{n}},Z_{1:{n}}') = \frac{1}{n^2} \sum_{i, j}\k(Z_i,Z_j) + \k(Z_i',Z_j') - 2\k(Z_i,Z_j') \;.
\end{align*}
Our CRT implementation, shown in \cref{alg:fuse}, is able to use either of these statistics or the adaptive MMD-based statistics described in the next section. The basic idea is to simulate conditionally independent randomizations $Z_{1:n}^{(b)}$ and ${Z'}_{1:n}^{(b)}$, then use them to estimate the MMD. This produces an estimate of the test statistic under the null distribution with which to compare the estimated MMD between $Z_{1:n}$ and yet another independent randomization ${Z}^{(0)}_{1:n}$. Optionally, a single set of comparison samples $Z'_{1:n}$ can be reused (option \texttt{reuse} in \cref{alg:fuse}) instead of independent comparison samples ${Z'}_{1:n}^{(b)}$, $b = 0,\dotsc,B$. The level of the test in each case is $\sig$, though we expect different power properties.

In the theoretical analysis of the MMD-based tests, not all of the usual statistical properties of MMD-based tests hold. Whereas in the standard setting, independent samples are compared, here the samples are dependent through conditioning on $S_{1:n}$ for the randomization. The two samples are, however, conditionally independent, which allows us to build on techniques for analyzing MMD estimators when independent samples are used \citep{Sriperumbudur_2012}. 
For ease of theoretical analysis of the MMD-based test, we assume that independent comparison sets (\texttt{reuse = false}) and the V-statistic are used.
We use a single comparison set (\texttt{reuse = true}) and the U-statistic in our experiments, but we do not expect the tests to behave very differently. Simulation experiments comparing reused and independent comparison sets (see \cref{apx:exp:mvn}) support this conclusion.

Conditioned on $S_{1:n}$, the expected value of the V-statistic is
\begin{align*}
    \Delta_n^2 &\coloneq \E\left[ \widehat{\mathrm{MMD}}_{v,\kparam}^2(Z_{1:{n}},{Z}_{1:{n}}^{(0)}) \;\bigg|\; S_{1:n} \right]  \\
    &= \frac{1}{n^2} \sum_{i,j} \E\left[k_{\kparam}(Z_i,Z_j) + k_{\kparam}(Z^{(0)}_i,Z^{(0)}_j) - 2 k_{\kparam}(Z_i,Z^{(0)}_j)\mid S_{1:n}\right] \nonumber \;.
\end{align*}
If $\Delta_n$ is sufficiently large, we expect the test to have power against non-equivariant alternatives. To state the formal result, let $\supk^2 = \sup_z k_{\kparam}(z,z) < \infty$, and define for $\eta > 0$,
\begin{align} \label{eq:Rn:def}
C_n(\eta) \define \supk\left(6\sqrt{\frac{2}{n}} + 3\sqrt{\eta}\sqrt{\frac{1}{n}+8\sqrt{\frac{2}{n^3}}} + \frac{4\eta}{n}\right) + \supk^2\left(\frac{9}{n} + 72\sqrt{\frac{2}{n^3}}\right) \;.
\end{align}
Because we analyze the test that conditions on $S_{1:n}$, probabilistic properties of the test can be said to hold conditionally (on $S_{1:n}$), almost surely (with respect to $S_{1:n}$). Analysis of bootstrap procedures relies on the same ideas \citep[e.g.,][Ch.\ 23]{vaart_1998}. 

\begin{theorem} \label{thm:mmd:power}
    Let $T$ be $\widehat{\mathrm{MMD}}_{v,\kparam}$, with kernel $k_{\kparam}$ and corresponding RKHS $\hilbert_{k_{\kparam}}$, using exact randomization in the CRT.
    For any $\eta>0$, if
    \begin{align} \label{eq:emmd:condition}
    \EMMD > 2\supk\left(2\sqrt{\frac{2}{n}} + \sqrt{\eta}\sqrt{\frac{1}{n}+8\sqrt{\frac{2}{n^3}}} + \frac{4\eta}{3n}\right) \;,
    \end{align}
    then for almost every sequence $S_{1:n}$, conditionally on $S_{1:n}$, with probability at least $1-e^{-\eta}$ over observations $Z_{1:n}$ and randomization ${Z}^{(0)}_{1:n}$,  
    \begin{align} \label{eq:power:cdf:lower:bound}
        \bbP(p_B \leq \sig \mid S_{1:n}) \geq F_{B, p_{\eta}(\Delta_n)}(\lfloor \sig (B+1) - 1\rfloor) \;,
    \end{align}
    where $F_{B, p(\Delta_n)}$ is the cumulative distribution function for the binomial distribution of $B$ trials with success probability 
    \begin{align} \label{eq:success:prob}
        p_\eta(\EMMD) = \min\left\{1, \exp\left(-\frac{3}{2}\EMMD + C_n(\eta)\right) \right\} \;.
    \end{align}
\end{theorem}

The proof can be found in \cref{apx:power:proofs}. 
Observe that for fixed $\alpha$, $B$, and $n$, the lower bound in \eqref{eq:power:cdf:lower:bound} is increasing in $\Delta_n$.  
Because multiple approximations were used for the sake of deriving explicit bounds, a tighter lower bound seems possible for \eqref{eq:success:prob}, but we do not pursue that here. 
A lower bound on values of $\Delta_n$ that correspond to alternatives for which the test with high probability rejects the null with conditional probability at least $1-\beta$ can be found in \cref{apx:lower:bound:alternative}.

\subsection{Adaptive kernel test statistics}

It is known that the power of kernel MMD-based tests is highly dependent on the chosen kernel parameters, such as the bandwidth $\kparam$ \citep{Ramdas:2015}.
Several solutions have been proposed, in the form of data-based heuristics \citep{Gretton:2012,Garreau:2017} and, more recently, adaptive methods \citep{Schrab:2023,Biggs:2024,Carcamo:2024}. 
In our experiments, when using a kernel with a parameter (e.g., the Gaussian kernel), we use two test statistics based on solutions proposed in the {adaptive kernel} literature.
The first adapts the unnormalized {FUSE} statistic
\begin{align*}
\widehat{\mathrm{FUSE}}_\omega(Z_{1:n},Z_{1:n}') = \frac{1}{\omega}\log\left(\frac{1}{L}\sum_{\ell=1}^L\exp\left(\omega\widehat{\text{MMD}}^2_{\kparam_\ell}(Z_{1:n},Z_{1:n}')\right)\right) \;,
\end{align*}
where, following \citet{Biggs:2024}, $\omega=\sqrt{n(n-1)}$. The statistic can be interpreted as taking a softmax across estimated MMD$^2$ values computed using different kernels $k_{\kparam_\ell}$.
The bandwidth values $\{\kparam_1,\dotsc,\kparam_L\}$ in the original work \citep{Biggs:2024} are quantiles of the pairwise Euclidean distances in the data. In the context of a permutation test, the empirical measure of the data is a maximal invariant, so any function of it (such as quantiles) can be used in the context of a test for permutation-invariance, as in \cref{sec:bg:testing:invariance}. 
Our adaptation to CRTs for conditional group symmetry, which satisfies the requirements of \cref{prp:crt:validity}, is obtained by replacing the quantiles of the observation with quantiles of variables contained in the conditioning set $S_{1:n}$, such as the pairwise Euclidean distances in $\tY_1,\dotsc,\tY_n$.

The second test statistic we use is an adaptation of the \emph{supremum kernel distance} (SK, \citealt{Carcamo:2024}) given by
\begin{align*}
\widehat{\mathrm{SK}}(Z_{1:n}, Z_{1:n}') = \sup_{\ell\in\{1,\dotsc,L\}}\widehat{\text{MMD}}^2_{\kparam_\ell}(Z_{1:n}, Z_{1:n}') \;.
\end{align*}
Similar to the FUSE statistic above, $\kparam_{\ell}$ can depend on $S_{1:n}$, and using this procedure remains valid due to the conditioning argument in \cref{prp:crt:validity}. 

Analogous versions of the power lower bound in \cref{thm:mmd:power} hold for the FUSE and SK statistics. For brevity, their statements and proofs can be found in \cref{apx:power:proofs}.

\section{\label{sec:experiments}Experiments}

We evaluate our proposed tests on synthetic examples and on realistic state-of-the-art simulated data used in particle physics. 
In all of our experiments, we sample $n$ data points from a distribution or dataset and perform a test for conditional symmetry with respect to a specified group.
We repeat this procedure over $N=1000$ simulations for each test setting and report the proportion of simulations in which the test rejected, which is an estimate of the test size (resp.\ power) when the distribution has (resp.\ does not have) the specified conditional symmetry.
With $N=1000$ simulations, estimates are precise up to approximately $\pm0.016$.
We set the test level to be $\sig=0.05$ in all of our experiments.
Except where specified otherwise, we use a single randomized comparison set (\texttt{reuse = true}) for estimating the $p$-value in our tests.

We default to aggregating Gaussian and Laplace kernels via the FUSE and SK test statistics for continuous data. 
To compute the bandwidths $\{\kparam_1,\dotsc,\kparam_L\}$, we follow a procedure similar to the one described by \citet{Biggs:2024}.
We first compute the distances $\{\|\tY_i-\tY_j'\|_2\}_{i,j}$ and $\{\|\tY_i-\tY_j'\|_1\}_{i,j}$ for the Gaussian and Laplace kernels, respectively.
We then compute half of the 5\% quantile and two times the 95\% quantile for each set, and take the $L=10$ bandwidth values to be the discrete uniform interpolation of those two values. All estimates of MMD use the U-statistic.

All experiments were implemented in Julia (version 1.6.1) and run on a high-performance computing allocation with 4 CPU cores and 16 GB of RAM.
Additional experiments and experimental details can be found in \cref{apx:exp}, including an experiment showing how our tests for conditional symmetry can be used for checking learned symmetry in predictive models (\cref{sec:experiments:MNIST}).
Associated computer code can be found on GitHub.\footnote{\url{https://github.com/chiukenny/Tests-for-Conditional-Symmetry}}

\subsection{\label{sec:exp:tests}Experimental configurations}

In our experiments involving tests for conditional symmetry, we consider up to four tests, each of which corresponds to a particular configuration of the following parameters:
\begin{enumerate}
    \item Permutation test as a baseline (\nBS; \cref{sec:rand:perm}) versus conditional randomization test (\nRN; \cref{sec:rand}).
    \item FUSE test statistic (\nFS) versus supremum kernel test statistic (\nSK).
\end{enumerate}
For example, the permutation test that uses the FUSE test statistic is denoted \nBSFS. 
We also consider a similar baseline when conducting a test for joint invariance as a test for conditional symmetry, where the samples used to estimate the $p$-value are generated via permutations rather than group actions from $\haar$.
In this setting, we evaluate two tests: the permutation test described here (\nBS) and the group-based randomization test (\nRN; \cref{sec:bg:testing:invariance}).

\subsection{Synthetic experiments}
\label{sec:exp:mvn}

We start with several synthetic experiments involving simple distributions for the marginal of $X$ and the conditional $Y|X$.
These experiments will be used to compare and demonstrate the properties of the different tests for conditional symmetry.
In all experiments described in this section, we use $B=100$ randomizations for estimating the $p$-value, chosen for computational convenience due to the large number of experimental configurations evaluated. 
In practice, we recommend using a larger number of randomizations to minimize the risks associated with performing random inference (see \cref{apx:exp:risk}).

\subsubsection{\label{sec:exp:gauss}Gaussian data and rotation-equivariance}

\begin{figure}[tbp]
    \centering
    \includegraphics[trim={29 26 0 0},clip,width=\textwidth]{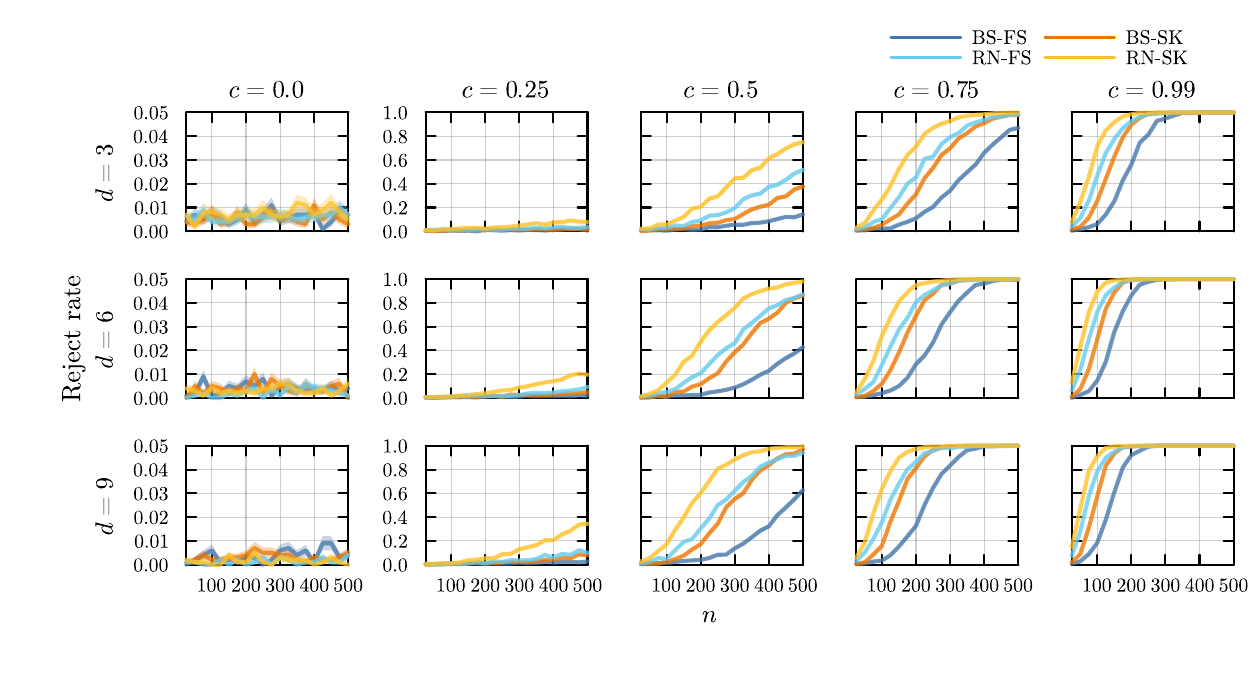}
    \caption{Rejection rates for tests that reuse a comparison set in the changing covariance experiment (\cref{sec:exp:mvn}) for varying values of the data dimension $d$, $Y|X$ conditional covariance matrix off-diagonal values $c$, and sample size $n$.
    The conditional distribution is $\SO{d}$-equivariant only for $c=0$. (Note the different vertical scale for $c = 0$ versus $c > 0$.)}
    \label{fig:gauss:equiv:cov}
\end{figure}

Consider random variables $X$ and $Y$ in $\R^d$. Let $X$ have a marginal distribution given by $P_X=\gaussian(0,\Sigma_{X,d})$, where $\Sigma_{X,d}$ is the $d\times d$ matrix with $1$'s on the diagonal and $0.75$ on the off-diagonals.
Let $Y|X$ have the conditional distribution $\gaussian(X,\Sigma_{Y,d}(c))$, where $\Sigma_{Y,d}(c)$ is the $d\times d$ matrix with $1$'s on the diagonals and $c \in [0,1)$ on the off-diagonals.
In words, the distribution of $Y|X$ is a multivariate normal centered on $X$.
For $c=0$, the distribution is spherical; as $c$ moves towards $1$, the distribution becomes more elliptical and eventually line-like.
Under the action of $\SO{d}$, the conditional distribution is $\SO{d}$-equivariant if and only if $c=0$.

\begin{figure}[tp]
    \centering
    \includegraphics[trim={26 25 0 0},clip,width=0.9\textwidth]{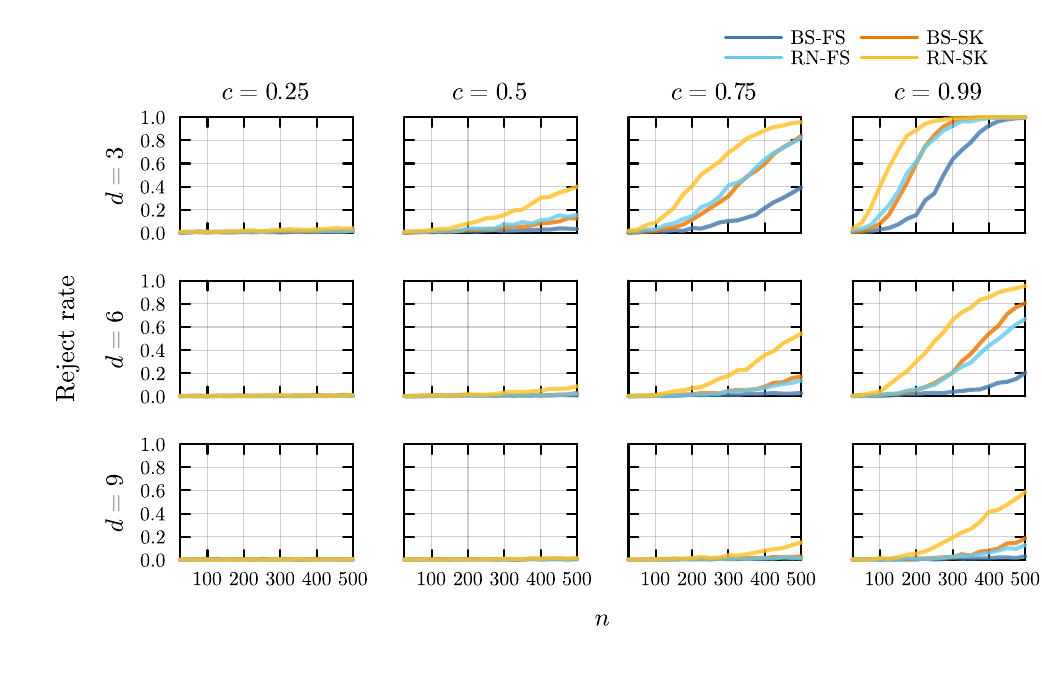}
    \caption{Rejection rates for tests that reuse a comparison set in the changing covariance experiment (\cref{sec:exp:mvn}), where the non-$\SO{d}$-equivariance comes from the covariance $c$ between the first two dimensions, for varying values of the data dimension $d$ and sample size $n$. (Note the different vertical scale for $c = 0$ versus $c > 0$.)}
    \label{fig:gauss:nonequiv:cov}
\end{figure}

To test for $\SO{d}$-equivariance on data generated from these distributions, we condition on the maximal invariant $\maxInv(X)=\|X\|$.
\Cref{fig:gauss:equiv:cov} shows the estimated test rejection rates for various values of $c$ and $d$ as the sample size $n$ increases from $25$ to $500$.
The randomization tests appear to be more powerful than the baseline tests in general.
We also observe that increasing the number of dimensions leads to an increase in power, because each dimension provides additional signal for departures from the null hypothesis. 
In contrast, consider a similar setup where the only non-zero off-diagonal entry of $\Sigma_{Y,d}$ is between the first and second dimensions.
That is, the non-equivariance of the distribution is isolated to the first two dimensions.
\Cref{fig:gauss:nonequiv:cov} shows the rejection rates: the power decreases with dimension, which suggests that the non-equivariance signal-to-noise ratio decreases with additional ``equivariant dimensions.''

\subsubsection{\label{sec:exp:synth:truth}Approximate versus exact conditional sampling}

Consider a setting where $X$ is generated in the following way. Let $Z\sim\chi_1$ be a chi random variable with $1$ degree of freedom, and take $X\coloneq Z\G\e_1$, where $\G|Z,\eps \sim \repInvKern(\argdot| Z\onevec_d+\eps)$ is a $d$-dimensional rotation matrix, with $\onevec_d$ being the vector of ones of length $d$ and $\eps\sim\gaussian(0,\I_d)$. Here, the conditional distribution $\GKern$ is known and can be sampled from exactly. 
Let $Y|X$ have the same conditional distribution described in the previous section, with $c$ being the covariance between any pair of dimensions of $Y$, so that the conditional distribution of $Y|X$ is $\SO{d}$-equivariant if and only if $c=0$.

For $d\in\{3,4,5\}$, we compare the performance of tests that sample from an approximation of $\GKern$ as per \cref{alg:cond:rand} with tests that sample from $\GKern$ exactly.
The test results are shown in \cref{fig:exp:truth}.
We see that the tests equipped with exact conditional sampling appear to have level-$\sig$ (within reasonable variation due to sampling).
Moreover, the tests with approximate sampling are comparable to those tests in terms of power, though the approximation also appears to make the tests slightly conservative.
This effect is also reflected in \cref{fig:exp:truth:reuse:p}, which shows the estimated $p$-value distributions for these tests.
In particular, in the null (equivariant) setting $(c=0)$,
the estimated $p$-value distribution for the tests with exact sampling appears to be approximately uniform, and the tests with approximate sampling appear to be less uniform with slightly more mass on higher values and correspondingly lower mass on lower values.
In \cref{apx:exp:bw}, we further investigate the connection between the approximation and the properties of the test; in particular, we find that Silverman's rule for the bandwidth of the kernel used in the conditional randomization produces a conservative test in this setting, and that larger bandwidths can increase power at the risk of inflating Type I error rates.

\begin{figure}[tbp]
    \centering
    \includegraphics[trim={29 26 0 14},clip,width=\textwidth]{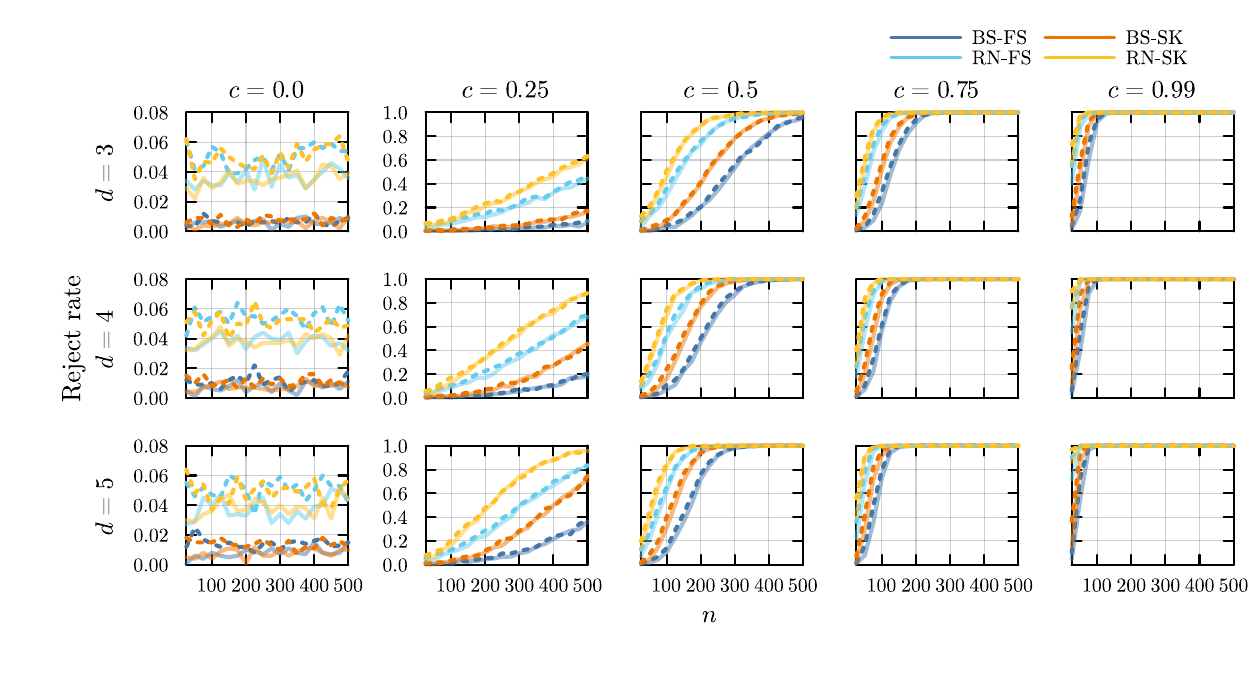}

    \caption{Rejection rates for $\SO{d}$-equivariance tests
    for varying values of the $Y|X$ covariance matrix off-diagonal values $c$ and sample size $n$. The dotted lines represent tests that sample from the true conditional $\GKern$, and the solid lines represent tests that sample from an approximation.  (Note the different vertical scale for $c = 0$ versus $c > 0$.)}
    \label{fig:exp:truth}
\end{figure}

\begin{figure}[tbp]
    \centering
    \includegraphics[trim={24 22 0 8},clip,width=0.9\textwidth]{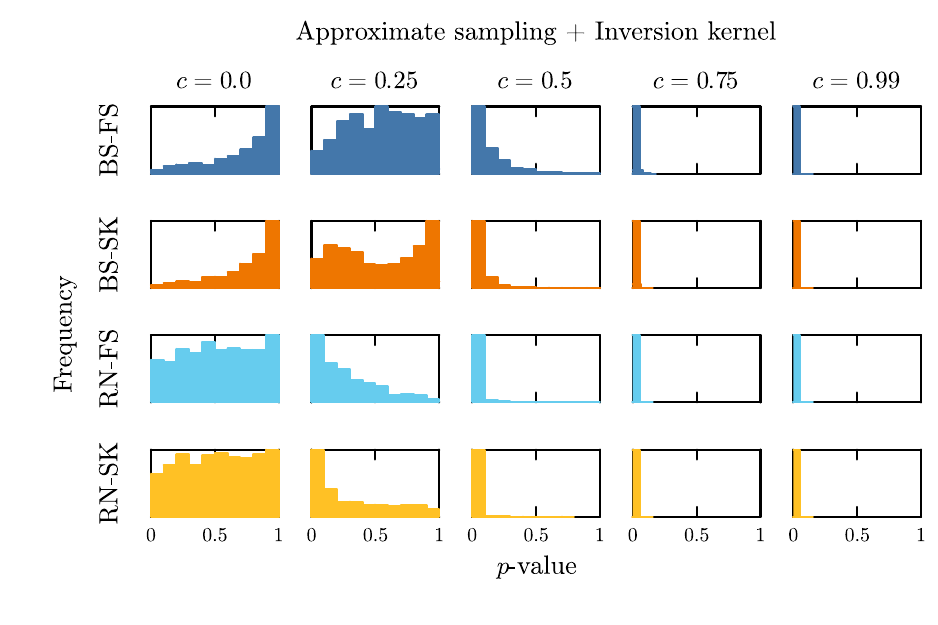}
    \vspace{1em}

    \includegraphics[trim={24 22 0 8},clip,width=0.9\textwidth]{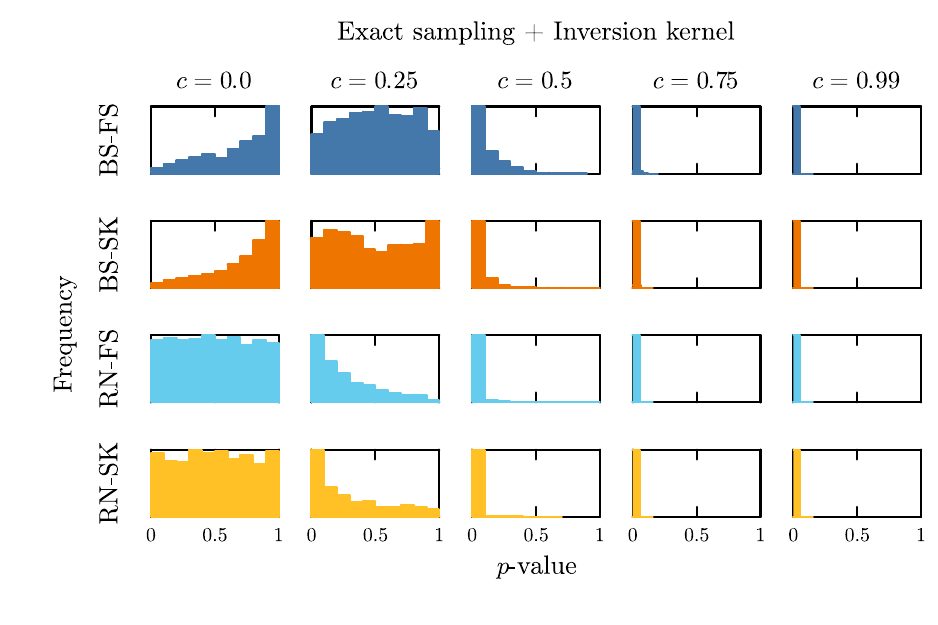}
    \caption{Histograms showing the distribution of $1000$ $p$-values of the $\SO{3}$-equivariant tests in the approximate versus exact sampling experiment at sample size $n=250$. Parameter $c$ is the covariance between any two dimensions.}
    \label{fig:exp:truth:reuse:p}
\end{figure}

\subsubsection{\label{sec:exp:other}Additional experiments}

Additional synthetic experiments are found in \cref{apx:exp}. 
\cref{apx:exp:mvn} compares tests that use independent comparison sets to those in \cref{sec:exp:gauss} that reuse comparison sets. It finds that in addition to being less computationally expensive, reused comparison sets appear to result in a slight increase in power. 
\Cref{apx:exp:risk} explores the risks of using randomization-based inference using measures proposed by \citet{stoepker2024inference}; there, we find that the Monte Carlo estimation of the $p$-value does not greatly interfere with the outcome of the test but does potentially result in $p$-values that are less interpretable.
\Cref{apx:exp:rep:inv} examines how using a deterministic representative inversion (if it exists) when the group action is not free can still yield a level-$\sig$ test.
\Cref{apx:exp:comp} investigates the runtimes of several variations of the FUSE test and shows that our tests are not prohibitively expensive; in particular, it suggests that setting $B$ to a value large enough to ensure accurate Monte Carlo estimates requires only a modest computational budget.
\Cref{apx:exp:mvn:sens,apx:exp:synth:perm} consider different experimental contexts where rotation-equivariance is lost due to changes in the mean as opposed to in the covariance, and where permutation-equivariance is the symmetry of interest, respectively.
In both experiments, our tests detect the alternative setting with large enough sample sizes.

\subsection{Large Hadron Collider dijet events} \label{sec:exp:LHC}

Our first application demonstrates how our tests could be used to verify symmetry presumed by theory and to detect possible instances of symmetry breaking for scientific discovery.
The Large Hadron Collider (LHC) Olympics 2020 dataset \citep{Kasieczka:2021,lhc:olympics:data} consists of 1.1 million simulated dijet events generated by PYTHIA \citep{Bierlich:2022aa}, a widely-used Monte Carlo simulator for high-energy physics processes.
A \emph{dijet} is two jets of constituent particles produced from the decay of high-energy quarks following a particle collision.  
The transverse momentum $p_T$ and polar angle $\phi$ 
for up to 200 constituents were recorded for each jet; the Cartesian momentum in the transverse plane for each constituent can be computed from these values as the pair $(p_1 = p_\text{T}\cos(\phi), p_2 = p_\text{T}\sin(\phi))$.
The \emph{leading constituents} of a jet are the particles with the largest transverse momenta in any direction.

\begin{table}[tb]
\centering
\caption{\label{tab:lhc:equiv}Rejection rates (over $1000$ repetitions) for tests for $\SO{2}$-equivariance on LHC dijet data samples of size $n=25$.}
\begin{tabular}{@{\extracolsep\fill}rrrrr@{\extracolsep\fill}}
\toprule
& \multicolumn{4}{@{}c@{}}{Test} \\
\cmidrule(l){2-5}
& \nBSFS & \nBSSK & \nRNFS & \nRNSK \\
\midrule
Original data & $0.003$ & $0.001$ & $0.000$ & $0.000$ \\
\midrule
Modified data & $0.694$ & $0.665$ & $0.796$ & $0.817$ \\
\bottomrule
\end{tabular}
\end{table}

By conservation of angular momentum, the Cartesian momenta of one leading constituent, $Y=(p_{1,Y},p_{2,Y})$, should be equivariant with respect to $\SO{2}$ rotations of the other, $X=(p_{1,X},p_{2,X})$.
We conduct tests for $\SO{2}$-equivariance on random samples of size $n=25$ drawn from the full data.
The estimated rejection rates based on $B=1000$ randomizations are shown in \cref{tab:lhc:equiv}.
We find that all tests achieve an average rejection rate of no more than $0.005$, suggesting the presence of $\SO{2}$-equivariance, which is consistent with physical theory.
As a check for power, we also consider tests for $\SO{2}$-equivariance on samples of the same size, where
the momenta of $Y$ are replaced by $Y'=(|p_{1,Y}|,p_{2,Y})$, breaking the conditional equivariance.
The tests all reject with a power greater than $0.65$, which suggests the absence of symmetry in this setting.
Here, note that an equivalent test for $\SO{2}$-equivariance is a test for joint $\SO{2}$-invariance.
In \cref{apx:exp:lhc}, we demonstrate what a test for joint invariance on these data would look like.

\subsection{Lorentz symmetry of quark decays}
\label{sec:experiments:top:quark}

We consider a second particle physics application based on the Top Quark Tagging Reference dataset \citep{Kasieczka:2019}, which also consists of jet events simulated by PYTHIA. 
The original dataset was constructed for the task of classifying jet events as having decayed from a top quark or not and consists of a training, validation, and test set.
The four-momenta $x=(E,p_1,p_2,p_3)$ of up to 200 jet constituents ordered by transverse momentum (i.e., highest magnitude momentum in the plane perpendicular to the particle beam line) are recorded for each event.
We only use the test set, which contains 404,000 simulated events, and ignore the binary label that indicates whether the jet decayed from a top quark or not.
According to the Standard Model, the four-momenta of the constituents in a jet should be equivariant with respect to the Lorentz group $\lorentz$, which was described in \cref{expl:lorentz,expl:lorentz:cont}, and preserves the quadratic form $Q(x) = E^2 - p_1^2 - p_2^2 - p_3^2$.

We take $X$ to be the four-momentum of the leading constituent in each jet, and $Y$ to be that of the second leading constituent \citep[as in][]{Yang:2023aa}. 
We conduct tests for $\lorentz$-equivariance of $P_{Y\mid X}$ based on samples of size $n=100$ using the maximal invariant $Q(x)$.
The estimated rejection rates based on $B=1000$ randomizations are shown in \cref{tab:tqt}.
The tests all reject $\lorentz$-equivariance at a rate of less than $0.01$, indicating symmetry consistent with the Standard Model.
We check for power by permuting the second leading constituents, breaking the dependence and thus the conditional equivariance.
In this setting, the conditional randomization tests are able to detect the breaking of $\lorentz$-equivariance with reasonable power. The baseline permutation tests are not as powerful in this setting, demonstrating the statistical benefit of using the group structure in the randomization scheme.

\begin{table}[tb]
\centering
\caption{\label{tab:tqt}Rejection rates (over $1000$ repetitions) for tests for $\lorentz$-equivariance on quark decay data samples of size $n=100$.}
\begin{tabular}{@{\extracolsep\fill}rrrrr@{\extracolsep\fill}}
\toprule
& \multicolumn{4}{@{}c@{}}{Test} \\
\cmidrule(l){2-5}
& \nBSFS & \nBSSK & \nRNFS & \nRNSK \\
\midrule
Original data & $0.000$ & $0.002$ & $0.001$ & $0.002$ \\
\midrule
Shuffled data & $0.292$ & $0.175$ & $0.812$ & $0.768$ \\
\bottomrule
\end{tabular}
\end{table}

\begin{appendix}

\section{Additional background on groups}
\label{apx:group:background}

A \textit{group} $\grp$ is a set with a binary operation~$\cdot$ that satisfies the associativity, identity, and inverse axioms. We denote the identity element by $\id$. 
For notational convenience, we write $g_1g_2=g_1\cdot g_2$ for $g_1,g_2\in\grp$. 
A group $\grp$ is said to be measurable if the group operations $g \mapsto g^{-1}$ and $(g_1,g_2) \mapsto g_1 g_2$ are $\bfS_{\grp}$-measurable, where $\bfS_{\grp}$ is a $\sigma$-algebra of subsets of $\grp$. 
For $A\subseteq \grp$ and $g \in \grp$, we write $gA = \{ gh : h \in A\}$ and $Ag = \{ hg : h \in A \}$. A measure $\nu$ on $\grp$ is said to be \textit{left-invariant} if $\nu(gA) = \nu(A)$ for all $A \in \bfS_{\grp}$, and \textit{right-invariant} if $\nu(Ag) = \nu(A)$. 
When $\grp$ is lcscH, there exist left- and right-invariant $\sigma$-finite measures $\haar_{\grp}$ and $\rhaar_{\grp}$, known as \textit{left-} and \textit{right-Haar measures}, respectively, that are unique up to scaling~\citep[Ch.~2.2]{Folland:2016}.

A \textit{group action} is a function $\grpAct\colon \grp\times\bfX \to \bfX$ that satisfies $\grpAct(\id,x) = x$ and $\grpAct(g_1 g_2, x) = \grpAct(g_1,\grpAct(g_2,x))$ for all $x \in \bfX, g_1,g_2\in \grp$. 
A group $\grp$ acts measurably on a set $\bfX$ if the group action $(g,x) \mapsto gx$ is measurable relative to $\bfS_{\grp} \otimes \bfS_{\bfX}$ and $\bfS_{\bfX}$. 
Throughout, all actions are assumed to be continuous (and therefore measurable), and for convenience we simply say that $\grp$ acts on $\bfX$, writing $gx=\grpAct(g,x)$. 
This standard notation hides the particular action $\grpAct$ associated with $gx$, and it is important not to lose track of that fact. 

The \textit{stabilizer subgroup} of $x\in\bfX$ is the subset of group actions that leave $x$ invariant, denoted $\grp_x = \{ g \in \grp : gx = x \}$.
The action is \textit{free} (or \textit{exact}) if $gx = x$ implies that $g = \id$, in which case $\grp_x = \{\id\}$ for all $x \in \bfX$. 
The \textit{$\grp$-orbit} of $x \in \bfX$ is the subset of all points in $\bfX$ reachable through a group action in $\grp$, denoted $\orbit(x)=\{gx : g\in\grp\}$.
Each orbit is an equivalence class of $\bfX$, where two points are equivalent if and only if they belong to the same orbit.
The stabilizer subgroups of the elements of an orbit are all conjugate; that is, if $gx = x'$ for $x \neq x'$, then $g \grp_x g^{-1} = \grp_{x'}$.
If $\bfX$ has only one orbit, then the action is said to be \textit{transitive}.

\subsection{Proper group actions} \label{apx:proper:action}

In the analysis of probabilistic aspects of group actions, measurability issues can arise without regularity conditions.
The key regularity condition that we assume is that the group action is \emph{proper}: there exists a strictly positive measurable function $h \colon \bfX \to \R_+$ such that $\int_\grp h(gx) \haar(dg) < \infty$ for every $x \in \bfX$ \citep{Kallenberg2007}.
This definition of proper group action is a non-topological version of the definition commonly used in previous work in the statistics literature \citep[e.g.,][]{Eaton:1989,WIjsman:1990,McCormack2023}, and only requires the existence of Haar measure. The common topological version defines $(g,x) \mapsto (gx, x)$ as a proper map, i.e., the inverse image of each compact set in $\bfX\times\bfX$ is a compact set in $\grp \times \bfX$. That definition implies the one we use; see \cite{Kallenberg2007} for details.  

A sufficient condition for proper group action is that $\grp$ is compact and acts continuously on $\bfX$.
When $\grp$ is non-compact, a group action can fail to be proper if $\grp$ is ``too large'' for $\bfX$ in the sense that the stabilizer subgroups are non-compact. A class of non-compact group actions known to be proper are those of the isometry group of a non-compact Riemannian manifold. 
In this work, we rely on the assumption of proper group actions to guarantee the existence of measurable orbit selectors and inversion kernels. The proposition below compiles some known useful properties of these objects.

To state the proposition, let $\nu$ be any bounded measure on $(\bfX,\bfS_{\bfX})$ and let $\bar{\bfS}^{\nu}_{\bfX}$ be the completion of $\bfS_{\bfX}$ to include all subsets of $\nu$-null sets, and denote by $\bar{\nu}$ the extension of $\nu$ to $\bar{\bfS}^{\nu}_{\bfX}$ \citep[see, e.g.][Proposition~1.3.10]{Cinlar_2011}.
All statements of $\bar{\nu}$-measurability in the following proposition are with respect to $\bar{\bfS}^{\nu}_{\bfX}$, so that a set $A\subseteq\bfX$ is $\bar{\nu}$-measurable if $A \in \bar{\bfS}^{\nu}_{\bfX}$.
Moreover, a function defined by a particular property is $\bar{\nu}$-measurable if it is measurable in the usual sense with respect to $\bar{\bfS}^{\nu}_{\bfX}$, and if the defining property holds with the possible exception of a $\bar{\nu}$-null set. 
Such a function would also be $\bar{\upsilon}$-measurable for any measure $\upsilon \ll \nu$. 

\begin{proposition} \label{prop:orbsel:inv}
    Let $\grp$ be a lcscH group acting continuously and properly on $\bfX$, and $\nu$ any bounded measure on $\bfX$.
    Then the following hold:
    \begin{enumerate}
        \item The canonical projection $\pi \colon \bfX \to \bfX/\grp$ is a measurable maximal invariant. Any measurable $\grp$-invariant function $f \colon \bfX \to \bfY$ can be written as $f = f^*\circ \pi$, for some measurable $f^* \colon \bfX/\grp \to \bfY$, and $f^*$ is bijective if and only if $f^*\circ\pi$ is a measurable maximal invariant. 

        \item There exists a $\bar{\nu}$-measurable orbit selector $\orbSel \colon \bfX \to \bfR$, which is a maximal invariant statistic, and it induces a $\bar{\nu}$-measurable cross-section $\bfR$. 

        \item For a fixed $\bar{\nu}$-measurable orbit selector $\orbSel$, there exists a unique $\bar{\nu}$-measurable inversion probability kernel $\repInvKern \colon \bfX \times \bfS_{\grp} \to [0,1]$ with the following properties:
            \begin{enumerate}
                \item $\repInvKern$ is $\grp$-equivariant: For all $g \in \grp, x \in \bfX, B \in \bfS_{\grp}$, $\repInvKern(B| gx) = \repInvKern(g^{-1} B| x)$. 

                \item For each $x \in \bfX$, $\repInvKern(\argdot| \orbSel(x))$ is normalized Haar measure on the stabilizer subgroup $\grp_{\orbSel(x)}$.

                \item For each $x \in \bfX$, if $\repInvRand|X=x\sim \repInvKern$, then $\repInvRand \orbSel(x) = x$ with probability one. 

                \item If there is a $\bar{\nu}$-measurable representative inversion $\repInv \colon \bfX \to \grp$ associated with $\orbSel$ such that it satisfies $\repInv(x)\orbSel(x) = x$  and $\repInv(gx) = g\repInv(x)$ for each $x \in \bfX, g \in \grp$, then $\repInvKern'(B| x) = \repInvKern(\repInv(x)^{-1}B| \orbSel(x))$ is an equivalent inversion kernel.
                In particular, this holds when the action of $\grp$ on $\bfX$ is free, in which case $\grp_{\orbSel(x)} = \{\id \}$ and the inversion kernel is $\delta_{\repInv(x)}$. 
            \end{enumerate}

    \end{enumerate}
\end{proposition}

The measurability of the canonical projection is a result from functional analysis; see \citet[][Theorem~5.4]{Eaton:1989} for an extended statement and references. One implication is that $\pi$ generates the invariant $\sigma$-algebra on $\bfX$, so that every invariant function can be written as a measurable function of it. 
Items 2--3c follow directly from results of \citet{Kallenberg2011:skew,Kallenberg:2017} on the existence of universally measurable versions of $\orbSel$ and $\repInvKern$.
Item 3d follows from 3a and 3b.

We rely heavily on the group-orbit kernel defined in \eqref{eq:GKern:def}, where the following is claimed. 

\begin{lemma} \label{lem:GKern}
    Let $\grp$ act properly on $\bfX$, $\GKern$ be as defined in \eqref{eq:GKern:def}, and $P_X$ be any probability measure on $\bfX$. If $X\sim P_X$, then  $(\orbSel(X),X) \equdist (\orbSel(X), \G \orbSel(X))$.
\end{lemma}
\begin{proof}
    For any measurable $f\colon \bfR \times \bfX \to \R$, by 3(c) of \cref{prop:orbsel:inv}, standard disintegration of measure, and Fubini's theorem, 
    \begin{align*}
        \int f(\orbSel(x),x) P_X(dx) &= \int f(\orbSel(x),\repInvRand \orbSel(x)) \repInvKern(d\repInvRand\mid x) P_X(dx) \\
        & = \int f(\orbRep,\repInvRand \orbRep) \repInvKern(d\repInvRand\mid x') P_{X \mid \orbRepRand }(dx'\mid\orbRep) (\orbSel_* P_X)(d\orbRep) \\
        & = \int f(\orbRep,\gamma \orbRep) \GKern(d\gamma\mid\orbRep) (\orbSel_* P_X)(d\orbRep) \\
        & = \int f(\orbSel(x),\gamma \orbSel(x)) \GKern(d\gamma\mid\orbSel(x)) P_X(dx) \;.
    \end{align*}
\end{proof}

\section{Further background on testing for distributional invariance} 
\label{apx:bg:testing}

As noted in \cref{sec:bg:testing:invariance}, the literature appears to be missing a proof of the conditional level for the randomization test in the most general setting of a compact group acting properly, with possible non-trivial stabilizer subgroups. 

For simplicity, we assume that the probability of ties is zero. If that is not the case (i.e., if $\testStat$ is supported on a discrete set), then a modified version with randomized tie-breaking yields a level-$\sig$ test \citep[e.g.,][]{Hemerik2018}.

\begin{theorem} \label{thm:mc:test:p}
    Let $\grp$ be a compact group that acts on $\bfX$. Let $\maxInv$ be any maximal invariant. Assume that $X_{1:n}\simiid P$. 
    For test statistic $\testStat$ and the randomization test in \cref{alg:group:test:new}, assume that $\bbP\{T(X^{(b)}_{1:n}) = T(X^{(b')}_{1:n})\}=0$ for $b \neq b'$. 
    For any fixed $B \in \mathbb{N}$ and $\sig \in [0,1]$, if $P$ is $\grp$-invariant then
    \begin{align*}
        \bbP\left(p_B \leq \sig \mid \maxInv(X)_{1:n}\right)  = \frac{\lfloor \sig (B+1) \rfloor}{B+1} \leq \sig \;, \quad P\text{-a.s.}
    \end{align*}
\end{theorem}

The high-level idea of the proof is as follows. 
It is known that any maximal invariant  is a sufficient statistic for $\invProbs$, the set of $\grp$-invariant probability distributions on $\bfX$ 
\citep{farrell:1962,dawid:1985,BloemReddy:2020}. The use of sufficiency greatly simplifies the subsequent technical details, allowing for the application of standard results for Monte Carlo tests, and also exposing randomization tests for invariance to be co-sufficient sampling methods \citep{Barber:2022}. 

The proof of \cref{thm:mc:test:p} relies on two auxiliary results. The first is a characterization of distributional invariance, which is well-studied in  probability \citep{Kallenberg:2005}. See \cite{Eaton:1989} for an accessible treatment of the basic results and some of their applications in classical statistics. We summarize two characterizations of invariance that are needed below. 

\begin{lemma} \label{thm:invariant_characterizations}
Let $\grp$ be a compact group acting on $\bfX$ and $P$ a probability measure on $\bfX$.
Let $\orbSel$ be a measurable orbit selector. 
With $X \sim P$, the following are equivalent. 
\begin{properties}[label=I\arabic*.,ref=I\arabic*]
    \setcounter{propertiesi}{-1}
    \item $P$ is $\grp$-invariant. \label{prp:inv:def}
    \item If $G \sim \haar$ with $G \condind X$, then $X \equdist GX$. \label{prp:inv:random:G}
    \item If $G \sim \haar$ and $R \sim \orbSel_{*}P$ with $G\condind R$, then $X \equdist GR$.
     \label{prp:inv:generate}
\end{properties}
\end{lemma}
\begin{proof}
    Suppose that $P$ is $\grp$-invariant. Then for any measurable $f \colon \bfX \to \R$,
    \begin{align*}
        \int_{\bfX\times\grp} f(gx)  P(dx) \haar(dg) = \int_{\bfX\times\grp} f(x)  P(dx) \haar(dg) = \int_{\bfX} f(x)  P(dx) \;,
    \end{align*}
    which shows that \cref{prp:inv:def} implies \cref{prp:inv:random:G}. For the converse, assume that $X \equdist GX$. Then for any $h \in \grp$,
    \begin{align*}
        \int_{\bfX} f(hx)  P(dx) & = \int_{\bfX\times\grp} f(hgx)  P(dx) \haar(dg) \\
        & = \int_{\bfX\times\grp} f(gx)  P(dx) \haar(dg) \\
        & = \int_{\bfX} f(x)  P(dx)  \;,
    \end{align*}
    where the second equality follows from the invariance of Haar measure. 

    The proof that \cref{prp:inv:generate} is equivalent to \cref{prp:inv:def} is more involved. An accessible proof can be found in \citet[][Theorems 4.3--4.4]{Eaton:1989}; see also \citet[][Theorem 7.15]{Kallenberg:2017}. 
\end{proof}

We note that \cref{prp:inv:generate} holds even conditionally on $\orbSel(X)$. That is,  
\begin{align*}
    (\orbSel(X), X, G) \equdist (\orbSel(X), G\orbSel(X), G) \;,
\end{align*}
which implies that ${X|\orbSel(X) \equdist G\orbSel(X)|\orbSel(X)}$.

One implication of this is that $\orbSel(X)$ is a sufficient statistic for the set of $\grp$-invariant probability measures on $\bfX$ \citep{farrell:1962,dawid:1985,BloemReddy:2020}. (In fact, any maximal invariant is sufficient.) To state the result precisely, for each $x \in \bfX$, let $\Phi_{\orbSel(x) *}\haar$ denote the pushforward of Haar measure by the group action applied to $\orbSel(x) \in \bfX$. 
\begin{corollary}
    Let $\grp$ be a compact group acting on $\bfX$, and $\invProbs$ be the set of $\grp$-invariant probability measures on $\bfX$. Then $\orbSel(X)$ is a sufficient statistic for $\invProbs$. That is, for each $P \in \invProbs$,
    \begin{align*}
        P(X \in A \mid \orbSel(X)) = \Phi_{\orbSel(X) *}\haar(A) \;, \quad A \in \bfS_{\bfX} \;,\quad P\text{-a.s.}
    \end{align*}
\end{corollary}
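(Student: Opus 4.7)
The plan is to read the claim directly off the conditional version of \cref{prp:inv:generate} in \cref{thm:invariant_characterizations}, and then to observe that the resulting regular conditional distribution does not depend on $P$. First, for any $P \in \invProbs$ and $X \sim P$, I would appeal to the joint identity stated in the remark immediately following \cref{thm:invariant_characterizations}, namely
\begin{align*}
(\orbSel(X), X) \equdist (\orbSel(X), G\,\orbSel(X)) \;, \quad G \sim \haar,\ G \condind X,
\end{align*}
which holds because $\orbSel$ is $\grp$-invariant and so $\orbSel(G\,\orbSel(X)) = \orbSel(X)$ almost surely on both sides; this promotes the equality in distribution $X \equdist G\,\orbSel(X)$ to a joint statement with $\orbSel(X)$ on both coordinates.

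Next, I would pass from this joint distributional equality to a statement about regular conditional distributions by disintegrating both sides along the common marginal $\orbSel_{*}P$. The conditional distribution of $X$ given $\orbSel(X) = x_0$ then coincides, for $\orbSel_{*}P$-almost every $x_0$, with the conditional distribution of $G\,\orbSel(X)$ given $\orbSel(X) = x_0$. Because $G \condind \orbSel(X)$ and $G \sim \haar$, this latter conditional distribution is simply the law of $G x_0$, which is the pushforward $\Phi_{x_0 *}\haar$.

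Finally, the key observation is that the kernel $x_0 \mapsto \Phi_{x_0 *}\haar$ depends only on $x_0$, not on $P$. Since the same Markov kernel serves as a regular conditional probability of $X$ given $\orbSel(X)$ for every $P \in \invProbs$, $\orbSel(X)$ is by definition sufficient for $\invProbs$, giving the desired identity $P(X \in A \mid \orbSel(X)) = \Phi_{\orbSel(X) *}\haar(A)$ $P$-almost surely. I do not anticipate a real obstacle: the only technical items to check are existence and a.s.-uniqueness of the disintegration, which follow from the standing standard-Borel assumption on $(\bfX, \bfS_{\bfX})$, and measurability of $\orbSel$, ensured by \cref{prop:orbsel:inv}.
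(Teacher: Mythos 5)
Your proposal is correct and follows essentially the same route as the paper: the paper presents this corollary as an immediate consequence of the conditional form of \cref{prp:inv:generate}, i.e.\ the identity $(\orbSel(X), X, G) \equdist (\orbSel(X), G\orbSel(X), G)$ noted just before the corollary, and then reads off that the disintegration kernel $x_0 \mapsto \Phi_{x_0 *}\haar$ is $P$-free. Your promotion of $X \equdist G\orbSel(X)$ to the joint statement via the invariance of $\orbSel$, followed by disintegration along $\orbSel_{*}P$, is exactly the argument the paper leaves implicit (citing Farrell, Dawid, and Bloem-Reddy--Teh), so there is nothing to add.
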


The second auxiliary result establishes a useful identity for exchangeable sequences of random variables, used extensively in the literature on randomization tests. 
A proof can be found in, e.g., \citet{Dufour:2006}. 
For simplicity, we assume that the probability of ties is zero, but that case can be handled with a randomized tie-breaking procedure described by \citet{Dufour:2006,Hemerik2018}. 

\begin{lemma}[\cite{Dufour:2006}, Proposition 2.2] \label{lem:dufour}
    Let $S_0, S_1,\dotsc,S_B$ be an exchangeable sequence of $\R$-valued random variables such that $\bbP(S_i = S_j) = 0$ for $i\neq j$, $i,j \in \{0,\dotsc,B\}$. Define
    \begin{align*}
        p_B = \frac{1 + \sum_{b = 1}^B \indi\{S_b \geq S_0\}}{B+1} \;.
    \end{align*}
    Then for any $\alpha \in [0,1]$,
    \begin{align*}
        \bbP(p_B \leq \alpha) = \frac{\lfloor \alpha (B+1) \rfloor}{B+1} \;.
    \end{align*}
\end{lemma}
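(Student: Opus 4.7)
The plan is to prove this by recognizing the numerator of $p_B$ as (essentially) the rank of $S_0$ among the full sequence $S_0, S_1, \dotsc, S_B$, and then using exchangeability plus the no-ties assumption to argue that this rank is uniformly distributed on $\{1, 2, \dotsc, B+1\}$.

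First, I would define $N \coloneqq 1 + \sum_{b=1}^B \indi\{S_b \geq S_0\}$, so that $p_B = N/(B+1)$, and observe that since $\bbP(S_i = S_j)=0$ for $i\neq j$, we have
\begin{align*}
    N \;\equas\; \sum_{b=0}^{B} \indi\{S_b \geq S_0\} \;,
\end{align*}
which is exactly the (almost sure) rank of $S_0$ counted from the top of the sequence $S_0, S_1, \dotsc, S_B$. The no-ties assumption is what collapses the various conventions for handling ties and lets me treat the rank as uniquely defined on a probability-one event.

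Next, by exchangeability of $(S_0, \dotsc, S_B)$, the joint law is invariant under any permutation of indices. Therefore, for any $k \in \{1, \dotsc, B+1\}$, the event that $S_0$ has rank $k$ has the same probability as the event that $S_i$ has rank $k$, for each $i$. Since these events partition the almost-sure event of distinct values, I get $\bbP(N = k) = 1/(B+1)$ for each $k \in \{1, \dotsc, B+1\}$. This is the crux of the argument, and is where exchangeability plus atomlessness of pairwise differences enters.

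Finally, I translate this to a statement about $p_B$: since $p_B = N/(B+1)$ takes values in $\{1/(B+1), 2/(B+1), \dotsc, 1\}$ each with probability $1/(B+1)$,
\begin{align*}
    \bbP(p_B \leq \alpha) \;=\; \bbP\bigl( N \leq \alpha(B+1) \bigr) \;=\; \bbP\bigl( N \leq \lfloor \alpha(B+1) \rfloor \bigr) \;=\; \frac{\lfloor \alpha(B+1) \rfloor}{B+1} \;,
\end{align*}
where the second equality uses that $N$ is integer-valued.

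The main obstacle is not really an obstacle, but rather the bookkeeping around the no-ties assumption: one must verify that the event $\{S_i = S_j\text{ for some }i\neq j\}$ is null so that the rank of $S_0$ is a well-defined integer in $\{1, \dotsc, B+1\}$ and the uniform-rank argument goes through cleanly. A union bound over the $\binom{B+1}{2}$ pairs, each of which has zero-probability collision by hypothesis, handles this immediately.
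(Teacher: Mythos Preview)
Your argument is correct and is the standard rank-uniformity proof of this fact. Note that the paper does not actually supply its own proof of this lemma; it simply cites \citet{Dufour:2006} for a proof, so there is nothing further to compare against.
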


\begin{proof}[Proof of \cref{thm:mc:test:p}]
    Due to the sufficiency of $\orbSel(X)$ for $\invProbs$, the samples $X^{(b)}_{1:n} = (G^{(b)}_1 X_1,\dotsc, G^{(b)}_n X_n)$ are conditionally i.i.d.\ (and therefore conditionally exchangeable) given $\orbSel(X)_{1:n}$, with the same conditional distribution as $X_{1:n}$ under $\nullHyp$.
Consequently,  $\left(T(X^{(b)}_{1:n})\right)_{b=0}^B$ are conditionally i.i.d.\ given $\orbSel(X)_{1:n}$, with the same conditional distribution as $T(X_{1:n})$ under the null.
It then follows from \cref{lem:dufour} that \eqref{eq:mc:test:p:valid} holds.
\end{proof}

\section{Proofs and additional theoretical results} \label{apx:proofs}

\subsection{Proof of \cref{thm:equivariance:cond:ind}}
\label{apx:proofs:thm:4:1}

\begin{proof}[Proof of \cref{thm:equivariance:cond:ind}]
    Define the random variable $\tY \define \repInvRand^{-1} Y$, where $\repInvRand|X \sim \repInvKern$, so that $P_{\tY|\repInvRand,X}(d\ty|\repInvRandCond,x) = \int \delta_{\repInvRandCond^{-1} y}(d\ty) P_{Y|X}(dy|x)$. 
    For any integrable function $f \colon \grp \times \bfX \times \bfY \to \R$, by Fubini's theorem,
    \begin{align*} %
        \int f(\repInvRandCond, x, \ty) & P_{\tY|\repInvRand,X}(d\ty\mid\repInvRandCond, x)\repInvKern(d\repInvRandCond\mid x)    P_X(dx) \\
        & = \int f(\repInvRandCond, x,\repInvRandCond^{-1} y) P_{Y|X}(dy\mid x)  \repInvKern(d\repInvRandCond\mid x)   P_X(dx)  \;.
    \end{align*}
    From this follows the identity $P_{\tY|\repInvRand,X}(B|\repInvRandCond,x) = P_{Y|X}(\repInvRandCond B | x)$, each $B \in \bfS_{\bfY}$. 

    Now assume that $P_{Y|X}$ is equivariant, so that for each $g \in \grp, x \in \bfX, B \in \bfS_{\bfY}$, $P_{Y|X}(B|gx) = P_{Y|X}(g^{-1}B|x)$ for $P_X$-almost all $x$.
    Then for any $\repInvRandCond \in \grp, x \in \bfX, g \in \grp$ and integrable $f \colon \bfY \to \R$,
    \begin{align*}
        \int  P_{\tY|\repInvRand,X}(d\ty\mid g\repInvRandCond, gx) f(\ty) & = \int f((g\repInvRandCond)^{-1} y) P_{Y|X}(dy\mid gx)  \\
          & = \int f(\repInvRandCond^{-1} g^{-1} g y) P_{Y|X}(dy\mid x)  \\
          & = \int f(\repInvRandCond^{-1}y) P_{Y|X}(dy\mid x)  \\
          & = \int f(\ty) P_{\tY|\repInvRand,X}(d\ty\mid  \repInvRandCond, x)  \;.
    \end{align*}
    This shows that the mapping $(\repInvRandCond, x) \mapsto P_{\tY|\repInvRand,X}(\argdot | \repInvRandCond, x)$ is $\grp$-invariant.
    Therefore, by \cref{prop:orbsel:inv}, for any measurable maximal invariant $\tM \colon \grp \times \bfX \to \bfM$, there is a unique Markov probability kernel $\tR \colon \bfM \times \bfS_{\bfY} \to [0,1]$ such that
    \begin{align*}
        P_{\tY|\repInvRand,X}(d\ty\mid \repInvRandCond, x) = P_{\tY|\tM}(B\mid  \tM(\repInvRandCond, x)) \;, \quad \repInvRandCond \in \grp, \ x \in \bfX, \ B \in \bfS_{\bfY} \;.
    \end{align*}
    Because the action of $\grp$ on itself is transitive (i.e., there is only one orbit in $\grp$), any maximal invariant $M$ for $\grp$ acting on $\bfX$ is also a maximal invariant for $\grp$ acting on $\grp \times \bfX$, and
    \begin{align} \label{eqn:invariant:kernel}
        P_{\tY|\repInvRand,X}(B\mid  \repInvRandCond, x) = \tR(B\mid \maxInv(x)) \;, \quad \repInvRandCond \in \grp, \  x \in \bfX, \ B \in \bfS_{\bfY} \;.
    \end{align}
    This is enough to establish the desired conditional independence in \eqref{eqn:cond:ind:rand}: For any integrable $f \colon \grp\times \bfX \times \bfY \to \R$,
    \begin{align*}
        \int & f(\repInvRandCond,x, \ty)   P_{\tY|\repInvRand,X}(d\ty\mid \repInvRandCond, x) \repInvKern(d\repInvRandCond\mid x) P_X(dx) \\
        & = \int f(\repInvRandCond,x, \ty)  \tR(d\ty \mid \maxInv(x)) \repInvKern(d\repInvRandCond \mid x)   P_X(dx) \;.
    \end{align*}
    
    Conversely, assume that $(\repInvRand,X) \condind \repInvRand^{-1} Y \mid \maxInv(X)$.
    Then \eqref{eqn:invariant:kernel} holds for $P_X\otimes \repInvKern$-almost all $(\repInvRandCond,x) \in \grp\times\bfX$.
    In particular, $P_{\tY|\repInvRand,X}$ is $\grp$-invariant in $(\repInvRandCond,x)$ for $P_X\otimes \repInvKern$-almost all $(\repInvRandCond,x)$.
    Recall also that the inversion kernel $\repInvKern$ is $\grp$-equivariant.
    Therefore, for any integrable $f \colon \grp \times \bfX \times \bfY \to \R$ and any $g \in \grp$,
    \begin{align*}
        &\int f(\repInvRandCond,x,y)  \repInvKern(d\repInvRandCond\mid x) P_{Y|X}(dy\mid x) P_X(dx)  \\
          &\qquad = \int f(\repInvRandCond,x,\repInvRandCond (\repInvRandCond^{-1} y))  \repInvKern(d\repInvRandCond\mid x) P_{Y|X}(dy\mid x) P_X(dx) \\
          &\qquad = \int f(\repInvRandCond,x,\repInvRandCond \ty) P_{\tY|\repInvRand,X} (d\ty\mid \repInvRandCond,x)  \repInvKern(d\repInvRandCond \mid x)  P_X(dx)  \\
          &\qquad = \int f(\repInvRandCond,x,\repInvRandCond \ty) P_{\tY|\repInvRand,X}(d\ty\mid g\repInvRandCond,gx)  \repInvKern(d\repInvRandCond\mid x)  P_X(dx) \\
          &\qquad = \int f(\repInvRandCond,g^{-1}x,\repInvRandCond \ty) P_{\tY|\repInvRand,X}(d\ty\mid g\repInvRandCond,x) \repInvKern(d\repInvRandCond\mid g^{-1} x)  (g_* P_X)(dx)   \\
          &\qquad = \int f(g^{-1}\repInvRandCond,g^{-1}x,g^{-1}\repInvRandCond \ty) P_{\tY|\repInvRand,X}(d\ty\mid \repInvRandCond,x) \repInvKern(d\repInvRandCond\mid x) (g_* P_X)(dx) \\
          &\qquad =\int f(g^{-1}\repInvRandCond,g^{-1}x,g^{-1}y)  \repInvKern(d\repInvRandCond\mid x) P_{Y|X}(dy\mid x) (g_* P_X)(dx) \\
          &\qquad = \int f(g^{-1}\repInvRandCond,x,g^{-1}y)  \repInvKern(d\repInvRandCond\mid gx) P_{Y|X}(dy\mid gx) P_X(dx) \\
          &\qquad =\int f(\repInvRandCond,x,g^{-1}y)  \repInvKern(d\repInvRandCond\mid x) P_{Y|X}(dy\mid gx) P_X(dx) \;.
    \end{align*}
    This implies that
    \begin{align} \label{eqn:almost:equiv}
        P_{Y|X}(B \mid  x) = P_{Y|X}(gB \mid  gx) \;, \quad B \in \bfS_{\bfY}, \ g \in \grp, \ P_X\text{-a.e.\ } x \in \bfX \;.
    \end{align}
    The subset of $\bfX$ for which \eqref{eqn:almost:equiv} holds is a $\grp$-invariant set \citep[][Lemma 7.7]{Kallenberg:2017}, and therefore the possible exceptional null set on which $P_{Y|X}$ is not equivariant does not depend on $g$.
    If there is such an exceptional null set on which $P_{Y|X}$ is not equivariant, denoted $N^{\times}$, define the Markov kernel $Q'$ as
    \begin{align*}
        Q'(B\mid x) \define
        \begin{cases}
            P_{Y|X}(B\mid x) & \text{if } x \notin N^{\times} \;, \\
            \int_{\grp} \repInvKern(d\repInvRandCond\mid x) P_{Y|X}(\repInvRandCond^{-1} B \mid  \repInvRandCond^{-1}x) & \text{if } x \in N^{\times} \;.
        \end{cases} 
    \end{align*}
    Since $\repInvKern$ and $P_{Y|X}$ are probability kernels, so too is $Q'$.
    It is also straightforward to show that $Q'$ is $\grp$-equivariant, so that $Q'$ is a regular version of $P_{Y|X}$ that is $\grp$-equivariant for all $x \in \bfX$, and equivalent to $P_{Y|X}$ up to the null set $N^{\times}$.

    If there exists a measurable representative inversion (function) $\repInv(\argdot)$, then the same proof holds with the inversion kernel $\repInvKern(\argdot|x)$ substituted by $\delta_{\repInv(x)}$, resulting in the simplified conditional independence statement in \eqref{eqn:cond:ind}. 

    If the action of $\grp$ on $\bfY$ is trivial, then $\tY = Y$. Moreover, $\repInvRand \condind Y \mid X$ by construction, and therefore $(\repInvRand, X) \condind Y \mid \maxInv(X)$ is implied by $X \condind Y \mid \maxInv(X)$. 
\end{proof}

\subsection{Further characterizations of conditional equivariance}
\label{apx:proofs:equivariance}

\cref{thm:equivariance:cond:ind} can be refined to obtain distributional identities that underpin the randomization tests of \cref{sec:rand}. Proofs follow \cref{cor:eq:xy:compact} below.

\begin{proposition} \label{thm:equiv:dist:id}
Let $P_{X,Y}=P_X\otimes P_{Y|X}$ be a probability measure on $\bfX\times\bfY$.
For a lcscH group $\grp$ acting on $\bfX$ and $\bfY$, with the action on $\bfX$ proper, let  $\GKern$ 
be the group-orbit kernel defined in \eqref{eq:GKern:def}. 
Then the following statements are equivalent to $P_{Y|X}$ being $\grp$-equivariant. 
\begin{properties}[label=E\arabic*.,ref=E\arabic*]
\setcounter{propertiesi}{0}
    \item \label{prp:eq:xy:all} 
    $(\orbSel(X),\tY,X,Y) \equdist (\orbSel(X),\tY, \G\orbSel(X),\G\tY)$, with $\G$ sampled as in \eqref{eq:equiv:randomization}.

    \item \label{prp:eq:xy:tau} 
    $(\orbSel(X),\repInvRand,X,Y) \equdist (\orbSel(X), \repInvRand, \repInvRand\orbSel(X),\repInvRand\tY)$, with $\tY$ sampled as in \eqref{eq:equiv:randomization}.

    \item \label{prp:eq:xy}
    $(\orbSel(X),X,Y) \equdist (\orbSel(X), \G\orbSel(X),\G\tY)$, with $\tY,\G$ sampled as in \eqref{eq:equiv:randomization}.

\end{properties}
\end{proposition}

\cref{thm:equiv:dist:id} has two special cases that are important for practical reasons. First, if $\grp$ acts non-trivially on $\bfY$, the data being used in the test can be reduced to $Y$ only. 

\begin{corollary} \label{cor:eq:xy:cor}
    Assume the conditions of \cref{thm:equiv:dist:id}. If $\grp$ acts non-trivially on $\bfY$ then the following statements are equivalent to $P_{Y|X}$ being $\grp$-equivariant.
    \begin{properties}[label=E\arabic*.,ref=E\arabic*]
        \setcounter{propertiesi}{3}
        \item \label{prp:eq:ty:y}
        $(\orbSel(X),\tY,Y) \equdist (\orbSel(X),\tY, \G\tY)$, with $\G$ sampled as in \eqref{eq:equiv:randomization}.
    
        \item \label{prp:eq:tau:y}
        $(\orbSel(X),\repInvRand,Y) \equdist (\orbSel(X),\repInvRand, \repInvRand\tY)$, with $\tY$ sampled as in \eqref{eq:equiv:randomization}.
    
        \item \label{prp:eq:y}
        $(\orbSel(X),Y) \equdist (\orbSel(X), \G\tY)$, with $\tY,\G$ sampled as in \eqref{eq:equiv:randomization}.
    \end{properties}
\end{corollary}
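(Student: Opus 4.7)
The plan is to reduce the corollary to \cref{thm:equiv:dist:id} by showing that each of the simpler identities E4, E5, E6 is equivalent to its counterpart E1, E2, E3. Once this equivalence is established, the corollary follows immediately, since \cref{thm:equiv:dist:id} already gives the equivalence of E1, E2, E3 with equivariance.

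The forward direction (equivariance implies E4, E5, E6) is immediate and does not require the non-trivial action assumption. Under equivariance, \cref{thm:equiv:dist:id} yields E1, E2, E3, and each of the simpler identities is obtained by marginalizing out the third coordinate on each side, dropping $X$ on the LHS and $\G \orbSel(X)$ or $\repInvRand \orbSel(X)$ on the RHS.

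For the reverse direction, the cleanest case is that E5 implies E2, which uses only the defining property of the inversion kernel: $X = \repInvRand \orbSel(X)$ almost surely. Since both sides of the identity in E5 already feature $(\orbSel(X), \repInvRand)$ as the first two coordinates, augmenting each with the deterministic function $\repInvRand \cdot \orbSel(X)$ yields exactly E2. For the implications E4 implies E1 and E6 implies E3, one must recover the $X$-component without direct access to $\repInvRand$ on the LHS. The key structural input is that, by the construction of the group-orbit kernel \eqref{eq:GKern:def} and its defining property (\cref{lem:GKern}), the marginal pair $(\orbSel(X), X)$ agrees in distribution with $(\orbSel(X), \G \orbSel(X))$ when $\G \sim \GKern(\argdot \mid \orbSel(X))$. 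Combined with the conditional independence $\G \condind \tY \mid \orbSel(X)$ built into the sampling scheme \eqref{eq:equiv:randomization} and the relation $Y = \repInvRand \tY$ on the LHS, a disintegration argument then matches the full joint identity.

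The main obstacle is that a non-trivial action on $\bfY$ is strictly weaker than a free action, so $\repInvRand$ (or $\G$) is not determined by the observable $Y$ (or $\G \tY$). The augmentation argument must therefore be carried out at the level of conditional distributions on left cosets of the stabilizer subgroups acting on $\bfY$, exploiting the uniformity of the inversion kernel on these cosets that is guaranteed by the properness of the action (see \cref{apx:proper:action}). The non-trivial action hypothesis is in fact essential: if $\grp$ acted trivially on $\bfY$ then $\tY = Y$ and $\G \tY = \tY$ would collapse E4, E5, and E6 into tautologies that carry no information about $P_{Y|X}$.
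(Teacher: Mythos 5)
Your forward direction (marginalize \cref{prp:eq:xy:all}--\cref{prp:eq:xy} over $X$) is exactly the paper's argument — indeed it is the entirety of the paper's recorded proof — and your augmentation step \cref{prp:eq:tau:y}$\Rightarrow$\cref{prp:eq:xy:tau}, using $X=\repInvRand\orbSel(X)$ a.s.\ to adjoin the deterministic coordinate $\repInvRand\orbSel(X)$ to both sides, is correct. The genuine gap is in the reverse implications \cref{prp:eq:ty:y}$\Rightarrow$\cref{prp:eq:xy:all} and \cref{prp:eq:y}$\Rightarrow$\cref{prp:eq:xy}. You propose to combine \cref{lem:GKern} (agreement of the $(\orbSel(X),X)$ marginals), the conditional independence $\G\condind\tY\mid\orbSel(X)$ built into \eqref{eq:equiv:randomization}, and ``a disintegration argument'' to match the full joint identity. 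But agreement of the marginals of $(\orbSel(X),X)$ and of $(\orbSel(X),\tY,Y)$ does not determine the joint law of $(\orbSel(X),\repInvRand,\tY)$: on the randomized side $\G\condind\tY\mid\orbSel(X)$ holds by construction, whereas on the data side the analogous conditional independence of $(\repInvRand,X)$ and $\tY$ given the orbit is precisely the content of equivariance (\cref{thm:equivariance:cond:ind}). Gluing the joints as you describe therefore assumes what is to be proved.

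The repair you sketch — working ``at the level of conditional distributions on left cosets of the stabilizer subgroups acting on $\bfY$, exploiting the uniformity of the inversion kernel on these cosets'' — does not close this gap: by \cref{prop:orbsel:inv}, $\repInvKern$ is uniform on stabilizer cosets for the action on $\bfX$, and properness is assumed only for that action; nothing in the hypotheses controls the stabilizers (or the kernel) of the action on $\bfY$. Concretely, when the $\bfY$-action is non-trivial but not faithful, passing from $(\repInvRand,\tY)$ to $\repInvRand\tY$ destroys exactly the information your reconstruction needs. For instance, take $\grp=\mathbb{Z}_4$ acting on $\bfX=\mathbb{Z}_4$ by translation (one orbit, free action, $\repInvRand=X$) and on $\bfY=\mathbb{Z}_2$ through the quotient $\mathbb{Z}_4\to\mathbb{Z}_2$: then \cref{prp:eq:ty:y} and \cref{prp:eq:y} only constrain the parity of $\repInvRand$ jointly with $\tY$, and one can choose $P_{Y|X}$ (e.g.\ $X$ uniform and $P(\tY=1\mid X=x)=0.9,\,0.5,\,0.1,\,0.5$ for $x=0,1,2,3$) for which that parity is independent of $\tY$ while $\repInvRand\not\condind\tY$, so the passage from E4/E6 back to E1/E3 cannot be carried out from non-triviality of the action alone. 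The situation your augmentation idea does handle is when the group element carrying $\tY$ to $Y$ is a measurable function of the pair (e.g.\ a free action on $\bfY$), in which case $\repInvRand$, and hence $X=\repInvRand\orbSel(X)$, is recoverable from $(\orbSel(X),\tY,Y)$ exactly as in your E5 argument; absent such an assumption, your reverse direction for \cref{prp:eq:ty:y} and \cref{prp:eq:y} is a genuine gap rather than a detail to be routinized.
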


The second useful special case occurs when $P_X$ is invariant under a compact group, in which case the test for conditional symmetry is equivalent to joint invariance of $P_{X,Y}$.  Therefore, in such situations we can test for $\grp$-equivariance of $P_{Y|X}$ by testing for joint $\grp$-invariance of the distribution $P_{X,Y}$ or by testing for marginal $\grp$-invariance of the distribution $P_Y$ (when $\grp$ acts non-trivially on $\bfY$), leveraging the established body of work on group randomization tests for distributional invariance (\cref{sec:bg:testing:invariance}).

\begin{corollary} \label{cor:eq:xy:compact}
    If $\grp$ is compact and $P_X$ is $\grp$-invariant, the following are equivalent to $P_{Y|X}$ being $\grp$-equivariant. 
    \begin{properties}[label=E\arabic*.,ref=E\arabic*]
    \setcounter{propertiesi}{6}
        \item \label{prp:eq:joint:inv}
        $(\orbSel(X), \tY, X,Y)\equdist (\orbSel(X), \tY,G\orbSel(X),G\tY)$ with $G\sim\haar$.
        
        \item \label{prp:eq:marg:inv} 
        $\grp$ acts non-trivially on $\bfY$, and 
        $Y\equdist GY \equdist G\tY$ for $G\sim\haar$.
    \end{properties}
\end{corollary}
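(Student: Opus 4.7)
The plan is to reduce both \ref{prp:eq:joint:inv} and \ref{prp:eq:marg:inv} to the characterizations of equivariance already established in \cref{thm:equiv:dist:id} and \cref{cor:eq:xy:cor}, using the key observation that under the additional hypotheses of the present corollary---$\grp$ compact and $P_X$ being $\grp$-invariant---the group-orbit kernel $\GKern(\argdot\mid\orbSel(X))$ degenerates to the Haar measure $\haar$, independent of $\orbSel(X)$. This identification lets the previous distributional identities be rewritten with $G\sim\haar$ in place of $\G\sim\GKern$, so \ref{prp:eq:joint:inv} becomes a direct restatement of \ref{prp:eq:xy:all}.

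The first step is a lemma asserting that, under the hypotheses, $\GKern(\argdot\mid\orbRep{x}) = \haar$ for $\orbRepKern$-almost every $\orbRep{x}$, so that $\G \equdist G\sim\haar$ and $\G\condind\orbSel(X)$. Two ingredients drive the argument. First, because $P_X$ is $\grp$-invariant and $\grp$ is compact, the conditional law $\XorbRepKern(\argdot\mid\orbRep{x})$ is the pushforward of $\haar$ through the action $g\mapsto g\orbRep{x}$; equivalently, $X \equdist G\orbSel(X)$ with $G\sim\haar$ independent of $\orbSel(X)$. Second, by construction, $\repInvKern(\argdot\mid x)$ is the uniform (Haar) distribution on the compact coset $\{g\in\grp:g\orbSel(x)=x\}$. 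Composing these in the definition of $\GKern$ presents it as the distribution of $GH$ where $H$ is uniform on the compact stabilizer $\grp_{\orbRep{x}}$ and independent of $G$; right-invariance of $\haar$ under the compact subgroup $\grp_{\orbRep{x}}$ then gives $GH\sim\haar$.

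Given the lemma, \ref{prp:eq:joint:inv} is immediate: it is exactly \ref{prp:eq:xy:all} with the identification $\G\equdist G\sim\haar$, so the equivalence with $\grp$-equivariance carries over directly from \cref{thm:equiv:dist:id}, using that the conditional independence $\G\condind\tY\mid\orbSel(X)$ is preserved under the substitution. For \ref{prp:eq:marg:inv}, the forward direction is routine: $\grp$-equivariance of $P_{Y|X}$ combined with $\grp$-invariance of $P_X$ yields joint $\grp$-invariance of $P_{X,Y}$ by directly computing $P_{X,Y}(gA\times gB)$ via the disintegration $P_X\otimes P_{Y|X}$, and hence the marginal $P_Y$ is $\grp$-invariant, giving $Y\equdist GY$; the identity $Y\equdist G\tY$ then follows by marginalizing \ref{prp:eq:y} from \cref{cor:eq:xy:cor} and applying the lemma to substitute $\G\equdist G$.

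The main obstacle is the reverse direction of \ref{prp:eq:marg:inv}: lifting the apparently marginal identity $Y\equdist G\tY$ back to the conditional identity \ref{prp:eq:y}, since marginal equalities in distribution do not upgrade to conditional ones in general. The intended route is to exploit the conditional independence $\G\condind\tY\mid\orbSel(X)$ together with the lemma's independence $\G\condind\orbSel(X)$, so that the right-hand side $G\tY$ has a $\grp$-invariant conditional distribution given $\orbSel(X)$ matching that of the right-hand side of \ref{prp:eq:y}; combined with the $\grp$-invariance of $P_X$, which links the unconditional and conditional law of $Y$ via $X\equdist G\orbSel(X)$, the marginal identity $Y\equdist G\tY$ is claimed to encode enough information to recover the required conditional identity, with the non-trivial action of $\grp$ on $\bfY$ ensuring that the Haar randomization does not collapse all information about $Y$. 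This step requires the most care and is where any subtlety in the proof will lie.
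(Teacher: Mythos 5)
Your route is the paper's: everything hinges on the lemma that, for compact $\grp$, the group-orbit kernel satisfies $\GKern(\argdot\mid\orbRep{x})=\haar$ on every orbit exactly when $P_X$ is $\grp$-invariant, and your proof of it (the conditional law of $X$ given its orbit is the pushforward of $\haar$, the inversion kernel is Haar on a coset of the compact stabilizer, and $GH\sim\haar$ by invariance of Haar) is precisely the mechanism the paper invokes in its one-line proof. Your handling of \cref{prp:eq:joint:inv} --- substitute $G\sim\haar$ for $\G$ in \cref{prp:eq:xy:all} and quote \cref{thm:equiv:dist:id} --- is complete and is the paper's argument.

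The genuine gap is the reverse direction of \cref{prp:eq:marg:inv}, which you only sketch and correctly flag as the delicate step. The lifting you propose --- recovering the conditional identity from the bare marginal identity $Y\equdist G\tY$ (together with $Y\equdist GY$) via the independences $\G\condind\orbSel(X)$ and $\G\condind\tY\mid\orbSel(X)$ --- cannot be carried out as described, because the marginal laws do not see the coupling between $Y$ and $X$. Concretely, let $\grp=\SO{2}$ act on the unit circle (so the action is transitive and free), let $X$ be uniform, and set $Y=R_{\theta(X)}X$ where $\theta(X)$ is the polar angle of $X$; then $P_X$ is invariant, $\grp$ acts non-trivially on $\bfY$, and $Y$, $GY$, $G\tY$ are all uniform on the circle (here $\tY=\repInv(X)^{-1}Y=X$), yet $P_{Y|X}=\dirac_{R_{\theta(X)}X}$ is not $\grp$-equivariant. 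So no argument that uses only the purely marginal identities can close this direction: the identities in \cref{prp:eq:marg:inv} must be read jointly with the centering variables, i.e., as the specialization of \cref{prp:eq:ty:y} (or \cref{prp:eq:tau:y}) obtained by replacing $\GKern$ with $\haar$, which is exactly what the paper's proof does by substituting $\haar$ for the group-orbit kernel in \cref{cor:eq:xy:cor}, and is also how the test is actually run (the randomization is carried out conditionally on $S_{1:n}$). The fix, then, is not a cleverer marginal-to-conditional upgrade but simply to keep $\orbSel(X)$ and $\tY$ (or $\repInvRand$) inside the distributional identity before marginalizing.
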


For the proof of the distributional identities in \cref{thm:equiv:dist:id} that are equivalent to equivariance, the following lemma will be useful. 

\begin{lemma} \label{lem:equiv:xy}
    Under the conditions of \cref{thm:equiv:dist:id}, $P_{Y|X}$ is $\grp$-equivariant if and only if $(X,Y) \equdist (\G\orbSel(X'), \G \tY)$, where $X \equdist X'$ and $\G,\tY$ are sampled as in \eqref{eq:equiv:randomization}. 
\end{lemma}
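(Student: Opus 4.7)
The plan is to prove both directions by working with the joint distribution of $(\orbSel(X), X, \repInvRand, \tY)$, where $\repInvRand \sim \repInvKern(\argdot\mid X)$ and $\tY = \repInvRand^{-1} Y$. The two main tools are Theorem \ref{thm:equivariance:cond:ind}, which characterizes equivariance by the conditional independence $(\repInvRand, X) \condind \tY \mid \orbSel(X)$ (using $\orbSel(X)$ as a maximal invariant), and Lemma \ref{lem:GKern}, which identifies the conditional distribution of $(X, \repInvRand)$ given $\orbSel(X)=o$ as that of $(\G o, \G)$ for $\G\sim\GKern(\argdot\mid o)$.

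For the forward direction, assume $P_{Y|X}$ is $\grp$-equivariant. Theorem \ref{thm:equivariance:cond:ind} supplies the conditional independence, while Lemma \ref{lem:GKern} lets us replace $(X, \repInvRand)$ with $(\G\orbSel(X'), \G)$ for a fresh $X'\equdist X$ and $\G\sim\GKern(\argdot\mid\orbSel(X'))$. Introducing $\tY'\sim\tYKern(\argdot\mid\orbSel(X'))$ with $\tY'\condind\G\mid\orbSel(X')$ to match both the marginal law of $\tY$ and the conditional independence yields $(\orbSel(X), X, \repInvRand, \tY) \equdist (\orbSel(X'), \G\orbSel(X'), \G, \tY')$. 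Applying the measurable map $(o, x, \g, \ty)\mapsto (x, \g\ty)$ and using $Y=\repInvRand\tY$ gives the claimed identity.

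For the backward direction, assume $(X,Y)\equdist (\G\orbSel(X'),\G\tY)$. The plan is to disintegrate the right-hand side against its first coordinate. Conditioning on $\G\orbSel(X')=x$ forces $\orbSel(X')=\orbSel(x)$, and reading the mixture representation \eqref{eq:GKern:def} of $\GKern$ in reverse shows that the conditional law of $\G$ given $\G\orbSel(X')=x$ is $\repInvKern(\argdot\mid x)$, while $\tY\sim\tYKern(\argdot\mid\orbSel(x))$ remains conditionally independent of $\G$. Thus, $P_X$-almost surely,
\begin{equation*}
    P_{Y|X}(B \mid x) = \int \tYKern(\g^{-1}B \mid \orbSel(x))\,\repInvKern(d\g \mid x) \;.
\end{equation*}
Equivariance then follows by substituting $(g_0 x, g_0 B)$ for $(x,B)$, using $\orbSel(g_0 x)=\orbSel(x)$ together with the $\grp$-equivariance of $\repInvKern$ (which gives $\int f(\g)\,\repInvKern(d\g\mid g_0 x)=\int f(g_0 \g)\,\repInvKern(d\g\mid x)$); the factors of $g_0$ and $g_0^{-1}$ then cancel inside $\tYKern$, recovering $P_{Y|X}(B\mid x)$.

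The main technical obstacle is the disintegration step in the backward direction: identifying the conditional law of $\G$ given $\G\orbSel(X')=x$ as $\repInvKern(\argdot\mid x)$ must be read off from the definition of $\GKern$, and care is needed when the action is not free, since $\{g\in\grp : g\orbSel(x)=x\}$ is then a nontrivial coset of $\grp_{\orbSel(x)}$ on which $\repInvKern(\argdot\mid x)$ is supported. Once this identification is in place, both directions reduce to elementary disintegration and change-of-variables calculations.
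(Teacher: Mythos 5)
Your proof is correct and follows essentially the same route as the paper's: the backward direction rests on reading the mixture definition \eqref{eq:GKern:def} of $\GKern$ together with the equivariance of $\repInvKern$, which is exactly the paper's chain of integral identities recast as an explicit disintegration, and your forward direction via \cref{thm:equivariance:cond:ind} and \cref{lem:GKern} is what the paper defers to in the proof of \cref{thm:equiv:dist:id}. The only point worth making explicit is that your kernel formula $P_{Y|X}(B\mid x)=\int \tYKern(\g^{-1}B\mid\orbSel(x))\,\repInvKern(d\g\mid x)$ holds a priori only for $P_X$-a.e.\ $x$, but since the right-hand side defines an everywhere-equivariant version of the kernel, this is harmless (the paper handles the analogous null-set issue via the invariance of the exceptional set).
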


\begin{proof}
    We will prove that the distributional identity implies that $P_{Y|X}$ is equivariant. The converse follows as a special case of the stronger statements in the proof of \cref{thm:equiv:dist:id} below. 
    
    Fix arbitrary $g \in \grp$. From (i) the definition of $\GKern$, (ii) Fubini's theorem, and (iii) the equivariance of the inversion kernel $\repInvKern$, for any integrable $f \colon \bfX \times \bfY \to \mathbb{R}$,
    \begin{align*}
        \int &f(x,y) P_{Y|X}(dy\mid x) P_X(dx)  \\
        & = \int f(\g \orbRep, \g \ty) \tYKern(d\ty \mid  \orbRep) \GKern(d\g \mid  \orbRep) \orbRepKern(d\orbRep) \\
        & \overset{\text{(i),(ii)}}{=} \int f(\repInvRandCond \orbRep, \repInvRandCond \ty) \tYKern(d\ty \mid  \orbRep) \repInvKern(d\repInvRandCond \mid  x') \XorbRepKern(dx' \mid  \orbRep) \orbRepKern(d\orbRep) \\
        & \overset{\text{(iii)}}{=} \int f(g \repInvRandCond \orbRep, g \repInvRandCond \ty) \tYKern(d\ty \mid  \orbRep) \repInvKern(d\repInvRandCond \mid  g^{-1} x') \XorbRepKern(dx' \mid  \orbRep) \orbRepKern(d\orbRep) \\
        & = \int f(gx, gy) P_{Y|X}(dy\mid x) (g^{-1}_* P_X)(dx) \\
        & = \int f(x, gy) P_{Y|X}(dy \mid g^{-1} x) P_X(dx) \;,
    \end{align*}
    which shows that $P_{Y|X}$ is equivariant up to a $P_X$-null set. By a similar argument as the one in the proof of \cref{thm:equivariance:cond:ind}, the exceptional null set is $\grp$-invariant and therefore there is a version of $P_{Y|X}$ that is equivariant for all $g \in \grp$, $x \in \bfX$. 
\end{proof}

\begin{proof}[Proof of \cref{thm:equiv:dist:id}]
    Let $(X,Y)\sim P_{X,Y}$. Assume that $P_{Y|X}$ is\\$\grp$-equivariant. To show \cref{prp:eq:xy:all},
    for any integrable $f \colon \bfR \times \bfY \times \bfX \times \bfY \to \mathbb{R}$, 
    \begin{align*}
         \int  &f(\orbSel(x), \ty,  x, y) P(dx,d\ty,dy)  \\
        & = \int f(\orbRep,\ty, x', y) P_{Y|X}(dy\mid\ty, x') \tYKern(d\ty\mid\orbRep)  \XorbRepKern(dx'\mid\orbRep) \orbRepKern(d\orbRep) \\
        & = \int f(\orbRep, \ty, \repInvRandCond \orbRep, \repInvRandCond \ty) \tYKern(d\ty\mid\orbRep) \repInvKern(d\repInvRandCond\mid x') \XorbRepKern(dx'\mid\orbRep) \orbRepKern(d\orbRep) \\
        & =  \int f(\orbRep, \ty, \g \orbRep, \g \ty) \tYKern(d\ty \mid  \orbRep) \GKern(d\g \mid \orbRep) \orbRepKern(d\orbRep) \\
        & =  \int f(\orbSel(x), \ty, \g \orbSel(x), \g \ty) \tYKern(d\ty \mid  \orbSel(x)) \GKern(d\g \mid  \orbSel(x)) P_{X}(dx) \;,
    \end{align*}
    where the second equality is implied by \cref{thm:equivariance:cond:ind}, and the third equality follows from Fubini's theorem. This shows that $\grp$-equivariance of $P_{Y|X}$ implies \cref{prp:eq:xy:all}.

    To show that \cref{prp:eq:xy:all} implies that $P_{Y|X}$ is $\grp$-equivariant, observe that $(\orbSel(X),\tY, X,Y) \equdist (\orbSel(X),\tY, \G\orbSel(X),\G\tY)$ implies that $(X,Y) \equdist (\G\orbSel(X),\G\tY)$, where $\G,\tY$ are sampled as in \eqref{eq:equiv:randomization}. By \cref{lem:equiv:xy}, that implies that $P_{Y|X}$ is $\grp$-equivariant. 
    
    \cref{prp:eq:xy:tau} follows directly from the proof of \cref{thm:equivariance:cond:ind}, and \cref{prp:eq:xy} is a special case of \cref{prp:eq:xy:all} with $\tY$ marginalized. 
\end{proof}

\begin{proof}[Proof of \cref{cor:eq:xy:cor}]
    \cref{prp:eq:ty:y}--\cref{prp:eq:y} follow as special cases of the corresponding properties in \cref{thm:equiv:dist:id}, marginalizing over $X$.
\end{proof}

\begin{proof}[Proof of \cref{cor:eq:xy:compact}]
    \cref{prp:eq:joint:inv,prp:eq:marg:inv} follow from the fact that \\$\GKern(\argdot|\orbRep) = \haar(\argdot)$ for each $\orbRep$ if and only if $P_X$ is $\grp$-invariant. 
\end{proof}

\subsection{Proofs for \cref{sec:rand}} \label{apx:proofs:crt:validity}

\begin{proof}[Proof of \cref{prp:crt:validity}]
    By \cref{thm:equiv:dist:id}, each of the randomization procedures generates conditionally i.i.d.\ samples, conditioned on $S_{1:n}$. By construction, the randomized samples have the same distribution as the observed samples under the null hypothesis. Then \eqref{eq:crt:pval:level} follows from \cref{lem:dufour}. 

    The asymptotic consistency follows from the strong consistency of the empirical CDF (by the Glivenko--Cantelli theorem \citealp[Thm.\ 19.1]{vaart_1998}), along with the fact that under the null hypothesis that $P_{Y|X}$ is $\grp$-equivariant, the observations have the same distribution as the randomized samples. That is, the Glivenko--Cantelli theorem implies that for fixed $S_{1:n}$, 
    \begin{align*}
        \frac{1}{B+1}\sum_{b=1}^B\indi\left\{\testStat\left((X,Y)_{1:n}^{(b)}\right)\geq \testStat\left((X,Y)_{1:n}\right)\right\} \toas 1 - F_{\testStat|S}(\testStat(X,Y)_{1:n} \mid S_{1:n}) \;,
    \end{align*}
    where $F_{\testStat|S}$ is the conditional CDF of the test statistic given $S$ under the randomization distribution. This holds for $P$-almost every sequence $S_{1:n}$. 
\end{proof}

\begin{proposition} \label{prop:appx:pval}
    Assume $(X,Y)\sim P_{X,Y}$, with $P_{Y|X}$ $\grp$-equivariant. Let $\testStat$ be a continuous $\mathbb{R}$-valued function, such that the distribution of $\testStat((X,Y)_{1:n})$ is atomless. 
    Suppose that $(X,Y)_{i,m}^{(b)}$ are sampled conditionally on $S(X,Y)_{i,m}^{(b)} = S(X,Y)_i$, and used as the randomization sets to generate CRT statistics as in \eqref{eqn:crt:pval}, denoted $p_{B,m}$. If, for each $i\leq n$, conditioned on $S(X,Y)_i$, $(X,Y)_{i,m}^{(b)} \todist (X,Y)_i$ as $m\to\infty$, and $\bbP\{\testStat((X,Y)_{1:n}^{(b)}) = \testStat((X,Y)_{1:n}^{(b')})\} = 0$ for $b \neq b'$, then $p_{B,m} \todist p_B$. Therefore, for fixed $B \in \mathbb{N}$ and $\alpha \in [0,1]$,
    \begin{align*}
        \lim_{m\to\infty} \bbP\left(p_{B,m}\leq\sig \mid S_{1:n} \right) \leq \sig \;, \quad  P\text{-a.s.}
    \end{align*}
\end{proposition}

\begin{proof}%
    For each $m$, $(X,Y)_{i,m}^{(b)}$ are independent across $i$ and conditionally independent across $b$, conditioned on $S_{1:n}$. Hence, the conditional convergence in distribution $(X,Y)_{i,m}^{(b)} \todist (X,Y)_{i}$ holds jointly across $i$ and $b$. By the continuous mapping theorem, $(\testStat((X,Y)^{(b)}_{1:n,m}))_{1:B}$ converges jointly in distribution (across $b$) to $(\testStat((X,Y)^{(b)}_{1:n}))_{1:B}$. By assumption, ties occur with limiting probability zero and the distribution of $\testStat((X,Y)_{1:n})$ is atomless. Therefore, again by the continuous mapping theorem, it follows that $p_{B,m} \todist p_B$. 
\end{proof}

\subsection{Additional results and proofs for \cref{sec:kernel}} \label{apx:power:proofs}

\subsubsection{Proof of \cref{thm:mmd:power}}

The proof of \cref{thm:mmd:power} builds on the analysis of \citet{Sriperumbudur_2012} on the empirical estimation of MMD.  In particular, for RKHS $\hilbert$ and $\hilbert|_1 = \{f \in \hilbert : \| f\|_{\hilbert} \leq 1 \}$, the MMD for probability measures $P,Q$ on $\bfX$  can be written as
\begin{align*}
    \MMD^2&(P,Q) = \left(\sup_{f \in \hilbert|_1} \left| \int f(x) P(dx) - \int f(x) Q(dx) \right| \right)^2 \\
    & = \E[\k(X,\tilde{X})] + \E[\k(X',\tilde{X}')] - 2\E[\k(X,X')] \;, \quad X,\tilde{X}\simiid P\;, X',\tilde{X'}\simiid Q \;,
\end{align*}
with empirical estimator
\begin{align*}
    & \MMD^2(X_{1:n},X_{1:n}') = \left(\sup_{f \in \hilbert|_1} \left| \frac{1}{n} \sum_{i=1}^n (f(X_i) - f(X'_i)) \right| \right)^2 \;, \quad X_i \simiid P\;, \ X'_i \simiid Q \;.
\end{align*}

The proof of \cref{thm:mmd:power} relies on the following lemma, which is a special case of Proposition B.1 from \citep{Sriperumbudur_2012}, which itself extends a version of Talagrand's inequality \citep[][Thm.\ 7.5 and A.9.1]{steinwart:2008:svm} to apply to sequences of independent but not necessarily identically distributed random variables.

\begin{lemma} \label{lem:talagrand}
    Let $M\geq 0$, $n\geq 1$. For each $i\in [n]$, let $(\bfX,\calA,P_i)$ be a probability space. Let $\calF$ be a set of $\calA$-measurable $\R$-valued functions. Let $\theta_i \colon \calF \times \bfX \to \R$ be bounded measurable functions, such that for each $f \in \calF$, $i \in [n]$,
    \begin{itemize}
        \item $\int_{\bfX} \theta_i(f,x) P_i(dx) = 0$;
        \item $\int_{\bfX} \theta_i(f,x)^2 P_i(dx) \leq \varphi^2_i < \infty$; and
        \item $\| \theta_i(f,\argdot)\|_{\infty} \leq M < \infty$.
    \end{itemize}
    Define $\bfW = \bfX^n$, $P = P_1\times \dotsb \times P_n$, and $H\colon \bfW \to \R$ by
    \begin{align*}
        H(w) = \sup_{f\in\calF} \left| \sum_{i=1}^n \theta_i(f, x_i) \right| \;, \quad w = (x_1,\dotsc,x_n) \;.
    \end{align*}
    Then for all $\epsilon > 0$, 
    \begin{align*}
        P\left( w \in \bfW : H(w) - \E[H(W)] \geq \epsilon \right) \leq e^{-\frac{3}{2}\frac{\epsilon^2}{3a + \epsilon}} \;,
    \end{align*}
    with $a = 2M \E[H(W)]  + \sum_{i=1}^n \varphi^2_i$. Written differently, for all $\eta > 0$,
    \begin{align*}
        P\left( w \in \bfW : H(w) \geq \E[H(W)] + \sqrt{2a\eta} + 2\eta M/3 \right) \leq e^{-\eta} \;.
    \end{align*}
\end{lemma}

In order to apply the lemma to our setting, observe that our MMD-based test statistic evaluated on two (conditionally) independent randomization samples $Z_i \sim P_i$ and $Z_{n+i} \sim Q_i$ can be written as, with $b_i = (-1)^{\indi\{n+1\leq i \leq 2n\}}$,
\begin{align*}
    \testStat(Z_{1:n},Z_{(n+1):2n}) = \sup_{f \in \hilbert|_1} \left| \frac{1}{n}\sum_{i=1}^{2n} b_i f(Z_i)\right| = \left( \frac{1}{n^2} \sum_{i=1,j=1}^{2n} b_i b_j \k(Z_i,Z_j) \right)^{1/2} \;.
\end{align*}
Since there is a different distribution for each $i$, consider the ``population'' version of the statistic to be
\begin{align*}
    \testStat(P_{1:n},Q_{1:n}) = \sup_{f \in \hilbert|_1} \left| \frac{1}{n}\sum_{i=1}^{2n} b_i \E[f(Z_i)]\right| = \left( \frac{1}{n^2} \sum_{i=1,j=1}^{2n} b_i b_j \E[\k(Z_i,Z_j)] \right)^{1/2} \;,
\end{align*}
which is equal to $0$ if $P_i = Q_i$ for each $i \in [n]$. 
Finally, observe that with
\begin{align} \label{eq:thetai:def}
    \theta_i(f,z) = (-1)^{\indi\{n+1\leq i \leq 2n\}} \frac{1}{n}\bigg(f(z) - \int f(z') P_i(dz') \bigg) \;, \quad f \in \hilbert|_1\;,
\end{align}
the conditions of \cref{lem:talagrand} are satisfied, with $\varphi_i^2 \leq \supk^2/n^2$ and $M \leq 4\supk/n$. 

We will use it to bound
\stepcounter{equation}
\begin{align*}
&\left|\testStat(Z_{1:n},Z_{(n+1):2n}) - \testStat(P_{1:n},Q_{1:n})\right| \\
&\quad= \frac{1}{n}\left|\sup_{f\in\hilbert\mid_1}\left|\sum_{i=1}^{2n}(-1)^{\indi\{n+1\leq i \leq 2n\}}f(Z_i)\right| - \sup_{f\in\hilbert\mid_1}\left|\sum_{i=1}^{2n}(-1)^{\indi\{n+1\leq i \leq 2n\}}\E[f(Z_i)]\right|\right| \\
&\quad\leq \sup_{f\in\hilbert\mid_1}\frac{1}{n}\left|\sum_{i=1}^{2n}(-1)^{\indi\{n+1\leq i \leq 2n\}}\left(f(Z_i) - \E[f(Z_i)]\right)\right| \tag{\theequation}\label{eq:H:bound} \\
&\quad= \sup_{f\in\hilbert\mid_1}\left|\sum_{i=1}^{2n}\theta_i(f,Z_i)\right| \;.
\end{align*}

We will make repeated use of the following upper bound on the expectation of $H(W)$, the proof of which is standard, for example, \citep[][Appendix D]{Sriperumbudur_2012}.

\begin{lemma} \label{lem:H:bound}
    Let $\theta_i$ be as defined in \eqref{eq:thetai:def} and $\calF = \hilbert|_1$. Then
    \begin{align*}
        \E[H(W)] \leq 2 \supk \sqrt{\frac{2}{n}} \;.
    \end{align*}
\end{lemma}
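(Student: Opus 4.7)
The plan is to prove this standard bound on the expected supremum of a centered empirical process indexed by an RKHS ball using the classical symmetrization + Rademacher argument, specialized to reproducing kernels. Because $\calF = \hilbert|_1$ is a ball in an RKHS and the $b_i \in \{-1, +1\}$ are deterministic, the reproducing property will let us reduce the problem to computing a second-moment quantity involving the kernel on the diagonal, which is controlled by $\nu^2 = \sup_z k(z,z)$.

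First I would rewrite $H(W)$ as
\begin{align*}
H(W) = \sup_{f \in \hilbert|_1} \left| \frac{1}{n} \sum_{i=1}^{2n} b_i \bigl( f(Z_i) - \E[f(Z_i)] \bigr) \right|,
\end{align*}
and apply the ghost-sample trick: let $Z'_i$ be independent copies of $Z_i$ with the same marginal $P_i$. Since $\E[f(Z_i)] = \E_{Z'_i}[f(Z'_i)]$, Jensen's inequality pulled through the supremum and absolute value yields
\begin{align*}
\E[H(W)] \leq \E \sup_{f \in \hilbert|_1} \left| \frac{1}{n} \sum_{i=1}^{2n} b_i \bigl(f(Z_i) - f(Z'_i)\bigr) \right|.
\end{align*}
Since $(Z_i, Z'_i)$ is exchangeable and $\calF$ is symmetric around $0$, I can insert Rademacher signs $\varepsilon_i$ without changing the distribution. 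The product $b_i \varepsilon_i$ is again Rademacher, so after the triangle inequality I obtain the standard Rademacher upper bound
\begin{align*}
\E[H(W)] \leq \frac{2}{n} \, \E \, \E_{\varepsilon} \sup_{f \in \hilbert|_1} \left| \sum_{i=1}^{2n} \varepsilon_i f(Z_i) \right|.
\end{align*}

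Next I would invoke the reproducing property $f(Z_i) = \langle f, k_{\kparam}(Z_i, \argdot)\rangle_{\hilbert}$ and Cauchy--Schwarz to pass from the supremum over $\hilbert|_1$ to the norm of a single element of $\hilbert$:
\begin{align*}
\sup_{f \in \hilbert|_1} \left| \sum_{i=1}^{2n} \varepsilon_i f(Z_i) \right| \leq \left\| \sum_{i=1}^{2n} \varepsilon_i k_{\kparam}(Z_i, \argdot) \right\|_{\hilbert}.
\end{align*}
Then Jensen's inequality moves the expectation inside the square root, and expanding the squared norm using the reproducing kernel collapses the cross terms by independence of the $\varepsilon_i$, leaving only the diagonal:
\begin{align*}
\E \left\| \sum_{i=1}^{2n} \varepsilon_i k_{\kparam}(Z_i, \argdot) \right\|_{\hilbert} \leq \sqrt{\sum_{i=1}^{2n} \E[k_{\kparam}(Z_i, Z_i)]} \leq \nu \sqrt{2n}.
\end{align*}
Combining the two displays gives $\E[H(W)] \leq \tfrac{2}{n} \cdot \nu\sqrt{2n} = 2\nu\sqrt{2/n}$, which is the claimed bound.

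There is no serious obstacle here: the argument is textbook, and the non-identical distributions $P_i$ cause no trouble because the symmetrization is performed coordinate-wise and the Rademacher complexity computation only uses $k(Z_i, Z_i) \leq \nu^2$. The only mild care needed is to make sure the $b_i$ factors are absorbed cleanly into the Rademacher variables, which works because $b_i \in \{-1,+1\}$ are deterministic and Rademacher variables are invariant under sign flips.
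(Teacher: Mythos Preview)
Your proof is correct and follows essentially the same approach the paper indicates: the paper explicitly attributes the factor of $2$ to a symmetrization argument and the remainder to a bound on the Rademacher complexity of the RKHS, deferring details to \citet{Sriperumbudur_2012}. Your ghost-sample symmetrization followed by the reproducing-property/Cauchy--Schwarz reduction to $\sqrt{\sum_i \E[k(Z_i,Z_i)]}\le \nu\sqrt{2n}$ is exactly that argument spelled out.
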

\begin{proof}[Proof of \cref{lem:H:bound}]
For independent $Z_i,Z_i'\sim P_i$ and $Z_{n+i},Z_{n+i}'\sim Q_i$, and independent Rademacher random variables $R_i$, $i\in[n]$,
\begin{align*}
\E[H(W)] &= \E\left[\sup_{f\in\hilbert|_1}\left|\sum_{i=1}^{2n}\theta_i(f,Z_i)\right|\right] \\
&= \E\left[\sup_{f\in\hilbert|_1}\left|\frac{1}{n}\sum_{i=1}^{2n}(-1)^{\indi\{n+1\leq i \leq 2n\}}\left(f(Z_i) - \E[f(Z_i)]\right)\right|\right] \\
&\leq \E\left[\sup_{f\in\hilbert|_1}\left|\frac{1}{n}\sum_{i=1}^{2n}(-1)^{\indi\{n+1\leq i \leq 2n\}}\left(f(Z_i) - f(Z_i')\right)\right|\right] \\
&= \E\left[\sup_{f\in\hilbert|_1}\left|\frac{1}{n}\sum_{i=1}^{2n}R_i\left(f(Z_i) - f(Z_i')\right)\right|\right] \\
&\leq 2\E\left[\sup_{f\in\hilbert|_1}\left|\frac{1}{n}\sum_{i=1}^{2n}R_if(Z_i)\right|\right] \;,
\end{align*}
where the first inequality follows from Jensen's inequality.
Using RKHS identities and  the boundedness of the kernel \citep[see Appendix~D,][]{Sriperumbudur_2012} yields
\begin{align*}
\E\left[\sup_{f\in\hilbert|_1}\left|\frac{1}{n}\sum_{i=1}^{2n}R_if(Z_i)\right|\right] \leq \supk\sqrt{\frac{2n}{n^2}} = \supk\sqrt{\frac{2}{n}} \;.
\end{align*}
\end{proof}

\begin{proof}[Proof of \cref{thm:mmd:power}]
The proof proceeds in two stages.
We first derive a high-probability upper bound (over observations $Z_{1:n}$ and randomization samples $\ZRand_{1:n}^{(0)}$) for the conditional probability that a randomized test statistic is greater than the observed test statistic, conditional on $\calA_{1:n}\define (S_{1:n},Z_{1:n},\ZRand_{1:n}^{(0)})$.
We then use the probability upper bound to lower bound the test power in terms of a binomial cumulative distribution function.

Let $\Pnull_{1:n}$ be the conditional distribution of $\ZRand_{1:n}^{(0)}$ conditioned on $S_{1:n}$ under $\nullHyp$.
For $b\in[B]$, let $\ZRand_{1:n}^{(b)}$ and $\ZRand_{1:n}'^{(b)}$ denote further independent randomization samples from $\Pnull_{1:n}$.
Let $\hatDelta\define|\testStat(Z_{1:n},\ZRand^{(0)}_{1:n})|$ denote the absolute value of the observed test statistic, and let $\EH\define \E[H(W)]$.
Observe that
\begin{align*}
p(\hatDelta) &\define \bbP\left(\left.\left|\testStat(\ZRand^{(b)}_{1:n},\ZRand'^{(b)}_{1:n})\right|\geq \hatDelta \;\right|\; \calA_{1:n}\right) \\
&= \bbP\left(\left.\left|\testStat(\ZRand^{(b)}_{1:n},\ZRand'^{(b)}_{1:n})\right|\geq \hatDelta \;\right|\; \calA_{1:n},\hatDelta>\EH\right)\indi\left\{\hatDelta>\EH\right\} \\
&\quad + \bbP\left(\left.\left|\testStat(\ZRand^{(b)}_{1:n},\ZRand'^{(b)}_{1:n})\right|\geq \hatDelta \;\right|\; \calA_{1:n},\hatDelta\leq\EH\right)\indi\left\{\hatDelta\leq\EH\right\} \\
&\leq \min\left\{1, \bbP\left(\left.\left|\testStat(\ZRand^{(b)}_{1:n},\ZRand'^{(b)}_{1:n})\right|\geq \hatDelta \;\right|\; \calA_{1:n},\hatDelta>\EH\right) + \indi\left\{\hatDelta\leq\EH\right\}\right\} \;.
\end{align*}
Focusing on the first term in the upper bound, we have
\begin{align*}
p^* &\define \bbP\left(\left.\left|\testStat(\ZRand^{(b)}_{1:n},\ZRand'^{(b)}_{1:n})\right|\geq \hatDelta \;\right|\; \calA_{1:n},\hatDelta>\EH\right) \\
&= \bbP\left(\left.\left|\testStat(\ZRand^{(b)}_{1:n},\ZRand'^{(b)}_{1:n})\right| - \EH \geq \hatDelta - \EH \;\right|\; \calA_{1:n},\hatDelta>\EH\right) \\
&\leq \exp\left(-\frac{3}{2}\frac{(\hatDelta-\EH)^2}{3a + \hatDelta - \EH}\right) \\
&= \exp\left(-\frac{3}{2}\left(\hatDelta - \EH - 3a + \frac{9a^2}{3a + \hatDelta -\EH}\right)\right) \\
&\leq \min\left\{1, \exp\left(-\frac{3}{2}\hatDelta + \frac{3}{2}\EH + \frac{9}{2}a\right)\right\} \stepcounter{equation}\tag{\theequation}\label{eq:factor:trick} \\
&\leq \min\left\{1, \exp\left(-\frac{3}{2}\hatDelta + 3\supk\sqrt{\frac{2}{n}} + 72\supk^2\sqrt{\frac{2}{n^3}} + \frac{9}{n}\supk^2\right)\right\} \;.
\end{align*}
The first inequality follows from \cref{lem:talagrand} and the fact that the conditional distributions of $\ZRand_{1:n}^{(b)}$ and $\ZRand_{1:n}'^{(b)}$ are the same.
The third inequality follows from the bound $a\leq \frac{2\supk^2}{n}\left(1 + 8\sqrt{\frac{2}{n}}\right)$, which comes from \cref{lem:H:bound} and the bounds on $\varphi_i^2$ and $M$.

From \cref{lem:talagrand} conditioned on $S_{1:n}$, for $\eta>0$, we have with probability over $Z_{1:n}$ and $\ZRand_{1:n}^{(0)}$,
\stepcounter{equation}
\begin{align*}
1 - e^{-\eta} &\leq \bbP\left(\left. H(Z_{1:2n}) \leq \EH+ \sqrt{2a\eta} + \frac{2M\eta}{3} \;\right|\; S_{1:n}\right) \\
&\leq \bbP\left(\left.\left|\testStat(P_{1:n},\Pnull_{1:n}) - \testStat(Z_{1:n},\ZRand^{(0)}_{1:n})\right| \leq \EH + \sqrt{2a\eta} + \frac{2M\eta}{3}\;\right|\; S_{1:n}\right) \\
&\leq \bbP\left(\left.\testStat(P_{1:n},\Pnull_{1:n}) - \left|\testStat(Z_{1:n},\ZRand^{(0)}_{1:n})\right| \leq \EH + \sqrt{2a\eta} + \frac{2M\eta}{3}\;\right|\; S_{1:n}\right) \\
&\leq \bbP\left(\left.\EMMD - \hatDelta \leq 2\supk\sqrt{\frac{2}{n}} + \sqrt{2\eta\left(\frac{2\supk^2}{n}\left(1+8\sqrt{\frac{2}{n}}\right)\right)} + \frac{8\supk\eta}{3n}\;\right|\; S_{1:n}\right) \\
&= \bbP\left(\left.\EMMD - 2\supk\left(\sqrt{\frac{2}{n}} + \sqrt{\eta}\sqrt{\frac{1}{n}+8\sqrt{\frac{2}{n^3}}} + \frac{4\eta}{3n}\right) \leq \hatDelta\;\right|\; S_{1:n}\right) \tag{\theequation}\label{eq:deltahat:bound} \;,
\end{align*}
where the second inequality follows from \eqref{eq:H:bound}, the third inequality from the fact $\testStat(P_{1:n},\Pnull_{1:n})\geq0$, and the fourth inequality from applying the previous bounds on $\EH$, $a$, and $M$.
Applying the probabilistic bound in \eqref{eq:deltahat:bound} to the upper bound on $p^*$, we find that with probability at least $1-e^{-\eta}$ over $Z_{1:n}$ and $\ZRand_{1:n}^{(0)}$,
\begin{align*}
p^* &\leq \min\left\{1, \exp\left(-\frac{3}{2}\left(\EMMD - 2\supk\left(\sqrt{\frac{2}{n}} + \sqrt{\eta}\sqrt{\frac{1}{n}+8\sqrt{\frac{2}{n^3}}} + \frac{4\eta}{3n}\right)\right) \right.\right. \\
&\qquad\qquad \left.\left.+\; 3\supk\sqrt{\frac{2}{n}} + 72\supk^2\sqrt{\frac{2}{n^3}} + \frac{9}{n}\supk^2\right)\right\} \\
&= \min\left\{1, \exp\left(-\frac{3}{2}\EMMD + C_n(\eta)\right)\right\} \\
&=: p_\eta(\EMMD) \;,
\end{align*}
where $C_n(\eta)$ is defined as in \eqref{eq:Rn:def}.
Therefore,
\begin{align*}
p(\hatDelta) &\leq \min\left\{1, p_\eta(\EMMD) + \indi\left\{\hatDelta\leq\EH\right\}\right\} \;.
\end{align*}
By \eqref{eq:deltahat:bound} and \cref{lem:H:bound}, with the same probability,
\begin{align*}
\indi\left\{\hatDelta\leq\EH\right\} &\leq \indi\left\{\EMMD - 2\supk\left(\sqrt{\frac{2}{n}} + \sqrt{\eta}\sqrt{\frac{1}{n}+8\sqrt{\frac{2}{n^3}}} + \frac{4\eta}{3n}\right)\leq 2\supk\sqrt{\frac{2}{n}}\right\} \\
&= \indi\left\{\EMMD \leq 2\supk\left(2\sqrt{\frac{2}{n}} + \sqrt{\eta}\sqrt{\frac{1}{n}+8\sqrt{\frac{2}{n^3}}} + \frac{4\eta}{3n}\right)\right\} \;,
\end{align*}
which equals zero under assumption \eqref{eq:emmd:condition}.
Hence, $p(\hatDelta) \leq p_\eta(\EMMD)$ with probability at least $1-e^{-\eta}$.

We now argue that the test power is lower bounded by a binomial CDF using the upper bound on $p(\hatDelta)$.
Conditioned on $S_{1:n}$, $N_B\define p_B(B+1)-1$ is a Binomial$(B,p(\hatDelta))$ random variable with
\begin{align*}
\bbP(p_B\leq \sig \mid S_{1:n}) &= \bbP(N_B\leq \sig(B+1)-1 \mid S_{1:n}) \\
&= F_{B,p(\hatDelta)}(\lfloor\sig(B+1)-1\rfloor) \;,
\end{align*}
where $F_{B,p(\hatDelta)}$ is the CDF of the Binomial$(B,p(\hatDelta))$ distribution.
It is known that for $q\geq p$, a Binomial$(B,q)$-distributed random variable $X_{B,q}$ stochastically dominates a Binomial$(B,p)$-distributed random variable $X_{B,p}$ in the sense that for all $\ell\in\{0,1,\dotsc,B\}$,
\[
\bbP(X_{B,q}\geq \ell) \geq \bbP(X_{B,p}\geq \ell) \;.
\]
The inequality is reversed for the CDFs, so that the bound $p(\hatDelta)\leq p_\eta(\EMMD)$ implies
\begin{align*}
\bbP(p_B \leq \sig \mid S_{1:n}) &\geq F_{B,p_\eta(\EMMD)}(\lfloor\sig(B+1)-1\rfloor) \;.
\end{align*}
Thus, we have showed that with probability at least $1-e^{-\eta}$,
\begin{align*}
\bbP(p_B\leq \sig\mid S_{1:n}) \geq \sum_{\ell=0}^{\lfloor\sig(B+1)-1\rfloor}\binom{B}{\ell}p_\eta(\EMMD)^\ell(1-p_\eta(\EMMD))^{B-\ell} \;.
\end{align*}

\end{proof}

\subsubsection{Lower bound for high-probability rejection of alternatives}
\label{apx:lower:bound:alternative}

Using the Chernoff bound on the upper tail of deviations of a binomial random variable, we can obtain an explicit lower bound on values of $\Delta_n$ that correspond to alternatives for which the test (with high probability) rejects the null with probability at least $1-\beta $ for $\beta\in(0,1)$. 
For convenience, let $B_{\sig} \coloneq \lfloor \sig (B+1) - 1\rfloor$.

\begin{corollary} \label{cor:power:lower:bound}
    Assume the conditions of \cref{thm:mmd:power} and that $B_{\sig} \geq B p_{\eta}(\Delta_n)$.  Fix any $\beta \in (0,1)$. For each $\eta > 0$, if
    \begin{align*}
         \Delta_n \geq \frac{2}{3}\left( C_n(\eta) - \ln \frac{B_{\sig}}{B} - \ln\left(1-\sqrt{\frac{-2\ln\beta}{B_{\sig}}} \right) \right) \;,
    \end{align*}
    then for almost every sequence $(S_i)_{i\geq 1}$, conditionally on $S_{1:n}$, we have that $\bbP(p_B \leq \sig | S_{1:n}) \geq 1-\beta$ with probability over $(Z_{1:n},{Z}^{(0)}_{1:n})$ at least $1-e^{-\eta}$.
\end{corollary}

In the limit of $B,n\to\infty$, we see that if $\text{MMD}(P,P^{(0)}) \geq -\frac{2}{3}\ln \sig$, then the test will reject with probability at least $1-\beta$. 

\begin{proof}[Proof of \cref{cor:power:lower:bound}]
Let $B_{\eta}(\Delta_n) \define B p_{\eta}(\Delta_n)$.
Applying a Chernoff bound on the upper tail of deviations of a binomial random variable to $N_B\sim\mathrm{Binomial}(B,p_\eta(\EMMD))$,
\begin{align*}
\bbP(N_B\geq B_\sig \mid S_{1:n}) &= \bbP\left(\left.N_B\geq \left(\frac{B_\sig}{B_\eta(\EMMD)}\right) B_\eta(\EMMD) \;\right|\; S_{1:n}\right) \\
&\leq \exp\left\{-B_\eta(\EMMD)\left(\frac{B_\sig}{B_\eta(\EMMD)}\ln\left(\frac{B_\sig}{B_\eta(\EMMD)}\right) - \frac{B_\sig}{B_\eta(\EMMD)} + 1\right)\right\} \\
&= \exp\left(-B_\sig\ln\left(\frac{B_\sig}{B_\eta(\EMMD)}\right) + B_\sig - B_\eta(\EMMD)\right) \;.
\end{align*}
This bound is at most $\beta$ if and only if
\begin{align*}
B_\sig\ln\left(\frac{B_\sig}{B_\eta(\EMMD)}\right) - (B_\sig - B_\eta(\EMMD)) \geq -\ln\beta \;.
\end{align*}
Note that
\begin{align*}
\ln\left(\frac{B_\sig}{B_\eta(\EMMD)}\right)
= -\ln\left(1 - \frac{B_\sig - B_\eta(\EMMD)}{B_\sig}\right)
= \sum_{\ell=1}^\infty\frac{1}{\ell}\left(\frac{B_\sig - B_\eta(\EMMD)}{B_\sig}\right)^\ell \;,
\end{align*}
where the last equality follows from the Taylor expansion of $-\ln(1-x)$ for $|x|<1$, and so an equivalent condition is
\begin{align*}
B_\sig\sum_{\ell=2}^\infty\frac{1}{\ell}\left(\frac{B_\sig - B_\eta(\EMMD)}{B_\sig}\right)^\ell \geq -\ln\beta \;.
\end{align*}
The condition is satisfied if
\begin{align*}
B_\sig\frac{1}{2}\left(\frac{B_\sig - B_\eta(\EMMD)}{B_\sig}\right)^2 \geq -\ln\beta \;,
\end{align*}
or equivalently,
\begin{align*}
B_\eta(\EMMD) \leq B_\sig\left(1 - \sqrt{-\frac{2\ln\beta}{B_\sig}}\right) \;.
\end{align*}
Substituting in the definition for $B_\eta(\EMMD)$ and rearranging for $\EMMD$ yields
\begin{align*}
\EMMD \geq \frac{2}{3}\left(C_n(\eta) - \ln\left(\frac{B_\sig}{B}\right) - \ln\left(1 - \sqrt{-\frac{2\ln\beta}{B_\sig}}\right)\right) \;.
\end{align*}
    
\end{proof}

\subsection{Power lower bounds for adaptive statistics}
\label{apx:proofs:adaptive}

As with \cref{thm:mmd:power}, the conditional probability of rejecting when using the FUSE statistic can be lower bounded in terms of
\begin{align*}
    \barDelta \define \frac{1}{L} \sum_{\ell = 1}^L \sqrt{\E \left[\left. \widehat{\textrm{MMD}}^2_{v,\kparam_\ell}(Z_{1:n}, \ZRand^{(0)}_{1:n} ) \;\right|\; S_{1:n} \right]} \;.
\end{align*}
For convenience, let $C_n^{\vee}(\eta)$ denote the maximum over $\ell \in [L]$ of $C_{n,\ell}(\eta)$, which is obtained by substituting $\supk_{\ell}$ for $\supk$ into \eqref{eq:Rn:def}.
Define $\nusup \define \sup_\ell \supk_\ell$.

\begin{proposition} \label{prop:fuse:power}
    Let $\testStat$ be the FUSE statistic using $\widehat{\textrm{MMD}}^2_{v,\kparam_{\ell}}$ and exact randomization in the CRT.
    For any $\eta>0$, if
\begin{align} \label{eq:barDelta:condition}
\barDelta > 2\nusup\left(2\sqrt{\frac{2}{n}} + \sqrt{\eta}\sqrt{\frac{1}{n}+8\sqrt{\frac{2}{n^3}}} + \frac{4\eta}{3n} \right) \;,
\end{align}
then for almost every sequence $S_{1:n}$, conditionally on $S_{1:n}$, with probability at least $\max\{0,1-Le^{-\eta}\}$ over $(Z_{1:n},{Z}^{(0)}_{1:n})$,
    \begin{align*}
        \bbP(p_B \leq \sig \mid S_{1:n}) \geq F_{B,p_{\eta}(\bar{\Delta}_n)}(\lfloor \sig (B+1) - 1 \rfloor) \;,
    \end{align*}
    where $F_{B,p_{\eta}(\Delta_n^{\vee})}$ is the distribution function for the binomial distribution of $B$ trials with success probability
    \begin{align*}
        p_{\eta}(\bar{\Delta}_n) = \min\left\{ 1, L\exp\left( -\frac{3}{2}\bar{\Delta}_n + C_n^{\vee}(\eta) \right) \right\} \;.
    \end{align*}
\end{proposition}

Using this, a lower bound for $\bar{\Delta}_n$ that guarantees power $1-\beta$ with high probability, analogous to the one in \cref{cor:power:lower:bound}, can be obtained. 

For the following proof, define
\begin{align} \label{eq:supa}
a^\lor \define \sup_{\ell\in[L]} a_\ell \leq \nusup^2\left(\frac{2}{n}+16\sqrt{\frac{2}{n^3}}\right)
\end{align}
and similarly 
\begin{align} \label{eq:supH}
\EH^\lor \define \sup_{\ell\in[L]} \EHell \leq 2\nusup\sqrt{\frac{2}{n}} \;.
\end{align}

\begin{proof}[Proof of \cref{prop:fuse:power}]
Let
\begin{align*}
\testStat(Z_{1:n},Z_{1:n}') &\define \widehat{\mathrm{FUSE}}_\omega(Z_{1:n},Z_{1:n}') \\
&= \frac{1}{\omega}\log\left(\frac{1}{L}\sum_{\ell=1}^L\exp\left(\omega\widehat{\mathrm{MMD}}^2_{v,\sigma_\ell}(Z_{1:n},Z_{1:n}')\right)\right) \;,
\end{align*}
and let $\hatTestStat\define \testStat(Z_{1:n},\ZRand_{1:n}^{(0)})$ denote the observed test statistic.
Denote $\calA_{1:n}\define (S_{1:n},Z_{1:n},\ZRand_{1:n}^{(0)})$.
Observe that
\begin{align*}
p(\hatTestStat) &\define \bbP\left(\left.\testStat(\ZRand_{1:n}^{(b)},\ZRand_{1:n}'^{(b)}) \geq \hatTestStat \;\right|\; \calA_{1:n} \right) \\
&\leq \bbP\left\{\left.\frac{1}{\omega}\log\left(\sup_{\ell\in[L]}\exp\left(\omega\widehat{\mathrm{MMD}}^2_{v,\kparam_\ell}(\ZRand_{1:n}^{(b)},\ZRand_{1:n}'^{(b)})\right)\right) \geq \hatTestStat \;\right|\; \calA_{1:n} \right\} \\
&= \bbP\left(\left.\sup_{\ell\in[L]} \widehat{\mathrm{MMD}}^2_{v,\kparam_\ell}(\ZRand_{1:n}^{(b)},\ZRand_{1:n}'^{(b)}) \geq \hatTestStat \;\right|\; \calA_{1:n} \right) \\
&\leq \bbP\left(\left.\sup_{\ell\in[L]} \left|\widehat{\mathrm{MMD}}_{v,\kparam_\ell}(\ZRand_{1:n}^{(b)},\ZRand_{1:n}'^{(b)})\right| \geq \sqrt{\hatTestStat} \;\right|\; \calA_{1:n} \right) \\
&= \bbP\left(\left.\sup_{\ell\in[L]} \left|\widehat{\mathrm{MMD}}_{v,\kparam_\ell}(\ZRand_{1:n}^{(b)},\ZRand_{1:n}'^{(b)})\right| - \EH^\lor \geq \sqrt{\hatTestStat}  - \EH^\lor \;\right|\; \calA_{1:n} \right) \\
&\leq \min\left\{1, \bbP\left(\left.\sup_{\ell\in[L]} \left|\widehat{\mathrm{MMD}}_{v,\kparam_\ell}(\ZRand_{1:n}^{(b)},\ZRand_{1:n}'^{(b)})\right| - \EH^\lor \geq \sqrt{\hatTestStat} - \EH^\lor \;\right|\; \calA_{1:n}, \sqrt{\hatTestStat} > \EH^\lor \right) \right. \\
&\qquad\qquad + \indi\left\{\sqrt{\hatTestStat} \leq \EH^\lor\right\} \Bigg\} \;,
\end{align*}
where the last inequality follows from the same argument as that in \cref{thm:mmd:power}.
We first derive a high-probability upper bound for the first term.
Applying a Fr\'echet inequality (first inequality), \cref{lem:talagrand} (third inequality), the same factoring trick as in \eqref{eq:factor:trick} (fourth inequality), and Jensen's inequality (last two inequalities), we obtain
\begin{align*}
p^* &\define \bbP\left(\left.\sup_{\ell\in[L]} \left|\widehat{\mathrm{MMD}}_{v,\kparam_\ell}(\ZRand_{1:n}^{(b)},\ZRand_{1:n}'^{(b)})\right| - \EH^\lor \geq \sqrt{\hatTestStat} - \EH^\lor \;\right|\; \calA_{1:n}, \sqrt{\hatTestStat} > \EH^\lor \right) \\
&\leq \min\left\{1, \sum_{\ell=1}^L \bbP\left(\left.\left|\widehat{\mathrm{MMD}}_{v,\sigma_\ell}(\ZRand_{1:n}^{(b)},\ZRand_{1:n}'^{(b)})\right| - \EH^\lor \geq \sqrt{\hatTestStat} - \EH^\lor \;\right|\; \calA_{1:n}, \sqrt{\hatTestStat} > \EH^\lor \right) \right\} \\
&\leq \min\left\{1, \sum_{\ell=1}^L \bbP\left(\left.\left|\widehat{\mathrm{MMD}}_{v,\sigma_\ell}(\ZRand_{1:n}^{(b)},\ZRand_{1:n}'^{(b)})\right| - \EHell \geq \sqrt{\hatTestStat} - \EH^\lor \;\right|\; \calA_{1:n}, \sqrt{\hatTestStat} > \EH^\lor \right) \right\} \\
&\leq \min\left\{1, \sum_{\ell=1}^L \exp\left(-\frac{3}{2} \frac{\left(\sqrt{\hatTestStat} - \EH^\lor\right)^2}{3a_\ell + \sqrt{\hatTestStat} - \EH^\lor}\right) \right\} \\
&\leq \min\left\{1, \sum_{\ell=1}^L \exp\left(-\frac{3}{2}\sqrt{\hatTestStat} + \frac{3}{2}\EH^\lor + \frac{9}{2}a_\ell\right)\right\} \\
&\leq \min\left\{1, L\exp\left(-\frac{3}{2}\sqrt{\hatTestStat} + \frac{3}{2}\EH^\lor + \frac{9}{2}a^\lor\right) \right\} \\
&\leq \min\left\{1, L\exp\left(-\frac{3}{2}\sqrt{\frac{1}{L}\sum_{\ell=1}^L \widehat{\mathrm{MMD}}^2_{v,\kparam_\ell}(Z_{1:n},\ZRand_{1:n}^{(0)})} + \frac{3}{2}\EH^\lor + \frac{9}{2}a^\lor\right) \right\} \\
&\leq \min\left\{1, L\exp\left(-\frac{3}{2L}\sum_{\ell=1}^L \left|\widehat{\mathrm{MMD}}_{v,\kparam_\ell}(Z_{1:n},\ZRand_{1:n}^{(0)})\right| + \frac{3}{2}\EH^\lor + \frac{9}{2}a^\lor\right) \right\} \;.
\end{align*}
Applying a union bound with \eqref{eq:deltahat:bound} and the bounds on $a^\lor$ \eqref{eq:supa} and $\EH^\lor$ \eqref{eq:supH}, for any $\eta>0$, with probability at least $\max\{0,1-Le^{-\eta}\}$ over $Z_{1:n}$ and $\ZRand_{1:n}^{(0)}$,
\begin{align*}
&\exp\left(-\frac{3}{2L}\sum_{\ell=1}^L \left|\widehat{\mathrm{MMD}}_{v,\kparam_\ell}(Z_{1:n},\ZRand_{1:n}^{(0)})\right| + \frac{3}{2}\EH^\lor + \frac{9}{2}a^\lor\right) \\
&\quad\leq \exp\left\{-\frac{3}{2L}\sum_{\ell=1}^L \left(\Delta_{n,\ell} - 2\supk_\ell\left(\sqrt{\frac{2}{n}} + \sqrt{\eta}\sqrt{\frac{1}{n} + 8\sqrt{\frac{2}{n^3}}} + \frac{4\eta}{3n}\right)\right) \right. \\
&\quad\qquad + \left.\frac{3}{2}\left(2\nusup\sqrt{\frac{2}{n}}\right) + \frac{9}{2}\nusup^2\left(\frac{2}{n}+16\sqrt{\frac{2}{n^3}}\right)\right\} \\
&\quad\leq \exp\left\{-\frac{3}{2}\barDelta + \nusup\left(6\sqrt{\frac{2}{n}} + 3\sqrt{\eta}\sqrt{\frac{1}{n} + 8\sqrt{\frac{2}{n^3}}} + \frac{4\eta}{n}\right) \right. \\
&\quad\qquad \left.+ \nusup^2\left(\frac{9}{n}+72\sqrt{\frac{2}{n^3}}\right)\right\} \\
&\quad= \exp\left(-\frac{3}{2}\barDelta + C_n^\lor(\eta)\right) \;.
\end{align*}
Hence,
\begin{align*}
p^* \leq \min\left\{1, L\exp\left(-\frac{3}{2}\barDelta + C_n^\lor(\eta)\right) \right\}  =: p_\eta(\barDelta) \;.
\end{align*}

For the indicator term in the upper bound of $p(\hatTestStat)$, by \eqref{eq:deltahat:bound} and \eqref{eq:supH}, with the same probability,
\begin{align*}
\indi\left\{ \sqrt{\hatTestStat} \leq \EH^\lor \right\} &\leq \indi\left\{ \sqrt{\frac{1}{L}\sum_{\ell=1}^L\widehat{\mathrm{MMD}}_{v,\kparam_\ell}^2(Z_{1:n},\ZRand_{1:n}^{(0)})} \leq \EH^\lor \right\} \\
&\leq \indi\left\{ \frac{1}{L}\sum_{\ell=1}^L\left|\widehat{\mathrm{MMD}}_{v,\kparam_\ell}(Z_{1:n},\ZRand_{1:n}^{(0)})\right| \leq \EH^\lor \right\} \\
&\leq \indi\left\{ \frac{1}{L}\sum_{\ell=1}^L\left(\Delta_{n,\ell} - 2\supk_\ell\left(\sqrt{\frac{2}{n}} + \sqrt{\eta}\sqrt{\frac{1}{n}+8\sqrt{\frac{2}{n^3}}} + \frac{4\eta}{3n}\right) \right) \leq \EH^\lor \right\} \\
&\leq \indi\left\{ \barDelta \leq 2\nusup\left(2\sqrt{\frac{2}{n}} + \sqrt{\eta}\sqrt{\frac{1}{n}+8\sqrt{\frac{2}{n^3}}} + \frac{4\eta}{3n} \right) \right\} \;,
\end{align*}
which is zero under assumption \eqref{eq:barDelta:condition}.
Thus, $p(\hatTestStat) \leq p_\eta(\barDelta)$ with probability at least $\max\{0,1-Le^{-\eta}\}$.

The rest of the proof is analogous to the second part of \cref{thm:mmd:power}.

\end{proof}

When using the supremum kernel as the test statistic, we expect that when 
\begin{align*}
\EMMD^\lor \define \sup_{\ell \in [L]} \Delta_{n,\ell} = \sup_{\ell \in [L]} \sqrt{\E\left[ \left.\widehat{\mathrm{MMD}}_{v,\kparam_\ell}^2(Z_{1:n},\ZRand_{1:n}^{(0)}) \;\right|\; S_{1:n} \right]}
\end{align*}
is large, then the test should reject with high probability. 

\begin{proposition} \label{prop:sk:power}
    Let $T$ be the supremum kernel distance using $\widehat{\textrm{MMD}}^2_{v,\kparam_{\ell}}$ and exact randomization in the CRT.
    For any $\eta>0$, if
\begin{align} \label{eq:supDelta:condition}
\supDelta > 2\nusup\left(2\sqrt{\frac{2}{n}} + \sqrt{\eta}\sqrt{\frac{1}{n}+8\sqrt{\frac{2}{n^3}}} + \frac{4\eta}{3n} \right) \;,
\end{align}
    then for almost every sequence $S_{1:n}$, conditionally on $S_{1:n}$, with probability at least $\max\{0,1-Le^{-\eta}\}$ over $(Z_{1:n},{Z}^{(0)}_{1:n})$,
    \begin{align*}
        \bbP(p_B \leq \sig \mid S_{1:n}) \geq F_{B,p_{\eta}(\supDelta)}(\lfloor \sig (B+1) - 1 \rfloor) \;,
    \end{align*}
    where $F_{B,p_{\eta}(\supDelta)}$ is the distribution function for the binomial distribution of $B$ trials with success probability
    \begin{align*}
        p_{\eta}(\supDelta) = \min\left\{ 1, L\exp\left( -\frac{3}{2}\supDelta + C_n^{\vee}(\eta) \right) \right\} \;.
    \end{align*}
\end{proposition}

Using this, a lower bound for $\Delta_n^{\vee}$ that guarantees power $1-\beta$ with high probability, analogous to the one in \cref{cor:power:lower:bound}, can be obtained. 

\begin{proof}[Proof of \cref{prop:sk:power}]
Let
\begin{align*}
\testStat(Z_{1:n},Z_{1:n}') &\define \widehat{\mathrm{SK}}(Z_{1:n},Z_{1:n}') \\
&= \sup_{\ell\in[L]} \widehat{\mathrm{MMD}}^2_{v,\kparam_\ell}(Z_{1:n},Z_{1:n}') \\
&= \sup_{\ell\in[L]} \left|\widehat{\mathrm{MMD}}_{v,\kparam_\ell}(Z_{1:n},Z_{1:n}')\right|^2 \;,
\end{align*}
and let $\hatTestStat\define \testStat(Z_{1:n},\ZRand_{1:n}^{(0)})$ denote the observed test statistic.
Denote $\calA_{1:n}\define (S_{1:n},Z_{1:n},\ZRand_{1:n}^{(0)})$.
Observe that
\begin{align*}
p(\hatTestStat) &\define \bbP\left(\left.\testStat(\ZRand_{1:n}^{(b)},\ZRand_{1:n}'^{(b)}) \geq \hatTestStat \;\right|\; \calA_{1:n} \right) \\
&= \bbP\left(\left.\sup_{\ell\in[L]} \left|\widehat{\mathrm{MMD}}_{v,\kparam_\ell}(\ZRand_{1:n}^{(b)},\ZRand_{1:n}'^{(b)})\right| \geq \sqrt{\hatTestStat} \;\right|\; \calA_{1:n} \right) \\
&= \bbP\left(\left.\sup_{\ell\in[L]} \left|\widehat{\mathrm{MMD}}_{v,\kparam_\ell}(\ZRand_{1:n}^{(b)},\ZRand_{1:n}'^{(b)})\right| - \EH^\lor \geq \sqrt{\hatTestStat}  - \EH^\lor \;\right|\; \calA_{1:n} \right) \\
&\leq \min\left\{1, \bbP\left(\left.\sup_{\ell\in[L]} \left|\widehat{\mathrm{MMD}}_{v,\kparam_\ell}(\ZRand_{1:n}^{(b)},\ZRand_{1:n}'^{(b)})\right| - \EH^\lor \geq \sqrt{\hatTestStat}  - \EH^\lor \;\right|\; \calA_{1:n}, \sqrt{\hatTestStat} > \EH^\lor \right) \right. \\
&\qquad\qquad + \indi\left\{\sqrt{\hatTestStat} \leq \EH^\lor\right\}\Bigg\} \;,
\end{align*}
where the last inequality follows from the same argument as that in \cref{thm:mmd:power}.
We first derive a high-probability upper bound for the first term.
Applying a Fr\'{e}chet inequality (first inequality), \cref{lem:talagrand} (third inequality), and the same factoring trick as in \eqref{eq:factor:trick} (fourth inequality),
\begin{align*}
p^* &\define \bbP\left(\left.\sup_{\ell\in[L]} \left|\widehat{\mathrm{MMD}}_{v,\kparam_\ell}(\ZRand_{1:n}^{(b)},\ZRand_{1:n}'^{(b)})\right| - \EH^\lor \geq \sqrt{\hatTestStat}  - \EH^\lor \;\right|\; \calA_{1:n}, \sqrt{\hatTestStat} > \EH^\lor \right) \\
&\leq \min\left\{1, \sum_{\ell=1}^L\bbP\left(\left.\left|\widehat{\mathrm{MMD}}_{v,\kparam_\ell}(\ZRand_{1:n}^{(b)},\ZRand_{1:n}'^{(b)})\right| - \EH^\lor \geq \sqrt{\hatTestStat} - \EH^\lor \;\right|\; \calA_{1:n}, \sqrt{\hatTestStat} > \EH^\lor\right) \right\} \\
&\leq \min\left\{1, \sum_{\ell=1}^L\bbP\left(\left.\left|\widehat{\mathrm{MMD}}_{v,\kparam_\ell}(\ZRand_{1:n}^{(b)},\ZRand_{1:n}'^{(b)})\right| - \EHell \geq \sqrt{\hatTestStat} - \EH^\lor \;\right|\; \calA_{1:n}, \sqrt{\hatTestStat} > \EH^\lor\right) \right\} \\
&\leq \min\left\{1, \sum_{\ell=1}^L\exp\left(-\frac{3}{2}\frac{\left(\sqrt{\hatTestStat} - \bar{H}_{n,\ell}\right)^2}{3a_\ell + \sqrt{\hatTestStat} - \EH^\lor}\right)\right\} \\
&\leq \min\left\{1, \sum_{\ell=1}^L\exp\left(-\frac{3}{2}\sqrt{\hatTestStat} + \frac{3}{2}\EH^\lor + \frac{9}{2}a_\ell \right)\right\} \\
&\leq \min\left\{1, L\exp\left(-\frac{3}{2}\sqrt{\hatTestStat} + \frac{3}{2}\EH^\lor + \frac{9}{2}a^\lor \right)\right\} \;.
\end{align*}
Applying a union bound with \eqref{eq:deltahat:bound} and the bounds on $a^\lor$ \eqref{eq:supa} and $\EH^\lor$ \eqref{eq:supH}, for any $\eta>0$, with probability at least $\max\{0,1-Le^{-\eta}\}$ over $Z_{1:n}$ and $\ZRand_{1:n}^{(0)}$,
\begin{align*}
&\exp\left(-\frac{3}{2}\sqrt{\hatTestStat}  + \frac{3}{2}\EH^\lor + \frac{9}{2}a^\lor \right) \\
&\quad\leq \exp\left\{-\frac{3}{2}\left(\supDelta - 2\nusup\left(\sqrt{\frac{2}{n}} + \sqrt{\eta}\sqrt{\frac{1}{n}+8\sqrt{\frac{2}{n^3}}} + \frac{4\eta}{3n}\right)\right) \right.\\
&\quad\qquad\qquad \left. +\frac{3}{2}\left(2\nusup\sqrt{\frac{2}{n}}\right) + \frac{9}{2}\nusup^2\left(\frac{2}{n}+16\sqrt{\frac{2}{n^3}}\right) \right\} \\
&\quad= \exp\left(-\frac{3}{2}\supDelta + C_n^\lor(\eta)\right) \;.
\end{align*}
Hence,
\begin{align*}
p^* \leq \min\left\{1, L\exp\left(-\frac{3}{2}\supDelta + C_n^\lor(\eta)\right) \right\}  =: p_\eta(\supDelta) \;.
\end{align*}

For the indicator term in the upper bound of $p(\hatTestStat)$, by \eqref{eq:deltahat:bound} and \eqref{eq:supH}, with the same probability,
\begin{align*}
\indi\left\{\sqrt{\hatTestStat} \leq \EH^\lor \right\} &= \indi\left\{ \sup_{\ell\in[L]}\left|\widehat{\mathrm{MMD}}_{v,\sigma_\ell}(Z_{1:n},\ZRand_{1:n}^{(0)})\right| \leq \EH^\lor \right\} \\
&\leq \indi\left\{\supDelta - 2\supk_\ell\left(\sqrt{\frac{2}{n}} + \sqrt{\eta}\sqrt{\frac{1}{n}+8\sqrt{\frac{2}{n^3}}} + \frac{4\eta}{3n}\right) \leq \EH^\lor \right\} \\
&\leq \indi\left\{\supDelta \leq 2\supk_\ell\left(2\sqrt{\frac{2}{n}} + \sqrt{\eta}\sqrt{\frac{1}{n}+8\sqrt{\frac{2}{n^3}}} + \frac{4\eta}{3n}\right) \right\}  \;,
\end{align*}
which is zero under assumption \eqref{eq:supDelta:condition}.
Thus, $p(\hatTestStat) \leq p_\eta(\supDelta)$ with probability at least $\max\{0,1-Le^{-\eta}\}$.

The rest of the proof is analogous to the second part of \cref{thm:mmd:power}.

\end{proof}

\section{Further considerations for randomization tests for conditional symmetry}

\subsection{Randomization using representative inversions with non-free group actions}\label{apx:rand:inv:kern}

As demonstrated in \cref{expl:SOd:equivariance}, a representative inversion $\repInv$ may be (non-uniquely) defined even when the group action is not free in some cases.
In these cases, a test for conditional symmetry that uses the deterministic representative inversion is still a valid level-$\sig$ test under the null.
This follows from the observation that for a random action $\repInvRand|X \equdist \repInv(X)H | X$, where $H$ is a random element of the stabilizer subgroup $\grp_X$ of $X$, assuming that $Y|X$ is $\grp$-equivariant,
\begin{align*}
P_{Y|X}(\repInvRand^{-1}Y \in B\mid x) &= P_{Y|X}((\repInv(x)H)^{-1}Y \in B\mid x) \\
&= P_{Y|X}(\repInv(x)^{-1}Y \in B\mid H^{-1}x) = P_{Y|X}(\repInv(x)^{-1}Y \in B \mid x) \;.
\end{align*}
That is, the conditional distributions of $\repInvRand^{-1}Y|X$ and $\repInv(X)^{-1}Y|X$ are the same.
A test that uses the representative inversion may be more computationally efficient than one that samples from the inversion kernel.
However, tests will generally have different properties under alternatives.
We explore such differences empirically in \cref{apx:exp:rep:inv}.

\subsection{Exact conditional randomization} \label{apx:rand:cond:exact}

Here, we describe two possible assumptions for $P_{\G|\maxInv}$ that each allow for sampling from the exact conditional distribution.

\begin{assumption}\label{asp:group:orbit:indep}
$P_{\G|\maxInv} = P_\G$ for all $\maxInv(X)$.
\end{assumption}

\Cref{asp:group:orbit:indep} says that the distribution over the group action is the same across orbits.
Generating a new sample $\GRand_1,\dotsc,\GRand_n$ can then be done by simply permuting $\repInvRand_1,\dotsc,\repInvRand_n$.
The resulting randomization procedure using \cref{prp:eq:xy} is outlined in \cref{alg:rand:g:asp1}. We note that this assumption is trivially true if $\grp$ acts transitively on $\bfX$. 

\begin{algorithm}[htb]
\caption{$\GRand$ randomization procedure under \cref{asp:group:orbit:indep}}\label{alg:rand:g:asp1}
\begin{algorithmic}[1]
\State sample $\repInvRand_1, \dotsc, \repInvRand_n$ with $\repInvRand_i|X_i\sim\repInvKern$
\State sample permutation $\repInvRand_{\pi(1)}, \dotsc, \repInvRand_{\pi(n)}$
\State compute $\left((\repInvRand_{\pi(1)}\orbSel(X_1),\repInvRand_{\pi(1)}\tY_1),\dotsc,(\repInvRand_{\pi(n)}\orbSel(X_n),\repInvRand_{\pi(n)}\tY_n)\right)$
\end{algorithmic}
\end{algorithm}

\begin{assumption}\label{asp:group:equiv}
Suppose there is a group $\bfH$ acting transitively on $\bfM$, and let $\phi\colon\bfH\to\grp$ be a group homomorphism. Then $P_{\G|\maxInv}$ is $\bfH/\grp$-equivariant in the sense that for all $h\in\bfH$,
\[
P_{\G|\maxInv}(\argdot\mid h\maxInv(X)) = P_{\G|\maxInv}(\phi(h)^{-1}\argdot\mid \maxInv(X)) \;.
\]
\end{assumption}

\Cref{asp:group:equiv} says that the conditional distribution over the group action is itself equivariant with respect to the action of $\bfH$ on $\bfM$.
Assuming that $\phi$ is known, new samples can be generated by permuting $\repInvRand_1,\dotsc,\repInvRand_n$ and applying a corresponding transform under $\phi$.
The corresponding randomization procedure using \cref{prp:eq:xy} is outlined in \cref{alg:rand:g:asp2}.

\begin{algorithm}[htb]
\caption{$\GRand$ randomization procedure under \cref{asp:group:equiv}}\label{alg:rand:g:asp2}
\begin{algorithmic}[1]
\State compute $\repInvRand_1, \dotsc, \repInvRand_n$ with $\repInvRand_i|X_i\sim\repInvKern$
\State sample permutation $\repInvRand_{\pi(1)}, \dotsc, \repInvRand_{\pi(n)}$
\State sample $\tilde{h}_{\pi(1)},\dotsc,\tilde{h}_{\pi(n)}$ where $\phi(\tilde{h}_{\pi(i)})^{-1}\orbSel(X_{\pi(i)})\equas \orbSel(X_i)$
\State compute $\left((\phi(\tilde{h}_{\pi(i)})^{-1}\repInvRand_{\pi(i)}\orbSel(X_i),\phi(\tilde{h}_{\pi(i)})^{-1}\repInvRand_{\pi(i)}\tY_i)\right)_{i=1}^n$
\end{algorithmic}
\end{algorithm}

\section{Additional experiments and experimental details} \label{apx:exp}

\subsection{Using independent comparison sets}
\label{apx:exp:mvn}

For the experiment discussed in \cref{sec:exp:gauss}, \cref{fig:gauss:equiv:cov:all} shows the estimated rejection rates for the tests that use independent comparison sets. 
We observe that the results for the tests that use independent comparison sets are similar to those of tests that reuse a single comparison set.

\begin{figure}[tbp]
    \centering
    \includegraphics[trim={29 26 0 0},clip,width=\textwidth]{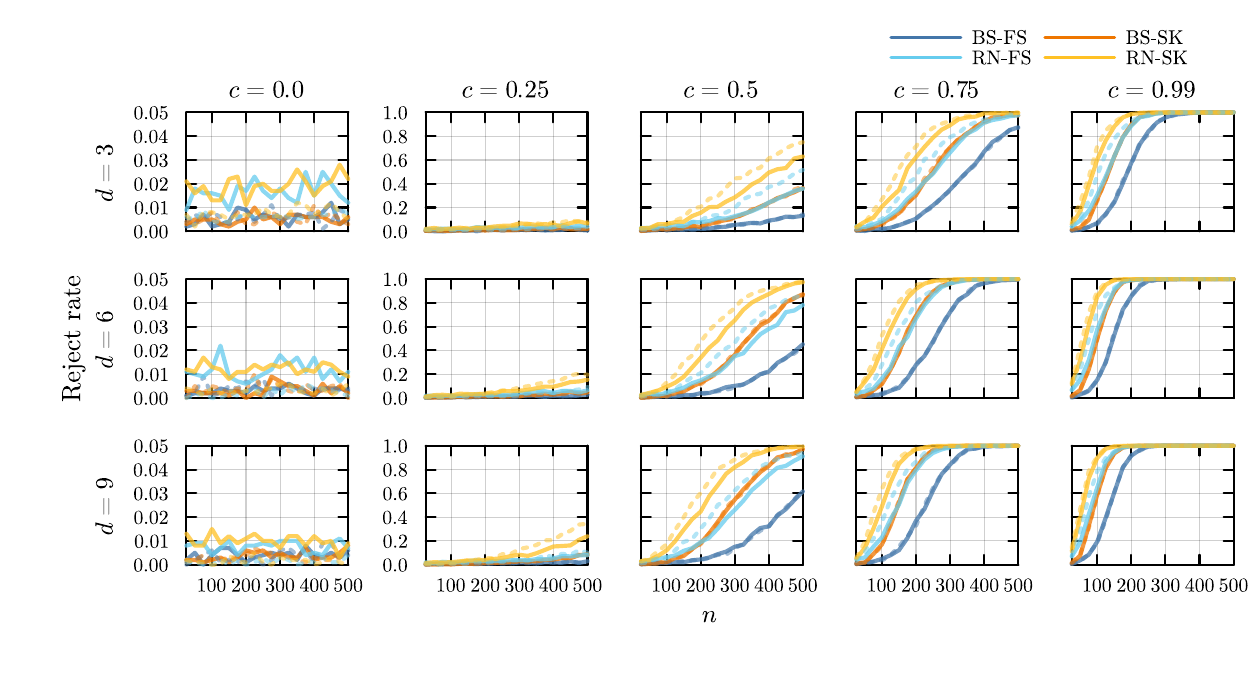}
    \caption{Rejection rates for tests for $\SO{d}$-equivariance that use independent comparison sets in the changing covariance experiment for varying values of the $Y|X$ covariance matrix off-diagonal values $c$ and sample size $n$.
    Dotted lines show the corresponding rates for tests that reuse a comparison set.  (Note the different vertical scale for $c = 0$ versus $c > 0$.)}
    \label{fig:gauss:equiv:cov:all}
\end{figure}

\subsection{Approximation via kernel density estimation}\label{apx:exp:bw}

\begin{figure}[tbp]
    \centering
    \includegraphics[trim={26 24 0 13},clip,width=\textwidth]{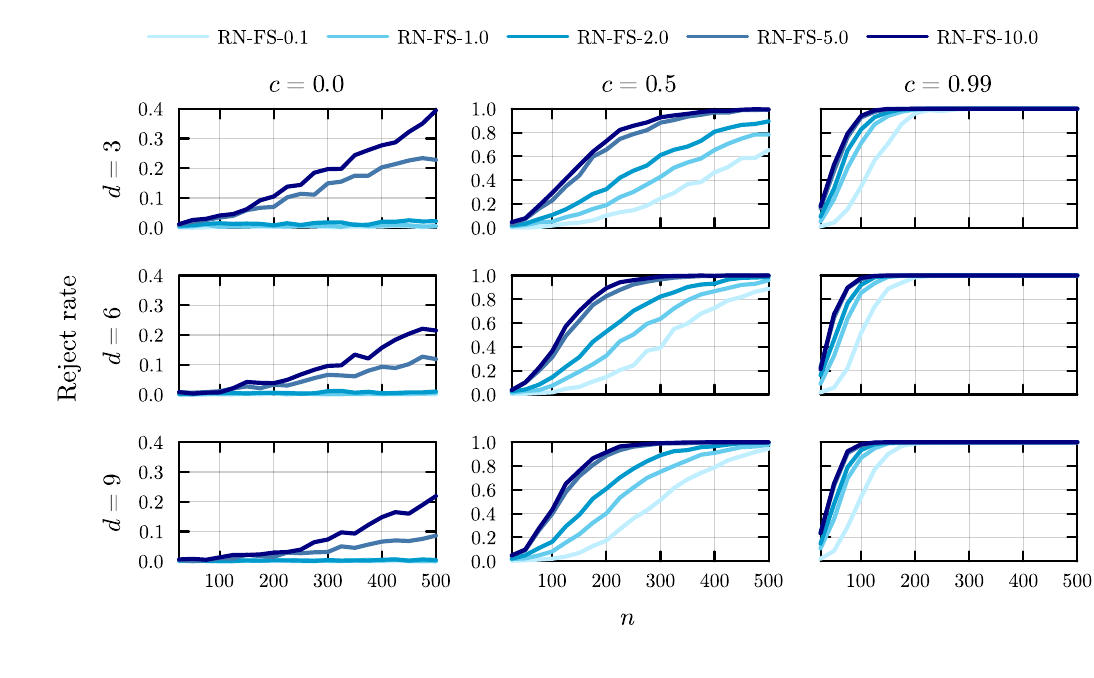}
    \vspace{1em}

    \includegraphics[trim={26 24 0 13},clip,width=\textwidth]{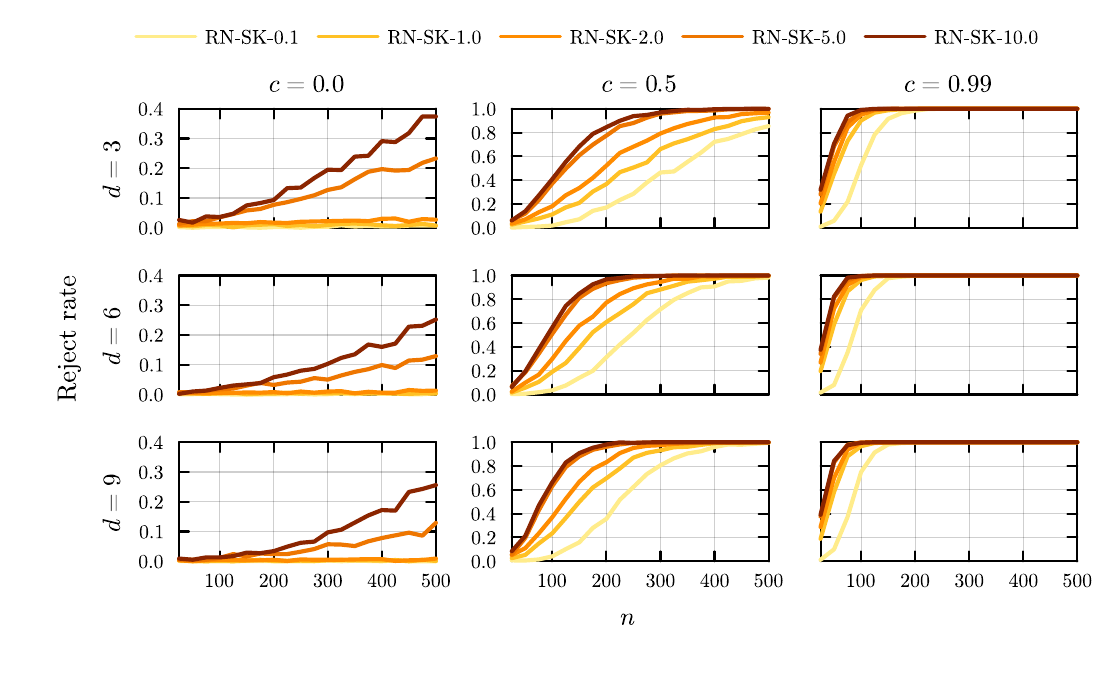}

    \caption{Rejection rates for $\SO{d}$-equivariance \nRNFS and \nRNSK tests for varying values of the $Y|X$ covariance matrix off-diagonal values $c$ and sample size $n$.
    The number suffix in the test labels indicates the constant scaling applied to the bandwidth on the maximal invariant kernel used for conditional randomization, where the test with a scaling of $1.0$ is the one used in other experiments.  (Note the different vertical scale for $c = 0$ versus $c > 0$.)}
    \label{fig:exp:bw}
\end{figure}

For the same setting as in \cref{sec:exp:gauss}, \cref{fig:exp:bw} shows how approximating the conditional randomization via kernel methods can change the size and power of the FUSE and SK tests.
Here, we apply different constant scalings to the bandwidth on the kernel used in the approximate conditional randomization method described in \cref{sec:rand:approx}, where the default bandwidth follows Silverman's rule of thumb for multivariate kernels \citep[][Ch.~4]{Silverman:1986}.
The rule says to take the bandwidth matrix $\bfH$ as a diagonal matrix with the diagonal elements being
\begin{align*}
\left(\frac{4}{d+2}\right)^{\frac{2}{d+4}}n^{\frac{-2}{d+4}}(s\hat\sigma_j)^2 \;,
\end{align*}
where $d$ is the dimension of $\maxInv$, $\hat\sigma^2$ is the sample variance in the $j$-th dimension, and $s=1$ is the default scaling.
In our setting, we observe that with less smoothing ($s\in\{0.1,1,2\}$), the test is conservative with a rejection rate less than $\sig=0.05$, while over-smoothing ($s\in\{5,10\}$) can increase power but also inflate Type I error.

\subsection{Risks in random inference}\label{apx:exp:risk}

\begin{figure}[tbp]
    \centering
    \includegraphics[trim={29 32 6 2},clip,width=\textwidth]{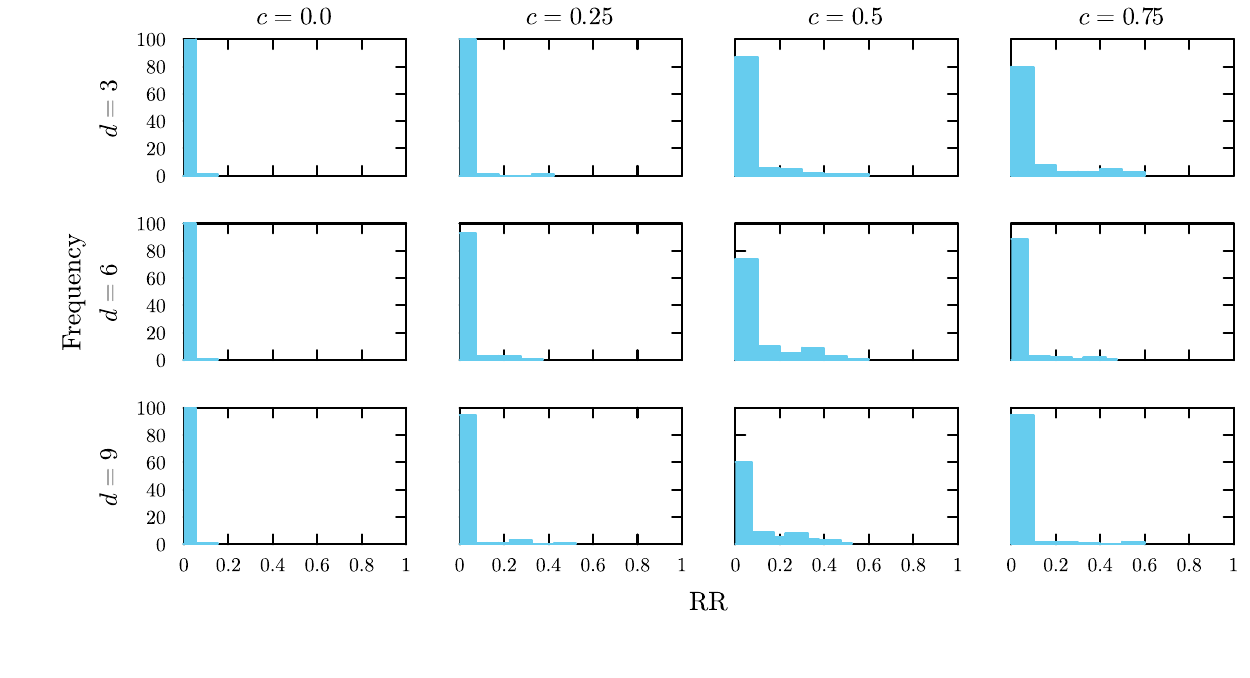}
    \vspace{1em}

    \includegraphics[trim={29 32 6 2},clip,width=\textwidth]{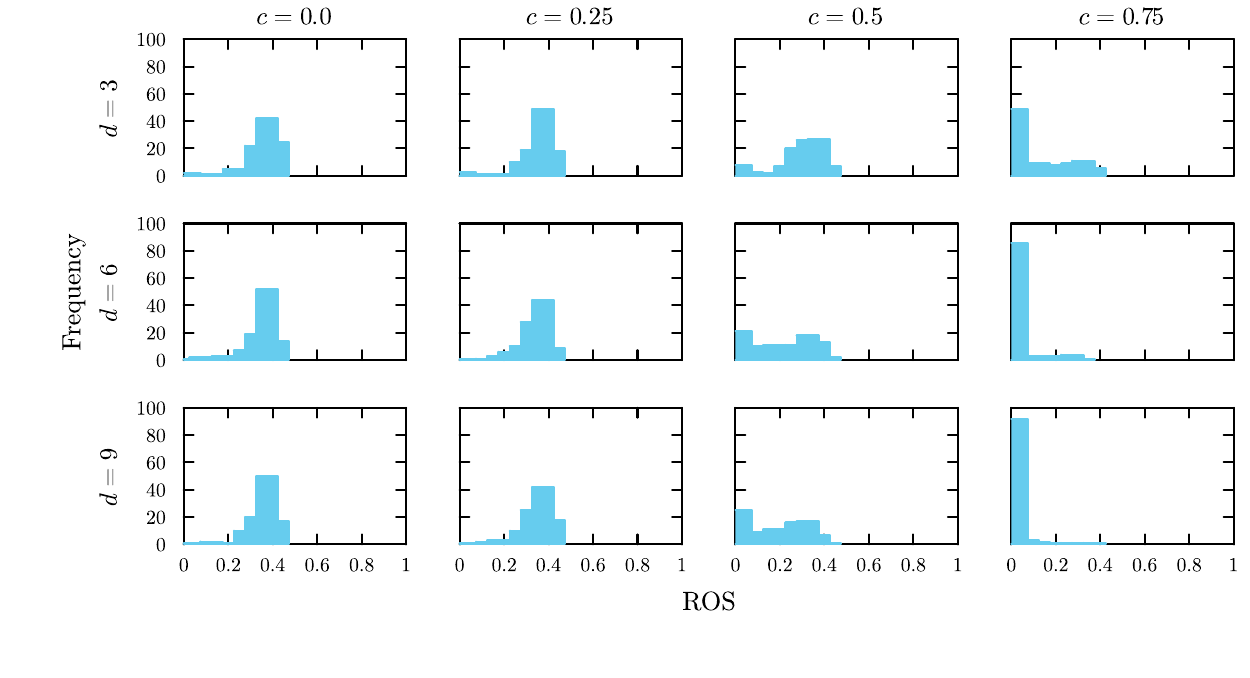}

    \caption{Histograms for the estimated resampling risk (RR) and risk of overestimated significance (ROS) for \nRNFS in testing for $\SO{d}$-equivariance with $B=100$ and $n=250$ across 100 simulations.}
    \label{fig:exp:risk}
\end{figure}

For the FUSE test for $\SO{d}$-equivariance in the setting described in \cref{sec:exp:gauss}, we estimate its resampling risk (RR) and risk of overestimated significance (ROS) \citep{stoepker2024inference}.
These quantities measure risks in inference that arise purely due to the Monte Carlo estimation of the $p$-value.
Given a fixed dataset $z_{1:n}$, let $p^*(z_{1:n})$ denote the ``true'' $p$-value (the value obtained from $p_B$ in \eqref{eqn:crt:pval} by taking the number of randomizations $B\to\infty$), and let $\test^*(z_{1:n})\in\{0,1\}$ denote the result of the test based on $p^*(z_{1:n})$.
The RR measures the probability of obtaining a test decision differing from the one based on $p^*(z_{1:n})$ as a consequence of the Monte Carlo estimation:
\begin{align*}
\mathrm{RR}(z_{1:n}) = \bbP\left\{\test_B(z_{1:n}) \neq \test^*(z_{1:n})\right\} \;.
\end{align*}
The ROS measures the probability of obtaining a $p$-value smaller than $p^*(z_{1:n})$:
\begin{align*}
\mathrm{ROS}(z_{1:n}) = \bbP\left\{p_B(z_{1:n}) < p^*(z_{1:n})\right\} \;.
\end{align*}
In this experiment, we estimate $p^*(z_{1:n})$ by running the test with $B^*=10,000$ and using that estimate to compute the above theoretical probabilities for a test with $B=100$.
\Cref{fig:exp:risk} shows histograms of the estimated RR and ROS across $N=100$ independent datasets.
Overall, RR remains low across datasets, indicating that random inference does not greatly affect the result of the test.
On the other hand, the ROS may be nontrivial at times, suggesting that the $p$-values obtained from the test may not be interpretable in terms of significance without using a larger $B$.

\subsection{Randomization using representative inversions}\label{apx:exp:rep:inv}

Consider the same experimental setup as in \cref{sec:exp:gauss} but with tests that directly use the representative inversion defined in \eqref{eq:SOd:rep:inv} as opposed to sampling actions from the inversion kernel $\repInvKern$.
The results are shown in \cref{fig:exp:rep:inv}.
We see that these tests retain their approximate level-$\sig$ size when the true conditional distribution is equivariant $(c=0)$.
We also observe that these tests have slightly different power properties under the alternative compared to those that randomize using the inversion kernel.
\Cref{fig:exp:rep:inv:p} show the corresponding $p$-value histograms for these tests.

\begin{figure}[tbp]
    \centering
    \includegraphics[trim={29 26 0 14},clip,width=\textwidth]{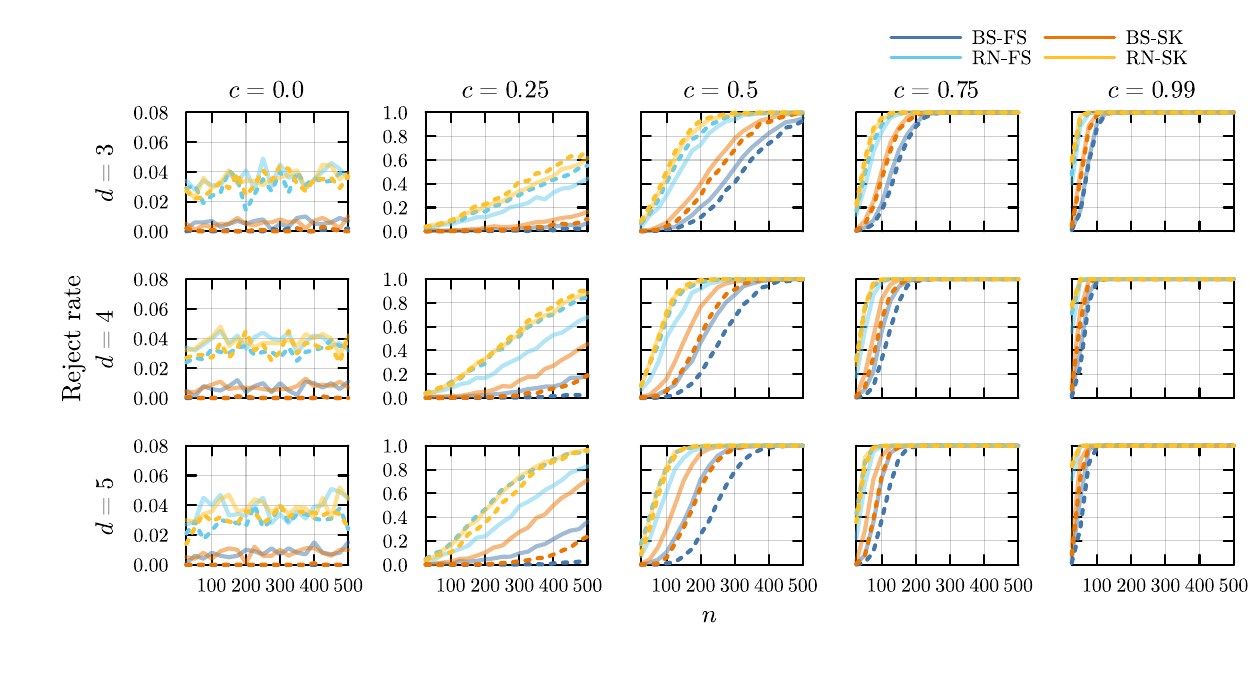}
    \caption{Rejection rates for $\SO{d}$-equivariance tests for varying values of the $Y|X$ covariance matrix off-diagonal values $c$ and sample size $n$.
    The dotted lines represent tests that randomize using a deterministic representative inversion, and the solid lines represent tests that randomize using the inversion kernel.  (Note the different vertical scale for $c = 0$ versus $c > 0$.)}
    \label{fig:exp:rep:inv}
\end{figure}

\begin{figure}[tbp]
    \centering
    \includegraphics[trim={24 22 0 8},clip,width=0.9\textwidth]{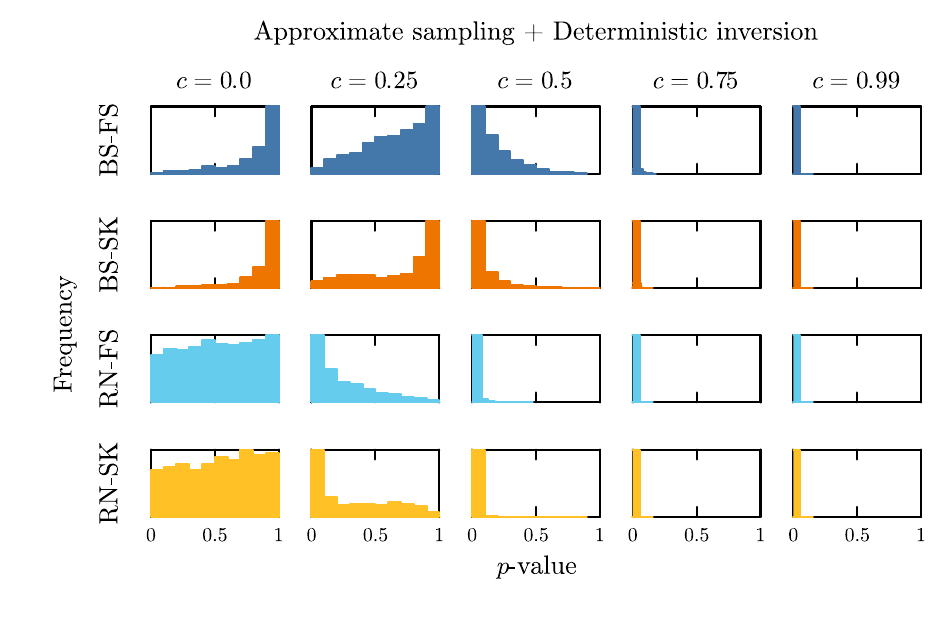}
    
    \caption{Histograms showing the distribution of $1000$ $p$-values of the $\SO{3}$-equivariant tests that use deterministic representative inversions at sample size $n=250$. Parameter $c$ is the covariance between any two dimensions.}
    \label{fig:exp:rep:inv:p}
\end{figure}

\subsection{Test runtime}\label{apx:exp:comp}

\Cref{tab:comp} shows the mean wall time (in seconds, computed across 1000 simulations) for the FUSE test for $\SO{d}$-equivariance on 100 observations generated the same way as in \cref{sec:exp:gauss}.
Observe that the runtime increases linearly with the number of randomizations $B$, and a single run of the test takes no longer than a minute even with 10,000 randomizations.
We also see that reusing the comparison set can noticeably decrease computation time, while the benefits of using a deterministic representative inversion are less significant in this setting.\footnote{The results suggest that using a deterministic inversion can sometimes \emph{slow down} the test, but this is likely due to other factors, such as variations across server allocations.}
It can also be observed that the effect of data dimension on runtime is minimal with our particular implementation for these tests.

Note that the FUSE test generally has a longer runtime compared to that of the baseline and SK tests, which we do not show here.

\begin{table}[tbp]
\centering
\caption{\label{tab:comp}Mean wall time (in seconds) across $1000$ simulations for variations of the FUSE test for $\SO{d}$-equivariance on samples of size $n=100$.}
\begin{tabular}{@{\extracolsep\fill}rccrrr@{\extracolsep\fill}}
\toprule
&&& \multicolumn{3}{@{}c@{}}{$B$} \\
\cmidrule(l){4-6}
& Reuse comparison set & Deterministic inversion & $10^2$ & $10^3$ & $10^4$ \\
\midrule
\multirow{4}{*}{$d=3$} & & & $0.32$ & $3.29$ & $29.66$ \\
\cmidrule(l){2-6}
& $\checkmark$ & & $0.24$ & $2.22$ & $22.52$ \\
\cmidrule(l){2-6}
& & $\checkmark$ & $0.30$ & $3.09$ & $28.28$ \\
\cmidrule(l){2-6}
& $\checkmark$ & $\checkmark$ & $0.25$ & $2.19$ & $24.41$ \\
\midrule
\multirow{4}{*}{$d=6$} & & & $0.32$ & $3.41$ & $34.31$ \\
\cmidrule(l){2-6}
& $\checkmark$ & & $0.24$ & $2.27$ & $25.13$ \\
\cmidrule(l){2-6}
& & $\checkmark$ & $0.32$ & $3.04$ & $32.91$ \\
\cmidrule(l){2-6}
& $\checkmark$ & $\checkmark$ & $0.23$ & $2.35$ & $21.14$ \\
\midrule
\multirow{4}{*}{$d=9$} & & & $0.35$ & $3.07$ & $33.02$ \\
\cmidrule(l){2-6}
& $\checkmark$ & & $0.25$ & $2.34$ & $24.97$ \\
\cmidrule(l){2-6}
& & $\checkmark$ & $0.32$ & $3.06$ & $31.48$ \\
\cmidrule(l){2-6}
& $\checkmark$ & $\checkmark$ & $0.24$ & $2.27$ & $25.10$ \\
\bottomrule
\end{tabular}
\end{table}

\subsection{Sensitivity to non-equivariance in the mean} \label{apx:exp:mvn:sens}

For $d=3$, we consider two other forms of non-equivariance under the action of $\SO{d}$ that could manifest in the conditional distribution.
Consider the conditional distribution
\[
Y\mid X \sim
\begin{cases}
\gaussian(X+s\onevec_d, \I_d) & \text{if }\cos^{-1}\left(\frac{X_1}{\sqrt{X_1^2+X_2^2}}\right) < q \\
\gaussian(X,\I_d) & \text{otherwise}
\end{cases}
\;.
\]
Non-equivariance under rotations about the origin is controlled by two parameters with this conditional distribution:
\begin{enumerate}
    \item A proportion $q$ of rotations that leads to a shift in the mean, determined by the 2D angle between the vector spanned by the first two entries of $X$ and the axis of the first dimension.
    The conditional distribution is equivariant at both $q=0$ and $q=1$, and most non-equivariant at $q=0.5$.
    \item The scaling factor $s$ in the shift in the mean, with the conditional distribution being equivariant only when $s=0$.
\end{enumerate}
For various combinations of $q$ and $s$ and sample sizes ranging from $25$ to $500$, the test rejection rates are shown in \cref{fig:gauss:equiv:sens}.
Our tests are sensitive to both parameters, with powers increasing as the conditional distribution moves away from equivariance in either parameter.

\begin{figure}[tbp]
    \centering
    \includegraphics[trim={29 33 0 13},clip,width=\textwidth]{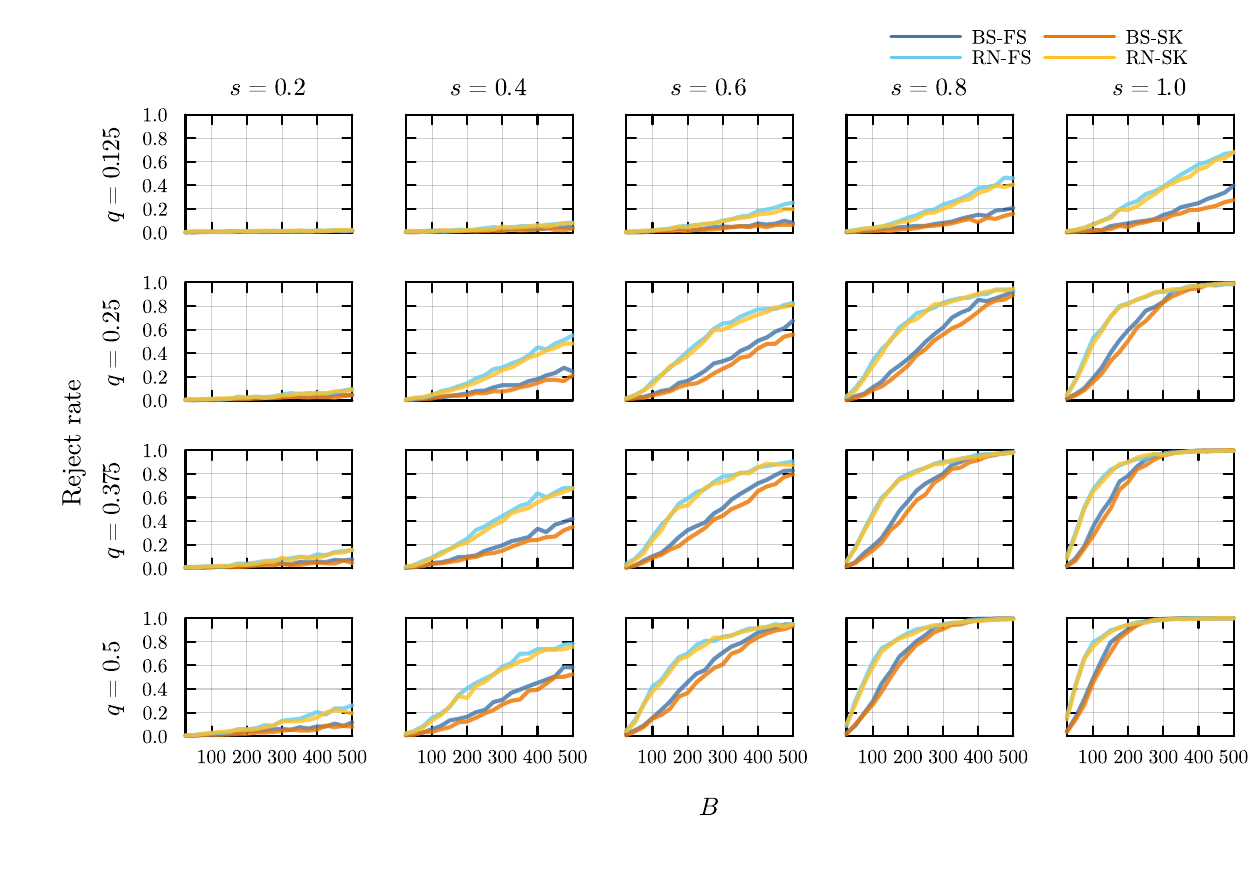}
    
    \caption{Rejection rates for tests
    in the non-equivariant mean experiment, controlled by proportion $q$ and shift $s$ parameters, for varying values of sample size $n$. The conditional distribution is not $\SO{3}$-equivariant in all settings shown.}
    \label{fig:gauss:equiv:sens}
\end{figure}

\subsection{Permutation equivariance}\label{apx:exp:synth:perm}

We consider a synthetic example where we test for equivariance with respect to the action of a different group.
Consider the same $X$ marginal distribution as in \cref{sec:exp:mvn}, but now consider the conditional distribution
\[
Y \mid X \sim
\begin{cases}
\gaussian(X + s\bfe_1, \I_d) & \text{if }X_1 > X_j \;,\; \forall j\in\{2,\dotsc,d\} \\
\gaussian(X, \I_d) & \text{otherwise}
\end{cases}
\;,
\]
where $s$ is some positive scaling factor.
In words, the conditional distribution has a shift in the mean if the first dimension of $X$ is its largest entry.
For data generated under this setup, we test for equivariance with respect to the symmetric group $\Sym{d}$---the group of $d$-permutations.
The conditional distribution is $\Sym{d}$-equivariant if and only if $s=0$.

In this setting, a convenient maximal invariant to condition on is the order statistic $\maxInv(X)=(X_{(1)},\dotsc,X_{(d)})$.
For $d\in\{3,4,5\}$ and various values of $s$, the test rejection rates are shown in \cref{fig:gauss:equiv:perm}.
The results are similar to those of the other synthetic experiments, though we note that the powers of the tests rapidly decrease with increasing dimensions in this particular setting.

\begin{figure}[tbp]
    \centering
    \includegraphics[trim={29 26 0 14},clip,width=\textwidth]{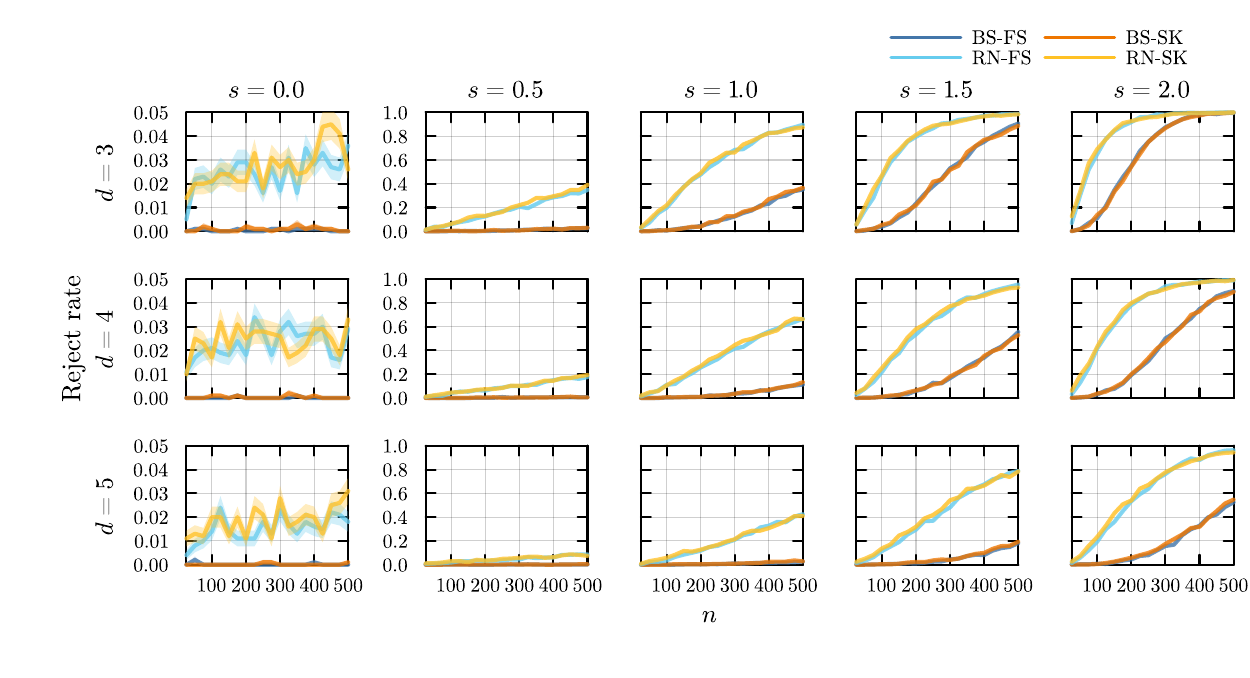}

    \caption{Rejection rates for tests
    for the permutation equivariance experiment for varying values of sample size $n$. The conditional distribution is $\Sym{d}$-equivariant only at $s=0$. (Note the different vertical scale for $s = 0$ versus $s > 0$.)}
    \label{fig:gauss:equiv:perm}
\end{figure}

\subsection{Equivariance as joint invariance in the LHC dijet data} \label{apx:exp:lhc}

\begin{figure}[tb]
    \centering
    \includegraphics[trim={12 0 0 0},clip,width=0.5\textwidth]{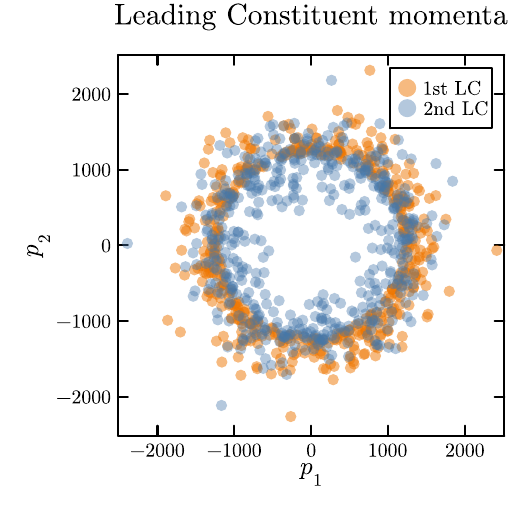}
    \caption{Transverse momentum of $500$ first and second leading constituents (LC) in the LHC dijet data.}
    \label{fig:lhc:data}
\end{figure}

\Cref{fig:lhc:data} shows the distribution of transverse momenta for a sample of $500$ first and second leading constituents from the LHC dijet data.
The momenta $X$ of the first leading constituents appear to be distributed on a ring centered at the origin, suggesting that the distribution of $X$ may be marginally invariant with respect to the action of $\SO{2}$.
As per \cref{prp:eq:joint:inv}, equivariance could then be tested via a test for joint invariance under the action of the group of paired rotations $\grp_0=\{(g_1,g_2)\in\SO{2}\times\SO{2}:g_1=g_2\}$.
We conduct a test for invariance with respect to this subgroup on samples of size $n=25$.
Furthermore, as the presence of unexpected symmetry would also be of scientific interest, we also test for invariance under the action of the full $\grp_1=\SO{2}\times\SO{2}$ group and of $\grp_2=\SO{4}$.
The results based on $B=1000$ randomizations are shown in \cref{tab:lhc}.
We see that our tests detect $\grp_0$-invariance and reject $\grp_1$- and $\grp_2$-invariance, which is again consistent with the Standard Model.

\begin{table}[tb]
\centering
\caption{\label{tab:lhc}Rejection rates (over $1000$ repetitions) for tests for joint invariance on LHC dijet data samples of size $n=25$.}
\begin{tabular}{@{\extracolsep\fill}rrrrr@{\extracolsep\fill}}
\toprule
& \multicolumn{4}{@{}c@{}}{Test} \\
\cmidrule(l){2-5}
Group to test & \nBSFS & \nBSSK & \nRNFS & \nRNSK \\
\midrule
$\text{Paired }\SO{2}\text{-rotations}$ & $0.044$ & $0.052$ & $0.050$ & $0.052$ \\
\midrule
$\text{Unpaired }\SO{2}\text{-rotations}$ & $0.481$ & $0.874$ & $0.958$ & $0.972$ \\
\midrule
$\SO{4}$ & $0.633$ & $0.933$ & $0.989$ & $0.987$ \\
\bottomrule
\end{tabular}
\end{table}

\subsection{Training an invariant MNIST classifier} \label{sec:experiments:MNIST}

As an application under a different context, we illustrate how our tests for symmetry of a conditional distribution $P_{Y|X}$ can be adapted as tests for symmetry of a function $f:\bfX\to\bfY$ by taking $P_{Y|X}=\dirac_{f(X)}$.
This means that our tests could be used as, for example, validation tools for assessing whether a prediction model has learned symmetry after training, requiring only input data and model-predicted output. 
To demonstrate this utility, consider a setting where an MNIST classifier is trained to be invariant with respect to image rotations via data augmentation \citep{Chen:2020,Huang:2022aa}.
Unlike methods that encode symmetry as inductive biases, data augmentation may not always succeed in producing a symmetric model.

The MNIST dataset is a collection of $28\times28$ pixel images of handwritten digits.
The training set consists of $20000$ images; the test set consists of $45000$ augmented images, from which we uniformly sample to use in our tests.
We train two identical LeNet convolutional neural networks for predicting the digit in the image, where one model is trained on images augmented with rotations and the other without.
Before and during training, we pause the training process after every three epochs to test the invariance properties of each model.
To generate the data used in each test, a sample of $n=100$ augmented images is sampled from the test set and fed into the model.
The data $Y$ are then taken as the digit probabilities predicted by the model, and so each observation is a vector on a simplex of dimension $d=10$ (or $d=9$ in the setting where the ``$9$'' images are removed from the data).
In our tests, we take the maximal invariant to be the unaugmented image to avoid complications with orienting images.

As augmentation ensures that the marginal distribution of the images will be invariant, we test for conditional invariance using tests for invariance via \cref{prp:eq:marg:inv}.
For the test statistic, we use the \textit{information diffusion kernel} for simplex data \citep{Lebanon:2002}, given by
\[
k(y,y') \approx (4\pi\sigma)^{-\frac{d+1}{2}}\exp\left(-\frac{1}{\sigma}\arccos^2\left(\textstyle\sum_{i=1}^d\sqrt{y_iy_i'}\right)\right) \;,
\]
where $\sigma$ is the kernel parameter.
To perform randomization in these tests, we apply random rotations to the images and then predict their label probabilities using the model trained until that point in the training process.
We repeat this procedure for $N=1000$ instances at each testing point to estimate the rejection rate of the test, which can be interpreted as a measure of how invariant the model is at that point during training.

\begin{figure}[tbp]
\centering
\includegraphics[trim={21 14 0 0},clip,width=0.65\textwidth]{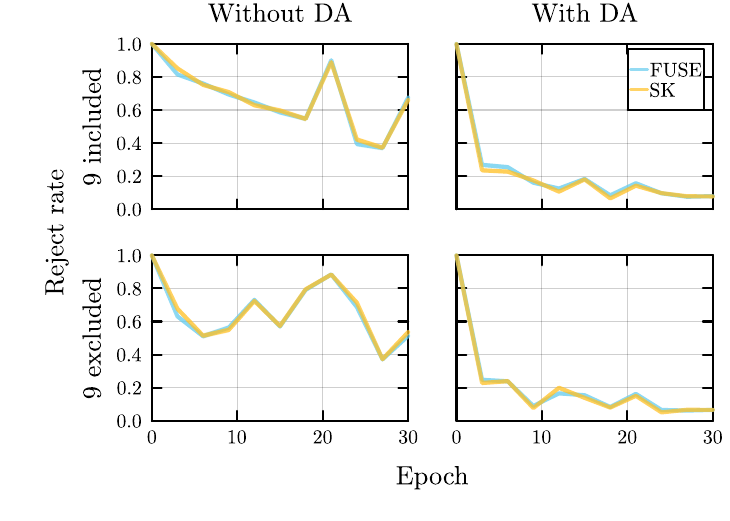}
\caption{\label{fig:mnist}Test rejection rates for image rotation-invariance of two MNIST models during training, where one is trained with data augmentation and the other without.
The top row shows results for when the models are trained on the full MNIST data, and the bottom row shows results for when images with true label $9$ are removed from the data.}
\end{figure}

The estimated rejection rates for the randomization tests are shown in \cref{fig:mnist} for two settings: one where the models are trained on the full MNIST data, and one where images with true label 9 are removed from the data.
In both settings, the rejection rate for the model trained with augmentations is approximately $\sig=0.05$ after the 30th epoch, whereas it is unclear if the rejection rate for the model trained without augmentations is converging to any value.

Here, we used the estimated test rejection rate as a measure of model invariance.
It may be tempting to determine whether symmetry has been learned at a particular time point by whether a single test rejects or not, but conducting a suite of tests at multiple points during training would be subject to multiple testing considerations.
A potential direction for future work is anytime-valid tests for conditional symmetry, which would allow for the continual evaluation of symmetry-learning based on a single test at each time point during training.
Anytime-valid tests for distributional invariance have recently appeared in the literature \citep{Koning:2023,Ramdas:2023:b,Lardy:2024}.

\end{appendix}

\begin{acks}[Acknowledgments]
The authors would like to thank Arthur Gretton and Antonin Schrab for a helpful conversation about randomization tests and MMD estimators. They are also grateful to the editor, associate editor, and two anonymous referees, whose comments and questions helped improve the paper. 
\end{acks}

\begin{funding}
This research was supported in part through computational resources and services provided by Advanced Research Computing at the University of British Columbia.
BBR and KC are supported by the Natural Sciences and Engineering Research Council of Canada (NSERC): RGPIN2020-04995, RGPAS-2020-00095, DGECR-2020-00343. AS is supported by a Canadian Statistical Sciences Institute (CANSSI) Distinguished Postdoctoral Fellowship. 
\end{funding}

\bibliographystyle{imsart-nameyear} %
\bibliography{references}       %

\end{document}